\newcommand{\multicastadp}{\textsc{MultiCastAdp}\xspace}
\newcommand{\multicastadvadp}{\textsc{MultiCastAdvAdp}\xspace}
\newtheorem{fact}[theorem]{Fact}
\algnewcommand{\LineComment}[1]{\Statex \hskip\ALG@thistlm $\blacktriangleright$ #1}
\title{Broadcasting Competitively against Adaptive Adversary in Multi-channel Radio Networks}
\author{Haimin Chen}{State Key Laboratory for Novel Software Technology, Nanjing University, China}{haimin.chen@smail.nju.edu.cn}{}{}
\author{Chaodong Zheng}{State Key Laboratory for Novel Software Technology, Nanjing University, China}{chaodong@nju.edu.cn}{}{}
\authorrunning{Haimin Chen \& Chaodong Zheng}
\keywords{Broadcast, wireless networks, resource competitive algorithms.}
\begin{document}
\nolinenumbers
\maketitle

\begin{abstract}
Broadcasting in wireless networks is vulnerable to adversarial jamming. To thwart such behavior, \emph{resource competitive analysis} is proposed. In this framework, sending, listening, or jamming on one channel for one time slot costs one unit of energy. The adversary can employ arbitrary strategy to disrupt communication, but has a limited energy budget $T$. The honest nodes, on the other hand, aim to accomplish broadcast while spending only $o(T)$. Previous work has shown, in a $C$-channels network containing $n$ nodes, for large $T$ values, each node can receive the message in $\tilde{O}(T/C)$ time, while spending only $\tilde{O}(\sqrt{T/n})$ energy. However, these multi-channel algorithms only work for certain values of $n$ and $C$, and can only tolerate an oblivious adversary.

In this work, we provide new upper and lower bounds for broadcasting in multi-channel radio networks, from the perspective of resource competitiveness. Our algorithms work for arbitrary $n,C$ values, require minimal prior knowledge, and can tolerate a powerful adaptive adversary. More specifically, in our algorithms, for large $T$ values, each node's runtime is $O(T/C)$, and each node's energy cost is $\tilde{O}(\sqrt{T/n})$. We also complement algorithmic results with lower bounds, proving both the time complexity and the energy complexity of our algorithms are optimal or near-optimal (within a poly-log factor). Our technical contributions lie in using ``epidemic broadcast'' to achieve time efficiency and resource competitiveness, and employing coupling techniques in the analysis to handle the adaptivity of the adversary. At the lower bound side, we first derive a new energy complexity lower bound for 1-to-1 communication in the multi-channel setting, and then apply simulation and reduction arguments to obtain the desired result.
\end{abstract}
\clearpage


\section{Introduction}\label{sec-intro}

Consider a synchronous, time-slotted, single-hop wireless network formed by $n$ devices (or, simply \emph{nodes}). Each node is equipped with a radio transceiver, and these nodes communicate over a shared wireless medium containing $C$ channels. In each time slot, each node can operate on one arbitrary channel, but cannot send and listen simultaneously. In this model, we study a fundamental communication problem---broadcasting---in which a designated \emph{source} node wants to disseminate a message $m$ to all other nodes in the network.

Lots of modern wireless devices are powered by battery and are able to switch between active and sleep states. Often, sending and listening occurring during active state dominate the energy expenditure, while sleeping costs much less~\cite{polastre05}. Therefore, when running an algorithm, each node's energy complexity (or, \emph{energy cost}) is often defined as the number of channel accesses~\cite{kardas13,chang17,chang18,chang20}; while time complexity is the number of slots till it halts.

The open and shared nature of wireless medium makes it vulnerable to jamming~\cite{gummadi07}.
To thwart such behavior, one reasonable restriction is to bound the total amount of jamming, as injecting interfering signals also incurs operational cost. Specifically, we assume the existence of a jamming adversary called Eve. She can jam multiple channels in each slot, and jamming one channel for one slot costs one unit of energy. Eve has an energy budget $T$ that is \emph{unknown} to the nodes, and she can employ \emph{arbitrary} strategy to disrupt communication.

This setting motivates the notation of \emph{resource competitive algorithms}~\cite{bender15,king11,king18,gilbert14,chen19,augustine20disc} which focus on optimizing relative cost. Specifically, assume for each node the cost of sending or listening on one channel for one slot is one unit of energy (while idling is free),\footnote{In reality, the cost for sending, listening, and jamming might differ, but they are often in the same order. The assumptions here are mostly for the ease of presentation, and are consistent with existing work. Moreover, allowing different actions to have different constant costs will not affect the results.} can we design broadcast algorithms that ensure each node's cost is only $o(T)$? Such results would imply Eve cannot efficiently stop nodes from accomplishing the distributed computing task in concern. Interestingly enough, the answer is positive. In particular, Gilbert et al.~\cite{gilbert14} present a resource competitive broadcast algorithm in the single-channel radio network setting: with high probability, each node receives the message and terminates within $\tilde{O}(T+n)$ slots, while spending only $\tilde{O}(\sqrt{T/n}+1)$ energy.\footnote{We say an event happens \emph{with high probability (w.h.p.)} if the event occurs with probability at least $1-1/n^c$, for some tunable constant $c\geq 1$. Moreover, we use $\tilde{O}$ to hide poly-log factors in $n$, $C$, and $T$.} This algorithm works even when Eve is adaptive and $n$ is unknown to the nodes. Later, Chen and Zheng~\cite{chen19} consider the multi-channel setting: they show that when Eve is oblivious and $C=O(n)$, having multiple channels allows a linear speedup in time complexity, while the energy cost of each node remains to be $\tilde{O}(\sqrt{T/n}+1)$.

In this paper, we develop two new multi-channel broadcast algorithms that can tolerate a stronger adaptive adversary and work for arbitrary $n,C$ values, without sacrificing time efficiency or resource competitiveness. The first algorithm---called \multicastadp---needs to know $n$; while the other more complicated one---called \multicastadvadp---does not. Both algorithms are randomized, and in the interesting case where $T$ is large compared with $n$ and $C$, each node's runtime is $O(T/C)$, while each node's energy cost is $\tilde{O}(\sqrt{T/n})$.\footnote{The primary goal of resource competitive algorithms is to optimize nodes' cost for large $T$ values, see previous work (e.g., \cite{bender15,king18}) and discussion on resource competitiveness in Section \ref{subsec-model} for more details.}

\begin{theorem}\label{thm-multicastadv}
\multicastadp guarantees the following properties w.h.p.: (a) all nodes receive the message and terminate within $O(T/C+\tau_{time})=\tilde{O}(T/C+\max\{n/C,C/n\})$ slots; and (b) the cost of each node is $O(\sqrt{T/n}\cdot\sqrt{\lg{T}}\cdot\lg{n}+\tau_{cost})=\tilde{O}(\sqrt{T/n}+C/n)$.
\begin{itemize}
	\item When $C=O(n)$, $\tau_{time}=(n/C)\cdot\lg{(n/C)}\cdot\lg^2{n}$, and $\tau_{cost}=\lg{(n/C)}\cdot\lg{n}$.
	\item When $C=\Omega(n)$, $\tau_{time}=(C/n)\cdot\lg{(C/n)}\cdot\lg^2{n}$, and $\tau_{cost}=(C/n)\cdot\lg{(C/n)}\cdot\lg{n}$.
\end{itemize}
\end{theorem}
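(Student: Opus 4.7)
My plan is to decompose the execution of \multicastadp into a short \emph{bootstrap phase} of $\tau_{time}$ slots (responsible for the additive $\tau_{time}$ and $\tau_{cost}$ terms) and a \emph{main broadcast phase} of length $\Theta(T/C)$ that carries the $T$-dependent cost. In the main phase, each node listens on a uniformly random channel when uninformed and transmits on a uniformly random channel when informed, both with a small per-slot activation probability $p$ tuned so that $p\cdot T = \tilde{O}(\sqrt{T/n})$; this will directly yield the energy bound in (b) after a Chernoff concentration step on each node's active-slot count. The bootstrap phase is designed to seed $\Omega(\lg n)$ informed nodes so that epidemic amplification can begin, and to handle the boundary case $C=\Omega(n)$ where the few initially informed nodes cannot populate all channels at once; its length and cost match $\tau_{time}$ and $\tau_{cost}$ by construction, which is why those terms appear only additively.

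For the time bound in (a) I would establish a \emph{window lemma}: in every window $W$ of $\Theta(\lg n)$ consecutive main-phase slots, either the number of informed nodes at least doubles with high probability, or Eve expends $\Omega(|W|\cdot C)$ units of her budget on jamming in $W$. Since the epidemic can double at most $\lceil \lg n\rceil$ times before saturating at $n$ and Eve's total budget is $T$, summing the two alternatives across all windows of the main phase bounds its length by $O(T/C)$ slots w.h.p. The two sub-cases $C=O(n)$ and $C=\Omega(n)$ in the statement arise from symmetric bookkeeping inside the bootstrap: when $C\le n$ enough informed nodes collectively cover a constant fraction of channels, whereas when $C>n$ each informed node must rotate through $\Theta(C/n)$ channels within a window, paying the matching extra factor in both $\tau_{time}$ and $\tau_{cost}$.

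The main obstacle is proving the window lemma against an \emph{adaptive} Eve, who may pick her jamming after observing the entire past history of nodes' channel choices. My plan is to apply a coupling argument of the kind alluded to in the abstract: compare the real execution with an auxiliary ``oblivious'' execution in which Eve must commit the slot's jamming pattern before observing that slot's random choices. Because nodes' randomness in the current slot is fresh given the history, the distribution of the current slot's successful receptions in the adaptive execution stochastically dominates a scaled copy of the oblivious one, provided Eve's spending on the slot is below a threshold $\Theta(C)$; if she exceeds the threshold, she is burning through $T$ at a rate that limits her to $O(T/C)$ such slots. Iterating this per-slot coupling across the window and taking a union bound over the $\lceil \lg n\rceil$ doublings closes the proof; the same coupling is what keeps each node's energy concentration valid despite the adaptivity.
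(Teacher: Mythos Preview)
Your proposal analyzes a different algorithm than \multicastadp. The actual algorithm (Section~\ref{sec-multicastadp}) is \emph{epoch-based}: epoch $i$ has length $R_i=a\cdot 4^i\cdot i\cdot\lg^2 n$ and working probability $p_i=(\sqrt{C/n})/2^i$, with the first epoch index $I_b$ chosen so that $p_{I_b}$ is valid. There is no ``bootstrap phase followed by a main phase of length $\Theta(T/C)$ with a single tuned $p$.'' Your tuning ``$p$ so that $p\cdot T=\tilde O(\sqrt{T/n})$'' presupposes knowledge of $T$, but $T$ is explicitly unknown to the nodes; the whole point of the geometric epoch schedule is to cope with this. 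The $\tau_{time},\tau_{cost}$ terms in the statement are not the cost of seeding $\Omega(\lg n)$ informed nodes; they are simply $R_{I_b}$ and $\Theta(R_{I_b}p_{I_b})$, the unavoidable cost of the first epoch when $T$ is tiny.

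More seriously, your plan omits the termination mechanism and its correctness, which is where most of the paper's work lies. A node halts when it has observed at least $R_ip_i/2$ silent slots in epoch $i$; the proof must show (i) if Eve jams weakly in an epoch then a node \emph{will} reach this threshold (Lemma~\ref{lemma-multicastadp-fast-halt}, giving the $O(T/C)$ runtime), and (ii) if a node reaches the threshold then all nodes are already informed (Lemma~\ref{lemma-multicastadp-halt-imply-informed}). Your window lemma only covers the doubling regime; once all nodes are informed the lemma is vacuous, yet nodes must still decide to stop, and you give no argument that they do so correctly or promptly. Point (ii) in particular is delicate: one must rule out that heavy jamming lets $u$ hear many silent slots while some $v$ stays uninformed, and this requires a second coupling argument (upper-bounding silent slots under strong jamming) distinct from the one used for epidemic spread. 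Your per-slot stochastic-domination sketch addresses only the spread direction and would not yield (ii).
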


\begin{theorem}\label{thm-multicastadvadp}
\multicastadvadp guarantees the following properties w.h.p.: (a) all nodes receive the message and terminate within $O(T/C+(nC+C^2)\cdot\lg^4(nC))=\tilde{O}(T/C+nC+C^2)$ slots; and (b) the cost of each node is $O(\sqrt{T/n}\cdot\lg^2{T}+C^2\cdot\lg^5(nCT)+(nC+C^2)\cdot\lg^4(nC))=\tilde{O}(\sqrt{T/n}+nC+C^2)$.
\end{theorem}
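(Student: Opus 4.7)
The plan is to lift \multicastadp to the unknown-$n$ setting via a doubling scheme on both the estimate $\hat{n}$ of network size and the estimate $\hat{T}$ of the adversary's budget, organized into epochs. Within each epoch, nodes would execute a variant of \multicastadp tuned to the current $(\hat{n},\hat{T})$, followed by a \emph{verification phase} in which informed nodes attempt to convince any still-uninformed node (and each other) that broadcast has finished; only after a successful verification do nodes halt.

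First, I would isolate the first \emph{good epoch} $i^*$ in which $\hat{n}_{i^*}\geq n$ and $\hat{T}_{i^*}\geq T$. Because the estimates double, $i^*=O(\log(nT))$. Within this epoch I would invoke Theorem~\ref{thm-multicastadv} (applied with parameters $\hat{n}_{i^*},\hat{T}_{i^*}$) to conclude that all nodes receive the message w.h.p.\ during the main phase, and that the subsequent verification phase terminates. For earlier epochs, I would only need to argue that nodes absorb their pre-budgeted time and energy without getting permanently stuck: crucially, an adaptive Eve cannot spend more than her total budget $T$ across the bad epochs combined, so the per-epoch wasted effort remains tightly controlled.

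Next, for the complexity bounds, I would sum geometric series over epochs $i\leq i^*$. The main-algorithm cost in epoch $i$ grows geometrically in $\hat{T}_i$ and $\hat{n}_i$, so the sum is dominated by the last epoch; this yields the $O(T/C)$ time and $\tilde{O}(\sqrt{T/n})$ energy bounds, with an extra $\log T$-style overhead coming from the $O(\log T)$ epochs (which explains the $\log^2 T$ factor on $\sqrt{T/n}$, given $\log n=O(\log T)$ in the regime of interest). The additive $(nC+C^2)\cdot\log^4(nC)$ term (in both time and energy) and the $C^2\cdot\log^5(nCT)$ term (in energy only) would come from the verification phases, which must probe all $C$ channels against an adaptive jammer: each epoch contributes $O(C^2\cdot\mathrm{polylog})$ overhead, and summing over $O(\log(nCT))$ epochs gives the stated expressions. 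Here I would invoke a \emph{coupling argument} analogous to the one used for \multicastadp, showing that on each fixed history the adaptive Eve's conditional distribution over future jamming induces no more harm than an oblivious adversary of the same remaining budget could.

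The main obstacle I anticipate is the simultaneous handling of (a) the adaptive adversary, who may concentrate her remaining budget into whichever epoch the nodes currently believe is ``good,'' and (b) the unknown-$n$ guessing, which forces verification to be robust against pathological jamming patterns that can make broadcast falsely appear to have, or not to have, succeeded. The verification phase must be designed so that both \emph{false negatives} (triggering an unnecessary new epoch) and \emph{false positives} (leaving a still-uninformed node stranded) occur with probability only $1/\mathrm{poly}(n)$, and showing that these failure probabilities survive a union bound over $\Theta(\log T)$ epochs despite history-dependence is where the coupling machinery must do the most work---this accounting is precisely what drives the extra $\log^5(nCT)$ factor in the energy bound.
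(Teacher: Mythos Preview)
Your proposal differs from the paper's approach and has a genuine gap at its center.

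The paper does \emph{not} black-box \multicastadp. Instead, \multicastadvadp lets each node's working probability $p_u$ \emph{adapt within a super-epoch}: after every phase, $p_u\gets p_u\cdot 2^{\max\{0,\eta_u-2.5\}}$, where $\eta_u$ aggregates the fraction of silent slots across three steps (the third step uses a fixed probability $p_{step3}=C^2/2^i$, independent of $n$). Termination is two-stage: a node becomes \texttt{helper} when it has heard the message $\Omega(i^3)$ times \emph{and} $\eta_u\geq 2.4$, at which point it extracts an estimate $n_u=C/((p_u)^2\cdot 2^i)$; it halts only later, once $p_u$ has grown to $64\sqrt{C/(2^in_u)}$. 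The analysis then proves that all nodes' probabilities stay within a factor of two (Lemma~\ref{lemma-multicastadvadp-pu-pv-bounded}), that $n_u$ is accurate to constant factors (Lemma~\ref{lemma-multicastadvadp-good-estimate}), and that no node halts while another is still \texttt{init} (Lemma~\ref{lemma-multicastadvadp-halt-imply-helper}). The $(nC+C^2)\lg^4(nC)$ term is simply the cost of the first $\Theta(\lg(nC))$ super-epochs; the $C^2\lg^5(nCT)$ term is the accumulated step-three cost---neither arises from a separate verification subroutine.

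Your plan of running \multicastadp with a guessed $\hat{n}$ hits exactly the failure the paper flags in Section~\ref{sec-multicastadvadp}: when $\hat{n}>n$, the working probability $\sqrt{C/\hat{n}}/2^i$ is too low, silence is abundant, and the silent-slot threshold fires long before epidemic broadcast completes---Lemma~\ref{lemma-multicastadp-halt-imply-informed} is simply false with the wrong $n$. You defer this entirely to an unspecified ``verification phase,'' but designing that phase \emph{is} the whole problem: without knowing $n$, distinguishing ``all $n$ nodes informed'' from ``not yet'' against an adaptive jammer is precisely what the adaptive-$p_u$ and \texttt{helper}-estimation machinery is built to do. You correctly name this as the main obstacle, but you do not propose a mechanism for it, and you cannot route around it by invoking Theorem~\ref{thm-multicastadv}. (Separately, \multicastadp takes no budget parameter, so ``applying Theorem~\ref{thm-multicastadv} with $\hat{T}_{i^*}$'' is not well-posed, and your attribution of the polylog additive terms does not match their actual source in the algorithm.)
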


We also complement algorithmic results with lower bounds. Specifically, the $O(T/C)$ term in runtime is optimal, as Eve can jam all $C$ channels continuously for $T/C$ slots. Meanwhile, the $\tilde{O}(\sqrt{T/n})$ term in energy cost matches lower bound up to a poly-logarithmic factor. Thus our algorithms achieve (near) optimal time and energy complexity simultaneously.

\begin{theorem}\label{thm-cost-lower-bound}
For an adaptive adversary with budget $T$, any fair multi-channel broadcast algorithm that succeeds with constant probability imposes an expected cost of $\Omega(\sqrt{T/n})$ per node. Notice, an algorithm is \emph{fair} if all participating nodes have the same expected cost; both \multicastadp and \multicastadvadp are fair.
\end{theorem}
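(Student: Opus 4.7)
The plan follows the two-step roadmap sketched in the introduction: first derive an energy lower bound for 1-to-1 communication in the $C$-channel model against an adaptive jammer of budget $T$, then lift it to broadcast via a reduction that exploits the fairness of the algorithm.

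For the 1-to-1 lower bound, I would adapt the potential-function adversary argument of King, Saia, and Young to the multi-channel model. Consider any protocol in which a designated sender $s$ delivers a bit to a designated receiver $r$ with constant success probability, and let $E_s,E_r$ be the expected energies of $s$ and $r$. Against this protocol define an adaptive adversary who, in each slot, inspects the posterior distribution (given the transcript so far) over the channels that $s$ and $r$ will use next, and jams the channel on which a successful ``rendezvous'' has maximal posterior probability. A product-form potential tracking the probability that some un-jammed rendezvous has already occurred decreases in expectation by only $O(1/T)$ per jam; a standard charging argument then forces $E_s\cdot E_r=\Omega(T)$, and AM--GM gives $E_s+E_r=\Omega(\sqrt{T})$. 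The multi-channel setting does not weaken this bound because the adversary only needs one jam per slot to saturate the single most likely channel.

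For the reduction to broadcast, given a fair broadcast algorithm $\mathcal A$ with per-node expected energy $c$, I would build a 1-to-1 protocol $\mathcal B$ from $\mathcal A$ and invoke the previous step. The construction has $s$ and $r$ simulate disjoint random subsets of the $n$ broadcast nodes---$s$ owning the source and $r$ owning a fixed target---and emits on the wireless medium only those simulated transmissions that cross the partition. Under a uniform random partition, fairness should imply that the expected radio cost at each of $s,r$ is $O(c\sqrt{n})$ rather than the naive $O(cn)$, because the sub-sum over ``cross'' slots concentrates at its mean by a coupling argument. Combining with the 1-to-1 bound yields $c\sqrt{n}=\Omega(\sqrt{T})$, i.e.\ $c=\Omega(\sqrt{T/n})$, as required.

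The main obstacle is the energy accounting in the reduction: recovering the $\sqrt{n}$-factor improvement over the naive simulation is not automatic, and its proof is the most technical step. I would expect to need either a sharpened 1-to-1 bound parameterized by the number of participating identities (whose shape is more friendly to amortization against $n$ broadcast nodes), or a more clever simulation in which only a thin random layer of the virtual nodes is actually simulated over the radio, combined with a careful coupling so that this thinning does not affect the success probability of $\mathcal A$. Either route must handle the adaptive adversary, which---unlike the oblivious case of~\cite{chen19}---rules out pre-committing the simulation's random choices in the analysis.
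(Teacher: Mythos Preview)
Your two-step outline matches the paper's, and you even state the key 1-to-1 bound in product form, $E_s\cdot E_r=\Omega(T)$. The gap is in the reduction. A uniform random partition of the $n$ broadcast nodes between $s$ and $r$ gives each side $\Theta(n)$ simulated nodes and hence expected radio cost $\Theta(cn)$; there is no mechanism by which ``only counting cross slots'' shrinks this to $O(c\sqrt{n})$, because whether a simulated transmission is ``cross'' is not known to the transmitter at the time it must commit to sending. Your two proposed fixes (a refined 1-to-1 bound parameterized by identity count, or a thinned simulation layer) are speculative, and you yourself flag this step as the unresolved obstacle.

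The paper avoids this entirely by making the simulation \emph{asymmetric}: Alice plays only the source node (so her expected cost is $c$ by fairness) while Bob, equipped with multiple transceivers, plays all $n-1$ non-source nodes simultaneously (so his expected cost is at most $(n-1)c$). Then the product bound you already have gives $c\cdot (n-1)c=\Omega(T)$, i.e.\ $c=\Omega(\sqrt{T/n})$ immediately---no $\sqrt{n}$ miracle is needed. The point is that you should exploit the product form $E_s\cdot E_r=\Omega(T)$ directly with an unbalanced split, rather than passing through the sum bound $E_s+E_r=\Omega(\sqrt{T})$ and then hunting for a $\sqrt{n}$ saving on each side. The only subtlety is that the 1-to-1 lower bound must hold even when Bob has multiple transceivers, which is why the paper proves Theorem~\ref{thm-cost-lower-bound-1-to-1} in that generality.

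A secondary concern: your 1-to-1 adversary jams only the single most likely rendezvous channel per slot. In the multi-channel model this may be too weak---several channels can each carry $\Theta(1/T)$ success mass simultaneously, and with multi-transceiver nodes this is not a pathology. The paper instead has Eve jam \emph{every} channel whose success probability exceeds $1/T$; the proof that this strategy forces $E_s\cdot E_r=\Omega(T)$ goes through a reduction of the adaptive algorithm to a convex combination of oblivious ones (via a decision-tree argument and Claim~\ref{claim-keymath-complex}), followed by a case split on whether Eve exhausts her budget, rather than the potential-function route you sketch.
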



\subsection{Related Work}

Broadcasting in radio networks is non-trivial due to collisions. Classical results often rely on variants of the \textsc{Decay} procedure~\cite{bar-yehuda92}, while recent ones (e.g., \cite{ghaffari15,czumaj17}) tend to employ more advanced techniques (e.g., network decomposition) to improve performance. Besides time complexity, energy cost has also been taken into consideration when building communication primitives (e.g., \cite{gasieniec07,chang17,chang18,chang20}), but usually without assuming the existence of a jamming adversary.

Distributed computing in jamming-prone environment has attracted a lot of attention as well. Researchers from the theory community usually pose certain restrictions on the behavior of the malicious user(s), and then develop corresponding countermeasures (e.g., \cite{awerbuch08,richa10,meier09,dolev07}). Unfortunately, these restrictions somewhat limit the adversary's strategy, and many of the proposed algorithms also require honest nodes to spend a lot of energy. In view of these, \emph{resource competitive analysis}~\cite{bender15} is proposed. This framework allows more flexibility for the adversary, hence potentially better captures reality. However, it also brings new challenges to the design and analysis of algorithms.

In 2011, King, Saia, and Young~\cite{king11} developed the first resource competitive algorithm, in the context of 1-to-1 communication. (That is, Alice wants to send a message to Bob.)
As mentioned earlier, Gilbert et al.~\cite{gilbert14} later devise a single-channel broadcast algorithm that is resource competitive against jamming. They have also proved several lower bounds showing the algorithm's energy cost is near optimal. The work that is most closely related to ours is by Chen and Zheng~\cite{chen19}, in which several multi-channel broadcast algorithms are developed. However, an important drawback of \cite{chen19} is that it only considers an oblivious adversary, while all other previous results can tolerate an adaptive (or even reactive) adversary. In this paper, we close the gap by considering an adaptive adversary, and provide similar or better results than \cite{chen19} that work for arbitrary values of $n$ and $C$. We also prove our results are (near) optimal by deriving new lower bounds.


\subsection{Additional Model Details}\label{subsec-model}

All nodes in the network start execution simultaneously and can independently generate random bits. In each slot, each node either sends a message on a channel, or listens on a channel, or remains idle. Only listening nodes get feedback regarding channel status. The adversary Eve is adaptive and potentially randomized: at the beginning of each slot, she is given all past execution history and can use these information to determine her behavior. However, she does \emph{not} know honest nodes' random bits or behavior of the current slot.

In each slot, for each listening node, the channel feedback is determined by the number of sending nodes on that channel and the behavior of Eve. Specifically, consider a slot and a channel $ch$. If no node sends on $ch$ and Eve does not jam $ch$, then nodes listening on $ch$ hear silence. If exactly one node sends a message on $ch$ and Eve does not jam $ch$, then nodes listening on $ch$ receive the unique message. Finally, if at least two nodes send on $ch$ or Eve jams $ch$, then nodes listening on $ch$ hear noise. Note that we assume nodes cannot tell whether noise is due to jamming or message collision (or both).

We adopt the following definition of resource competitive algorithms introduced in \cite{bender15}:

\begin{definition}\label{def-comp-alg}
Consider an execution $\pi$ in which nodes execute algorithm $\mathcal{A}_N$ and Eve employs strategy $\mathcal{A}_E$. Let $\texttt{cost}_u(\pi)$ denote the energy cost of node $u$, and $T(\pi)$ denote the energy cost of Eve. We say $\mathcal{A}_N$ is \emph{$(\rho,\tau)$-resource competitive} if $\max_{u}\{\texttt{cost}_u(\pi)\}\leq\rho(T(\pi))+\tau$ for any execution $\pi$.
\end{definition}

In above, $\rho$ is a function of $T$ and possibly other parameters (such as $n,C$). It captures the additional cost nodes incur due to jamming. The other function $\tau$ captures the cost of the algorithm when Eve is absent, thus $\tau$ should not depend on $T$. Most resource competitive algorithms aim to minimize $\rho$, while keeping $\tau$ reasonably small.

\subsection{Overview of Techniques}\label{subsec-technique-overview}

\subparagraph*{Fast and competitive broadcast against jamming.} Most resource competitive broadcast algorithms group slots into consecutive \emph{epochs}, and execute a jamming-resistant broadcast scheme within each epoch. In the single-channel setting, often the core idea is to broadcast ``sparsely''~\cite{king11,king18}. Consider 1-to-1 communication as an example. If both nodes send and listen in $\Theta(\sqrt{R})$ random slots in an epoch of length $R$, then by a birthday-paradox argument, successful transmission will occur with constant probability even if Eve jams constant fraction of all $R$ slots. In the multi-channel setting, ``\emph{epidemic broadcast}'' is employed~\cite{chen19}. In the simplest form of this scheme, in each time slot, each node will choose a random channel from $[C]=\{1,2,\cdots,C\}$. Then, each informed node (i.e., the node knows the message $m$) will broadcast $m$ with a constant probability, while each uninformed node will listen with a constant probability. If $C=n/2$, broadcast will complete in $O(\lg{n})$ slots w.h.p., and this claim holds even if Eve jams constant fraction of all channels for constant fraction of all slots.

In designing \multicastadp and \multicastadvadp, one key challenge is to extend the basic epidemic broadcast scheme to guarantee an optimal $O(T/C)$ runtime for arbitrary $n,C$ values, without increasing energy expenditure. To that end, we note that in the single-channel setting, \cite{gilbert14} has shown $\Theta(1/\sqrt{Rn})$ is roughly an optimal working probability (i.e., sending/listening probabilities). When $C$ channels are available, a good way to adjust the probability would be to multiply it by a factor of $\sqrt{C}$ (i.e., $\Theta(\sqrt{C/(Rn)})$). Intuitively, the reason being: if each node works on $\sqrt{C}$ random channels simultaneously in each slot, then again by a birthday-paradox argument, each pair of nodes will meet on at least one channel with at least constant probability, which effectively means the optimal single-channel analysis could be applied again. Of course nodes do not have multiple transceivers and cannot work on multiple channels simultaneously, but over a period of time, multiplying the single-channel working probability by $\sqrt{C}$ achieves similar effect. On the other hand, although the working probability of nodes is increased by a factor of $\sqrt{C}$, the energy expenditure of Eve will increase by a factor of $\Theta(C)$. As a result, compared with single-channel solutions, our algorithms has a $\Theta(C)$ speedup in time, yet the resource competitive ratio is unchanged.

\vspace{-2ex}\subparagraph*{Termination and the coupling technique.} Termination mechanism is another key integrant, it ensures nodes stop execution correctly and timely. For each node $u$, a helpful termination criterion is comparing $N_u$---the number of silent slots it observed during the current epoch---to some pre-defined threshold.
To argue the correctness of our algorithms, we often need to show $N_u$ is close to its expected value. However, this is non-trivial if Eve is adaptive.

To see this, consider an epoch containing $R$ slots. Define $G_i$ as the \emph{behavior} (i.e., channels choices and actions) of all nodes in slot $i$, and define $Q_i$---the set of channels that are \emph{not} jammed by Eve---as the \emph{jamming result} of slot $i$. Note that $N_u$ can be written as the sum of $R$ indicator random variables: $N_u=\sum_{i=1}^{R}N_{u,i}$, where $N_{u,i}=1$ iff $u$ hears silence in the $i$\textsuperscript{th} slot. $N_{u,i}$ is determined by $G_i$ and $Q_i$, but in general $Q_i$ can be arbitrary function of $\{G_1,G_2,\cdots,G_{i-1},Q_1,Q_2,\cdots,Q_{i-1}\}$. Nonetheless, in case Eve is oblivious (i.e., an offline adversary), her optimal strategy would be a fixed vector of jamming results $\langle q_1,q_2,\cdots,q_R\rangle$, thus $\{N_{u,1},N_{u,2},\cdots,N_{u,R}\}$ are mutually independent when $\{G_1,G_2,\cdots,G_R\}$ are mutually independent (this can be easily enforced by the algorithm). Therefore, if Eve is oblivious, we can directly apply powerful concentration inequalities like Chernoff bounds to show $N_u$ is close to its expectation. However, once Eves becomes adaptive, $Q_i$ could depend on $\{G_1,\cdots,G_{i-1}\}$ and above observations no longer hold: $\{N_{u,1},\cdots,N_{u,R}\}$ could be dependent!

In this paper, we leverage the \emph{coupling} technique (see, e.g., \cite{dubhashi09}) extensively to resolve the dependency issue. Specifically, for each vector of jamming results over one epoch, we create a coupled execution and relate $N_u$ to a corresponding random variable in the coupled execution. By carefully crafting the coupling, the random variable in the coupled execution can be interpreted as the sum of a set of independent random variables, allowing us to bound the probability that $N_u$ deviates a lot from its expectation. However, there is a catch in this approach: bounding the probability that $N_u$ deviates a lot from its expectation requires us to sum the failure probability over all jamming results vectors, but there are $O(2^{CR})$ such vectors! Our solution to this new problem is to group all vectors into fewer categories, such that vectors within one category have same or similar effects on the metric we concern.

\textbf{Remark.} Techniques like ``principle of deferred decision'', or the ones used in previous work, cannot resolve the dependency issue directly in our setting. See Appendix \ref{sec-app-adaptive-discussion}.

\vspace{-2ex}\subparagraph*{Lower bound.} Existing result~\cite{gilbert14} indicates broadcast in the single-channel settings requires each node spending $\Omega(\sqrt{T/n})$ energy, but could it be the case that having multiple channels also reduces the energy complexity of the problem? We show the answer is negative.

Specifically, for any multi-channel broadcast algorithm $\mathcal{A}_n$, we devise a corresponding multi-channel 1-to-1 communication algorithm $\mathcal{A}_2$ that simulates $\mathcal{A}_n$ internally. We also devise a jamming strategy $\mathcal{S}$ for disrupting $\mathcal{A}_n$ and $\mathcal{A}_2$: in each slot, for each channel, Eve jams that channel iff a successful transmission will occur on that channel with a probability more than $1/T$. $\mathcal{A}_2$ and $\mathcal{S}$ are carefully constructed so that algorithms' success probabilities and participating nodes' energy expenditure in the two executions (i.e., $(\mathcal{A}_n,\mathcal{S})$ and $(\mathcal{A}_2,\mathcal{S})$) are closely connected. Then, we derive an energy complexity lower bound for multi-channel 1-to-1 communication assuming Eve uses $\mathcal{S}$. (This result, Theorem \ref{thm-cost-lower-bound-1-to-1} in Section \ref{sec-lower-bound}, could be of independent interest and is strong in two aspects: (a) the bound holds even if the two nodes has multiple transceivers; (b) the proof uses an explicit jamming strategy $\mathcal{S}$.) At this point, an energy complexity lower bound for $\mathcal{A}_n$ can be obtained via reduction.

\section{Notations}\label{sec-notation}

Let $V$ be the set of all nodes. Since all algorithms developed in this paper proceed in epochs, consider a slot $i$ in an epoch of length $R$, where $1\leq i\leq R$. Denote $Q_i\in 2^{[C]}$ as the jamming result of the $i$\textsuperscript{th} slot: $Q_i$ is the set of channels that are \emph{not} jammed by Eve in the $i$\textsuperscript{th} slot. Denote $G_i=\langle(G_{i,v}^{ch})_{v\in V},(G_{i,v}^{act})_{v\in V}\rangle$ as the behavior (i.e., channel choices and actions) of the $n$ nodes in the $i$\textsuperscript{th} slot: $G_i\in\Omega=[C]^n\times\{\text{send},\text{listen},\text{idle}\}^n$.\footnote{There is a technical subtlety worth clarifying. The ``behavior'' here does not care about the exact content to be broadcast if some node(s) choose to send message(s) in a slot. That is, for each slot, the ``behavior'' here is \emph{not} some element in $[C]^n\times(M\cup\{\text{listen},\text{idle}\})^n$, where $M$ is the set of all possible messages. This is for the ease of presentation and will not affect the correctness of our results.\label{footnote-behavior}} We use $F_i$ to denote the randomness Eve used in the $i$\textsuperscript{th} slot. Since Eve is adaptive, it is easy to see $Q_i$ is determined by $\bm{F}_{\leq i}=(F_1,\cdots,F_i)$ and $\bm{G}_{<i}=(G_1,\cdots,G_{i-1})$.
Lastly, define $\bm{Q}_{\leq i}=(Q_1,\cdots,Q_i)$.

To quantify the severity of jamming from Eve, for a given slot, we use $\mathcal{E}(>x)$ (respectively, $\mathcal{E}(\geq x)$, $\mathcal{E}(<x)$, $\mathcal{E}(\leq x)$) to denote that in a slot, more than (respectively, at least, less than, at most) $x$ fraction of the $C$ channels are \emph{not} jammed by Eve. In the following, we use $\mathcal{E}(\cdot x)$ to represent one of the above four forms. (I.e., ``$\cdot$'' denotes ``$>$'', ``$\geq$'', ``$<$'', or ``$\leq$''.)

For an epoch, we use $\mathcal{E}^{(>y)}(\cdot x)$ (respectively, $\mathcal{E}^{(\geq y)}(\cdot x)$, $\mathcal{E}^{(<y)}(\cdot x)$, $\mathcal{E}^{(\leq y)}(\cdot x)$) to denote the event that for more than (respectively, at least, less than, at most) $y$ fraction of the $R$ slots, $\mathcal{E}(\cdot x)$ happen. For example, $\mathcal{E}^{(>0.1)}(>0.2)$ means in an epoch, for more than $0.1$ fraction of all slots, Eve leaves more than $0.2$ fraction of all channels unjammed.

Define negation operation in the following manner: $(\overline{>x})=(\leq x)$ and vice versa; $(\overline{<x})=(\geq x)$ and vice versa. Further define complement operation in the following manner: $\complement(>x)=(<1-x)$ and vice versa; $\complement(\geq x)=(\leq 1-x)$ and vice versa. It is easy to verify $\mathcal{E}^{(\cdot y)}(\cdot x)=\mathcal{E}^{(\complement(\cdot y))}(\overline{\cdot x})$ and $\overline{\mathcal{E}^{(\cdot y)}(\cdot x)}=\mathcal{E}^{(\overline{\cdot y})}(\cdot x)$. Therefore:

\vspace{-1ex}$$\overline{\mathcal{E}^{(\geq y)}(\geq x)} = \overline{\mathcal{E}^{(\leq 1-y)}(\overline{\geq x})} = \mathcal{E}^{(>1-y)}(\overline{\geq x}) = \mathcal{E}^{(>1-y)}(<x)$$

Again, as a simple example, the above equality implies ``if in an epoch, it is not the case that in at least $0.1$ fraction of all slots Eve leaves at least $0.2$ fraction of all channels unjammed, then it must be the case that in more than $0.9$ fraction of all slots, Eve leaves less than $0.2$ fraction of all channels unjammed; and vice versa''. (I.e., $\overline{\mathcal{E}^{(\geq 0.1)}(\geq 0.2)}= \mathcal{E}^{(>0.9)}(<0.2)$.)

\section{The \multicastadp Algorithm}\label{sec-multicastadp}

Each node $u$ maintains a Boolean variable $M_u$ to indicate whether it knows the message $m$ (in which case $M_u=true$ and $u$ is informed) or not (in which case $M_u$ is $false$ and $u$ is uninformed). Initially, only the source node sets $M_u=true$. The algorithm proceeds in epochs and the $i$\textsuperscript{th} epoch contains $R_i=a\cdot 4^i\cdot i\cdot\lg^2{n}$ slots, where $a$ is some large constant. In each slot in epoch $i$, for each active node $u$, it will hop to a uniformly chosen random channel. Then, $u$ will choose to broadcast or listen each with probability $p_i=(\sqrt{C/n})/2^i$. If $u$ decides to broadcast and $M_u=true$, it sends $m$; otherwise, $u$ sends a special beacon message $\pm$. On the other hand, if in a slot $u$ decides to listen, it will record the channel feedback. Finally, by the end of an epoch $i$, for a node $u$, if among the slots it listened within this epoch, at least $(p_iR_i)/2$ are silent slots, then $u$ will halt. One point worth noting is, the first epoch number is not necessarily one; instead, it is chosen as a sufficiently large integer to ensure $p_i\leq 1/2$ and $p_i\leq C/(4n)$. Hence, the first epoch number is $I_b=2+\lceil\max\{\lg{(\sqrt{n/C})},\lg{(\sqrt{C/n})}\}\rceil$. See Figure \ref{fig-alg-multicastadp} in Appendix \ref{sec-app-code} for the pseudocode of \multicastadp.

\section{Analysis of \multicastadp}\label{sec-multicastadp-analysis}

\subparagraph*{Effectiveness of epidemic broadcast.} The first technical lemma states if in an epoch jamming from Eve is not strong and every node is active, then all nodes will be informed by the end of the epoch.

\begin{lemma}\label{lemma-multicastadp-fast-bcst}
Assume all nodes are active during epoch $i$. By the end of epoch, with probability at most $n^{-\Theta(i)}$, the following two events happen simultaneously: (a) $\mathcal{E}_{\geq}$ occurs in the epoch; and (b) some node is still uninformed. Here, $\mathcal{E}_{\geq}$ is event $\mathcal{E}^{\geq 0.1}(\geq 0.1)$, defined in Section \ref{sec-notation}.
\end{lemma}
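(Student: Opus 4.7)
My plan is to track the size of the informed set $I$ across the $R_i$ slots of epoch $i$ via a classical epidemic-broadcast doubling/shrinking argument, and then to overlay a coupling construction to handle the adaptivity of Eve. Throughout, I condition on $\mathcal{E}_{\geq}$, i.e.\ on the existence of at least $0.1R_i$ ``good'' slots (slots in which at least $0.1C$ channels are unjammed), and aim to show that the probability that some node is still uninformed at the end of the epoch is at most $n^{-\Theta(i)}$.

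First I would establish a per-good-slot progress bound. Fix a good slot and condition on the partition $(I,U)$ at its start and on the set of unjammed channels. For a fixed $u\in U$, the event that $u$ becomes informed has probability at least $c_1\cdot |I|/(n\cdot 4^i)$ for some absolute constant $c_1>0$. The calculation is standard: $u$ listens with probability $p_i$, lands on an unjammed channel with probability $\ge 0.1$, and on that channel---using $np_i/C\le 1/4$ guaranteed by the choice of $I_b$---the probability that exactly one other node transmits and that node is informed is $\Theta(|I|p_i/C)$ (this is the Poisson-style step). Plugging in $p_i^2 = C/(n\cdot 4^i)$ gives the claimed bound. Next, I would split the epoch into a \emph{doubling} phase ($|I|\le n/2$), where the expected gain per good slot is $\Omega(|I|/4^i)$, so $O(4^i\lg n)$ good slots suffice to push $|I|$ above $n/2$, and a \emph{shrinking} phase ($|U|\le n/2$), where each $u\in U$ independently becomes informed with probability $\Omega(1/4^i)$ per good slot, so another $O(4^i\lg n)$ good slots drive $|U|$ to $0$. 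Since $0.1R_i = \Theta(a\cdot 4^i\cdot i\cdot\lg^2 n)$, there is an $\Omega(i\lg n)$ multiplicative slack in the number of good slots; this is what we will convert into the $n^{-\Theta(i)}$ failure probability via concentration, provided $a$ is taken large enough.

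The delicate step is converting the above sketch, which implicitly assumes independence across slots, into one that tolerates an adaptive Eve. If Eve were oblivious---i.e.\ her jamming vector $\bm q=(q_1,\dots,q_{R_i})$ were fixed in advance---then given $\bm q$ the per-slot progress indicators would be mutually independent (since $\bm{G}_{\leq R_i}$ is independent across slots by the algorithm), and a standard Chernoff bound in the doubling phase together with a union bound over uninformed nodes in the shrinking phase would yield failure probability $\exp(-\Omega(a\cdot i\cdot\lg^2 n))\le n^{-\Theta(i)}$. For adaptive Eve, $Q_t$ is a function of $(\bm{G}_{<t},\bm{F}_{\le t})$, so these indicators can be arbitrarily correlated. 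Following the recipe sketched in Section~\ref{subsec-technique-overview}, I would (i) group jamming-result vectors $\bm q$ into categories sharing the same coarse ``good-slot pattern'' (the set of good slots in $[R_i]$, plus a coarse quantisation of the unjammed count in each slot); (ii) for each category, construct a coupled \emph{offline} execution with Eve's jamming fixed to a representative of the category, and argue that the per-slot progress indicator in the coupled execution stochastically lower-bounds the corresponding indicator in the original execution whenever the histories $\bm{G}_{<t}$ agree; (iii) apply Chernoff within each coupled execution to get failure probability $\exp(-\Omega(a\cdot i\cdot\lg^2 n))$; and (iv) union-bound over categories. The $2^{O(R_i)}$ count of categories is absorbed by the exponent provided $a$ is chosen sufficiently large relative to the grouping constant.

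The two substantive obstacles I expect are inside step (iii) of the coupling: designing the coupling so that (a)~the per-good-slot progress bound from the first part really does transfer to the offline surrogate slot by slot---i.e.\ the lower bound $\Omega(|I|/(n\cdot 4^i))$ survives the coupling even though $|I|$ is itself a history-dependent random variable---and (b)~the categorisation of jamming vectors is coarse enough that $\log(\#\text{categories}) = O(R_i)$ with a small enough hidden constant to be swallowed by $a\cdot i\cdot\lg^2 n$, yet fine enough that a single representative per category still lower-bounds every member's effect on the progress indicator. The epidemic-broadcast growth analysis is essentially routine; the real work of the lemma sits in this adaptive-to-oblivious reduction.
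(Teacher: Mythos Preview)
Your epidemic doubling/shrinking analysis in the oblivious case is fine and matches the paper's $\mathcal{R}_1$/$\mathcal{R}_2$ argument almost verbatim. The gap is in your adaptive-to-oblivious reduction, specifically in step~(iv). You propose to union-bound over categories of jamming vectors and write that ``the $2^{O(R_i)}$ count of categories is absorbed by the exponent provided $a$ is chosen sufficiently large.'' This cannot work: the per-category failure probability you derive is $\exp(-\Theta(a\cdot i\cdot\lg^2 n))$, because the per-good-slot progress probability is $\Theta(p_i^2 n/C)=\Theta(4^{-i})$ and there are $\Theta(R_i)=\Theta(a\cdot 4^i\cdot i\cdot\lg^2 n)$ good slots, so the exponent is $R_i\cdot 4^{-i}=\Theta(a\cdot i\cdot\lg^2 n)$. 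But even the crudest categorisation (just the indicator of which slots are good) already has $2^{R_i}$ categories, and $R_i/(a\cdot i\cdot\lg^2 n)=4^i$, which is unbounded in $i$ and cannot be killed by any constant $a$. Your own obstacle~(b) in the final paragraph is not a technical nuisance but an actual impossibility for this route.

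The paper sidesteps the union bound entirely. The key device is a channel-permutation coupling: for each slot with jamming result $q$, instead of letting nodes' channel choices land where they fall, one first draws the choices from a fresh chunk and then permutes them via $\pi_q$ so that the unjammed channels are always mapped to $\{1,\dots,|q|\}$. Because channel choices are uniform, this does not change the distribution of $\bm G$, but it has the effect that every weakly-jammed slot in the real execution $\beta$ looks, from the nodes' perspective, exactly like a slot in which the fixed block $\{1,\dots,x_1C\}$ is unjammed. One then compares $\beta$ to a \emph{single} oblivious process $\gamma$ in which the first $y_1R$ slots have exactly that fixed block unjammed and the rest are fully jammed; the event $\mathcal{E}_\geq$ guarantees $\beta$ has at least as many such slots as $\gamma$, and a monotonicity argument ($\pi(Q'_i)\subseteq Q_i$ implies the informed set in $\beta$ dominates that in $\beta'$, which equals that in $\gamma$) finishes the reduction. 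There is no sum over jamming patterns at all---just one extremal oblivious execution to analyse.
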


Proof of this lemma highlights how coupling can help us handle the adaptivity of Eve. To construct the coupling, we first specify how nodes' behavior is generated. Fix an epoch and let constant $x_1=0.1$, and $y_1=0.1$. Imagine two sufficiently long bit strings $\bm{T}_{high}$ and $\bm{T}_{low}$, in which each bit is generated independently and uniformly at random. Divide $\bm{T}_{high}$ and $\bm{T}_{low}$ into consecutive \emph{chunks} of equal size, such that each chunk provides enough random bits for $n$ nodes to determine their behavior in a slot. More formally, $\bm{T}_{high}=(T_{hi}^{(1)},T_{hi}^{(2)},\cdots,T_{hi}^{(R)})$ and $\bm{T}_{low}=(T_{lo}^{(1)},T_{lo}^{(2)},\cdots,T_{lo}^{(R)})$, where each $T_{hi}^{(*)}$ or $T_{lo}^{(*)}$ is a chunk. Next, we introduce three processes that are used during the coupling: $\beta$, $\beta'$, and $\gamma$.

We begin with $\beta$, which is an execution of \multicastadp with adversary Eve. The tricky part about $\beta$ is: in the $i$\textsuperscript{th} slot, nodes' behavior $G_i$ is \emph{not} determined by $T_{hi}^{(i)}$ or $T_{lo}^{(i)}$ directly. Instead, the chunk is chosen in a more complicated way. Specifically, at the beginning of slot $i$, Eve first computes its jamming result $Q_i$ (i.e. the set of unjammed channels) based on $\bm{F}_{\leq i}$ and $\bm{G}_{<i}$. If $|Q_i|\geq x_1C$ and the number of previously used chunks from $\bm{T}_{high}$ is less than $y_1R$, then we pick the next unused chunk from $\bm{T}_{high}$; otherwise, we pick the next unused chunk from $\bm{T}_{low}$. Assume $T^{(j)}$ is the chosen chunk, and it computes to nodes' behavior $\langle(\hat{G}^{ch}_{v})_{v\in V},(\hat{G}^{act}_{v})_{v\in V}\rangle$. Still, we do not use $\langle(\hat{G}^{ch}_{v})_{v\in V},(\hat{G}^{act}_{v})_{v\in V}\rangle$ as nodes' behavior. Instead, we permute the channel choices according to the jamming result. Specifically, for each $q\in 2^{[C]}$, define permutation $\pi_q$ on $[C]$ as follows: for $1\leq k\leq |q|$, $\pi_q(k)$ is the $k$\textsuperscript{th} smallest element in $q$; and for $|q|+1\leq k\leq C$, $\pi_q(k)$ is the $(k-|q|)$\textsuperscript{th} smallest element in $[C]\backslash q$. (For example, if $C=5$ and $q=\{2,4\}$, then $\pi_q$ permutes $\langle 1,2,3,4,5\rangle$ to $\langle 2,4,1,3,5\rangle$.) Further define bijection $\Psi_q:\Omega \to \Omega$ using $\pi_{q}$:\label{def-psi_q-pi_q}

\vspace{-1ex}$$\Psi_q\left(\left\langle\left(\hat{G}^{ch}_{v}\right)_{v\in V},\left(\hat{G}^{act}_{v}\right)_{v\in V}\right\rangle\right)=\left\langle\left(\pi_q\left(\hat{G}^{ch}_{v}\right)\right)_{v\in V},\left(\hat{G}^{act}_{v}\right)_{v\in V}\right\rangle$$

\noindent Now, we use $\langle(\pi_q(\hat{G}^{ch}_{v}))_{v\in V},(\hat{G}^{act}_{v})_{v\in V}\rangle$ as nodes' behavior $G_i$ in slot $i$. Formally, let $K(\bm{Q}_{\leq i})=\sum_{j=1}^{i}{\mathbb{I}[|Q_j|\geq x_1C]}$ count weakly jammed slots (i.e., $|Q_j|\geq x_1C$) among the first $i$ slots, where $\mathbb{I}[|Q_j|\geq x_1C]$ is an indicator random variable. Then, $G_i$ can be defined as:

\vspace{-1ex}$$G_i=
\left\{\begin{array}{ll}
	\Psi_{Q_i}\left(T_{hi}^{\left(K\left(\bm{Q}_{\leq i}\right)\right)}\right), & |Q_i|\geq x_1C  \textrm{ and } K(\bm{Q}_{\leq i})\leq y_1R \\
	\Psi_{Q_i}\left(T_{lo}^{\left(i-K\left(\bm{Q}_{\leq i}\right)\right)}\right), &|Q_i|< x_1C  \textrm{ and } K(\bm{Q}_{\leq i})\leq y_1R \\
	\Psi_{Q_i}\left(T_{lo}^{\left(i-y_1R\right)}\right), &\textrm{ otherwise} \\
\end{array}
\right.$$

Careful readers might suspect does $\bm{G}=(G_1,G_2,\cdots,G_R)$ in process $\beta$ really has the correct distribution $\bm{\mathcal{G}}$ we want. (That is, $\bm{\mathcal{G}}$ is the distribution in which the behavior of the nodes are determined by, say $\bm{T}_{low}$, directly.)
After all, just by looking at the definition, it seems $G_i$ depends on $Q_i$, which is controlled by Eve. Interestingly enough, indeed $\bm{G}\sim \bm{\mathcal{G}}$. To understand this intuitively, consider the following simple game played between Alice and Eve. In each round, Alice tosses a fair coin but does not reveal it to Eve (this coin plays similar role as $T^{(j)}$). However, Eve can decide whether to flip the coin or not (this is like permuting channel assignments according to $q$). Finally, the coin is revealed and the game continues into the next round. Now, a simple but important observation is: the coin is still a fair coin in each round, although Eve can decide whether to flip it or not.
Similarly, back to our setting, we can show $\bm{G}\sim \bm{\mathcal{G}}$.
Detailed proof of this claim is provided in Appendix \ref{sec-app-multicastadp-proof}.




We continue to introduce process $\beta'$. In $\beta'$, still there are $n$ nodes executing \multicastadp, along with a jamming adversary Carlo. However, for each slot $i$, if in $\beta$ nodes use $\Psi_{Q_i}(T_{hi}^{(j)})$ (resp., $\Psi_{Q_i}(T_{lo}^{(j)})$) to determine their behavior, then in $\beta'$ nodes directly use $T_{hi}^{(j)}$ (resp., $T_{lo}^{(j)}$), and Carlo leaves channels $\{1,2,\cdots,x_1C\}$ unjammed (resp., jams all channels).

Finally, in $\gamma$, again there are $n$ nodes executing \multicastadp, yet adversary 
adopts a fixed strategy $\bm{\hat{q}}$: in the first $y_1R$ slots, channels $\{1,2,\cdots,x_1C\}$ are unjammed; and in the remaining $(1-y_1)R$ slots, all channels are jammed. Besides, in the $i^{th}$ slot, node directly use chunk $T^{(i)}_{hi}$ if $i\leq y_1R$, and otherwise $T^{(i-y_1R)}_{lo}$ to compute their behavior.

Now we can define and couple some random variables in these process.
We sketch the proof of Lemma~\ref{lemma-multicastadp-fast-bcst} here. 
Appendix \ref{sec-app-multicastadp-proof} includes omitted proofs of this section.
\begin{proof}[Proof sketch of Lemma~\ref{lemma-multicastadp-fast-bcst}] Define $\mathcal{E}_{X}$ (resp., $\mathcal{E}_{X'}$ or $\mathcal{E}_{Y}$) be the event that some node is still uninformed by the end of process $\beta$ (resp., $\beta'$ or $\gamma$). Recall $\mathcal{E}_{\geq}$ is related to $\bm{Q}$ in process $\beta$. The following claims capture the relationship between them:
  
\emph{Claim I: $\mathcal{E}_{X}$ implies $\mathcal{E}_{X'}$}. Notice that permuting nodes' behaviors and unjammed channels together will not change nodes' feedback. Moreover, $(\pi(Q'_i)\subseteq Q_i)\wedge (M'_{i-1}\subseteq M_{i-1})$ implies $M'_i\subseteq M_i$, where $Q_i$ (resp., $Q'_i$) denotes the set of unjammed channels in $i^{th}$ slot, and $M_i$ (resp., $M'_i$) denotes the set of informed nodes by the end of slot $i$, in process $\beta$ (resp., $\beta'$).

\emph{Claim II: $(\mathcal{E}_{X'}\wedge \mathcal{E}_{\geq})$ implies $(\mathcal{E}_{Y}\wedge \mathcal{E}_{\geq})$}. If $\mathcal{E}_{\geq}$ holds, in both $\beta'$ and $\gamma$, Eve leaves exactly $y_1R$ slots in which channel $\{1,2,\cdots,x_1C\}$ are unjammed, and jams all channels in $(1-y_1)R$ slots. Since we can simply ignore the slots in which all channels are jammed, then in each left slot, behavior and further feedback of nodes in the two processes are same correspondingly.

Therefore, $\Pr[\mathcal{E}_{X} \wedge \mathcal{E}_{\geq}]\leq \Pr[\mathcal{E}_{Y} \wedge \mathcal{E}_{\geq}]\leq \Pr[\mathcal{E}_{Y}]\leq\exp(-\Theta(i\cdot\lg{n}))$, the last inequality is due to $|M|$ will increase by some constant factor every so often when $|M|\leq n/2$; once $|M|>n/2$, remaining uninformed nodes will quickly learn the message too.
\end{proof}

\subparagraph*{Competitiveness and Correctness.} We prove two other key lemmas in this part. The first one shows Eve cannot stop nodes from halting without spending a lot of energy, thus guaranteeing the resource competitiveness of the termination mechanism.

\begin{lemma}\label{lemma-multicastadp-fast-halt}
Fix an epoch $i$ and a node $u$, assume $u$ is alive at the beginning of this epoch. By the end of this epoch, with probability at most $\exp(-\Theta(i\cdot\lg {n}))$, the following two events happen simultaneously: (a) $\mathcal{E}_{\geq}$ occurs in the epoch; and (b) node $u$ does not halt. Here, $\mathcal{E}_{\geq}$ is event $\mathcal{E}^{\geq 0.99}(\geq 0.99)$, which is defined in Section \ref{sec-notation}. 
\end{lemma}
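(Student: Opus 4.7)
The plan is to re-use the coupling framework of Lemma~\ref{lemma-multicastadp-fast-bcst} essentially verbatim, but instantiated with the new constants $x_2 = y_2 = 0.99$ dictated by the event $\mathcal{E}_\geq = \mathcal{E}^{\geq 0.99}(\geq 0.99)$. Since $u$ halts at the end of epoch $i$ iff it observes at least $p_i R_i / 2$ silent slots among the slots on which it listened, letting $N_u^\beta$ denote this count in the real execution $\beta$, the goal reduces to upper-bounding $\Pr[N_u^\beta < p_i R_i / 2 \wedge \mathcal{E}_\geq]$.

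First I would introduce processes $\beta,\beta',\gamma$ exactly as in the previous lemma but with the new thresholds: in $\beta'$, Carlo leaves $\{1,\ldots,x_2 C\}$ unjammed on slots driven by a $\bm{T}_{high}$ chunk and jams everything on slots driven by $\bm{T}_{low}$; in $\gamma$, a fixed adversary leaves $\{1,\ldots,x_2 C\}$ unjammed in the first $y_2 R_i$ slots and jams all channels afterward, with the $j$\textsuperscript{th} slot driven directly by $T_{hi}^{(j)}$ for $j \leq y_2 R_i$ and by $T_{lo}^{(j - y_2 R_i)}$ otherwise. Writing $N_u^{\beta'}, N_u^\gamma$ for the silent-slot counts in these processes, I would then transfer the two structural claims of Lemma~\ref{lemma-multicastadp-fast-bcst}: (i) $N_u^\beta = N_u^{\beta'}$ slot-by-slot, because $\Psi_{Q_j}$ permutes nodes' channel assignments and the set of unjammed channels in tandem and silence is invariant under that permutation; and (ii) on $\mathcal{E}_\geq$, $N_u^{\beta'} = N_u^\gamma$, because at least $y_2 R_i$ slots of $\beta$ satisfy $|Q_j|\geq x_2 C$, so the first $y_2 R_i$ chunks of $\bm{T}_{high}$ are exhausted in both processes on slots where exactly $\{1,\ldots,x_2 C\}$ is unjammed (the order of these slots differs but does not affect the sum), while any remaining slots use $\bm{T}_{low}$ chunks against an adversary that jams every channel and therefore contribute zero silence to both counts.

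Chaining the two claims gives $\Pr[N_u^\beta < p_i R_i/2 \wedge \mathcal{E}_\geq] \leq \Pr[N_u^\gamma < p_i R_i / 2]$, and the right-hand side is now amenable to a Chernoff argument because in $\gamma$ the per-slot chunks are mutually independent and the adversary is fixed. In a good slot of $\gamma$, the probability that $u$ hears silence is at least $p_i \cdot x_2 \cdot (1 - p_i/C)^{n-1} \geq p_i \cdot 0.99 \cdot 3/4 > 0.7\, p_i$, where the last factor uses $p_i \leq C/(4n)$ to rule out a colliding sender on $u$'s chosen channel; hence $\mathbb{E}[N_u^\gamma] \geq 0.99\, R_i \cdot 0.7\, p_i$, a constant factor strictly larger than the halting threshold $p_i R_i/2$. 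A standard multiplicative Chernoff then yields $\Pr[N_u^\gamma < p_i R_i/2] \leq \exp(-\Theta(p_i R_i))$, and since the choice of $I_b$ guarantees $2^i \sqrt{C/n} = \Omega(1)$, we have $p_i R_i = \Omega(i \lg^2 n)$, giving the claimed $\exp(-\Theta(i \lg n))$ bound. The main obstacle, inherited from Lemma~\ref{lemma-multicastadp-fast-bcst}, is making Claim (ii) watertight when $\beta$ has strictly more than $y_2 R_i$ weakly-jammed slots: the ``extra'' such slots in $\beta'$ must draw $\bm{T}_{low}$ chunks and contribute nothing (since Carlo jams every channel on those slots), mirroring how $\gamma$'s last $(1-y_2) R_i$ slots consume the same $\bm{T}_{low}$ chunks. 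Once this bookkeeping is verified, independence in $\gamma$ is genuine and the concentration argument closes the proof.
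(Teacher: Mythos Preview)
Your approach is essentially the paper's, and the Chernoff endgame is right. But Claim (i) as stated is false: you do \emph{not} get $N_u^\beta = N_u^{\beta'}$ slot-by-slot, only $N_u^\beta \geq N_u^{\beta'}$. The reason is that $\pi_{Q_j}$ bijects $[|Q_j|]$ onto $Q_j$, so after undoing the permutation the unjammed set in $\beta$ becomes $\{1,\ldots,|Q_j|\}$; but Carlo in $\beta'$ leaves only $\{1,\ldots,x_2C\}$ unjammed, and on a $\bm{T}_{high}$ slot we have $|Q_j|\geq x_2C$, possibly strictly. Thus there are channels (those with raw index in $(x_2C,|Q_j|]$) that are unjammed in $\beta$ yet jammed in $\beta'$, so $u$ may hear silence in $\beta$ but not in $\beta'$. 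The paper makes exactly this point (its Claim~I is $\Pr[X\leq t]\leq\Pr[X'\leq t]$, proved via $X_i\geq X'_i$).

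Fortunately the inequality goes the direction you need: $\{N_u^\beta < pR/2\}\subseteq\{N_u^{\beta'}<pR/2\}$, so your chain $\Pr[N_u^\beta<pR/2\wedge\mathcal{E}_\geq]\leq\Pr[N_u^{\beta'}<pR/2\wedge\mathcal{E}_\geq]\leq\Pr[N_u^\gamma<pR/2]$ survives once you replace ``$=$'' by ``$\geq$'' in Claim (i). Your Claim (ii) and the concentration step are fine as written.
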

\begin{proof}
Arrange the randomness of nodes as what we do in the proof of Lemma \ref{lemma-multicastadp-fast-bcst}, except that we use parameter $x_2=0.99$ to replace $x_1$, and $y_2=0.99$ to replace $y_1$. Let  $R$ be the length of the epoch, and $p$ be nodes' working probability. Now, fix an arbitrary node $u$, define $X_i$ (resp., $X'_i$ or $Y_i$) to be an indicator random variable taking value one iff $u$ hears silence in the $i$\textsuperscript{th} slot in $\beta$ (resp., $\beta'$ or $\gamma$). Following random variables are what we intend to couple: $X=\sum_{i=1}^{R}X_i$, $X'=\sum_{i=1}^{R}X'_i$, and $Y=\sum_{i=1}^{R}{Y_i}$. Specifically:

\emph{Claim I: For any integer $t\geq 0: \Pr[X\leq t]\leq\Pr[X'\leq t]$.} This is because in each slot $i$, when each node chooses its operating channel uniformly at random, both $X_i$ and $X'_i$ are proportional to $|Q_i|$. Moreover, since $\pi(Q'_i)\subseteq Q_i$, with a stochastic dominance argument, then $X'$ is a ``underestimate'' of $X$. Detailed proof of this claim is provided in Appendix \ref{sec-app-multicastadp-proof}.

\emph{Claim II: For any integer $t\geq 0: \Pr[(X'\leq t)\wedge \mathcal{E}_{\geq}]\leq \Pr[(Y\leq t)\wedge \mathcal{E}_{\geq}]$.} In both $\beta'$ and $\gamma$, Eve leaves channels $\{1,2,\cdots,x_2C\}$ unjammed for $y_2R$ slots. Notice that in each of the $R$ slots, nodes' behavior are sampled from a same the distribution, so the index set of the $y_2R$ slots does not matter.

Therefore, $\Pr[(Y\leq t)\wedge \mathcal{E}_{\geq}]\leq \Pr[Y\leq t]\leq \exp(-\Theta(i\cdot\lg^2{n}))$. Since $\{Y_1,Y_2,\cdots,Y_R\}$ is a set of independent random variables, bounding $\Pr[Y<Rp/2]$ is easy. Specifically, $\mathbb{E}[Y]=y_2\cdot x_2\cdot p\cdot (1-p/C)^{n-1}\geq 0.99^2\cdot Rp\cdot(1-p/C)^{n}\geq 0.99^2\cdot Rp\cdot e^{-2np/C}\geq 0.99^2\cdot Rp\cdot e^{-0.5}>0.59Rp$. Apply a Chernoff bound, we know $\Pr[Y<Rp/2]\leq\exp(-\Theta(Rp))\leq\exp(-\Theta(i\cdot\lg {n}))$. 
\end{proof}

The second lemma states that all nodes must have been informed before any node decides to halt, thus message dissemination must have completed before any node stops execution.
\begin{lemma}\label{lemma-multicastadp-halt-imply-informed}
Fix an epoch $i$ in which all nodes are active, fix an alive node $u$. By the end of this epoch, with probability at most $\exp(-\Theta(i\cdot\lg{n}))$, the following two events happen simultaneously: (a) node $u$ halts; and (b) some node is still uninformed.
\end{lemma}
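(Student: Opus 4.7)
My plan is to decompose the bad event through the ``weak jamming'' indicator $\mathcal{E}_{\geq}:=\mathcal{E}^{\geq 0.1}(\geq 0.1)$ used in Lemma~\ref{lemma-multicastadp-fast-bcst}. A union bound gives
\[
\Pr[u\text{ halts}\wedge\exists\text{ uninformed node}]\leq\Pr[\mathcal{E}_{\geq}\wedge\exists\text{ uninformed node}]+\Pr[\overline{\mathcal{E}_{\geq}}\wedge u\text{ halts}],
\]
and the first summand is exactly what Lemma~\ref{lemma-multicastadp-fast-bcst} bounds by $n^{-\Theta(i)}\leq\exp(-\Theta(i\lg n))$. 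The whole task therefore reduces to controlling the second summand, i.e.\ showing that when Eve jams hard enough to leave most of the network uninformed, she cannot also accidentally supply $u$ with enough silent slots to halt.

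Under $\overline{\mathcal{E}_{\geq}}=\mathcal{E}^{(>0.9)}(<0.1)$, in more than $0.9R_i$ slots Eve leaves fewer than $0.1C$ channels open, so $u$'s expected silence count is at most roughly $0.1\,R_ip_i+0.9\,R_i\cdot 0.1\,p_i=0.19\,p_iR_i$, well below the halting threshold $p_iR_i/2$. To upgrade this expectation into a concentration bound in the face of an adaptive Eve, I plan to re-invoke the three-process coupling of Lemma~\ref{lemma-multicastadp-fast-halt} with $x_3=y_3=0.1$, but arranged so as to upper-bound rather than lower-bound silences. Concretely, in $\gamma$ the fixed adversary will leave all $C$ channels unjammed in the first $0.1R_i$ slots and leave only channels $\{1,\ldots,0.1C\}$ unjammed in the remaining $0.9R_i$ slots---a strategy that is, slot-by-slot, at least as permissive as anything Eve can do while producing $\overline{\mathcal{E}_{\geq}}$. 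Analogs of Claims~I and~II from Lemma~\ref{lemma-multicastadp-fast-halt} will then yield $\Pr[N_u\geq p_iR_i/2\wedge\overline{\mathcal{E}_{\geq}}]\leq\Pr[Y\geq p_iR_i/2]$, where $N_u$ is $u$'s silence count in $\beta$ and $Y=\sum_iY_i$ is its counterpart in $\gamma$. Since the $\{Y_i\}$ are mutually independent with $\mathbb{E}[Y]\leq 0.19\,p_iR_i$, a Chernoff bound gives $\Pr[Y\geq p_iR_i/2]\leq\exp(-\Theta(p_iR_i))=\exp(-\Theta(i\lg n))$, using $p_iR_i=\Theta(\sqrt{C/n}\cdot 2^i\cdot i\cdot\lg^2 n)=\Omega(i\lg n)$ for $i\geq I_b$.

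The step I expect to demand the most care is the reversed Claim~I. In Lemma~\ref{lemma-multicastadp-fast-halt} the permutation $\Psi_{Q_i}$ is set up so that $\beta$'s unjammed set \emph{contains} the image of $\beta'$'s, giving a lower bound on silences; here I need the opposite inclusion, so that the event ``$u$ listens on an unjammed channel'' in $\beta$ sits inside its counterpart in $\beta'$ and then in $\gamma$. This means re-aligning $\beta$'s chunk-selection rule---drawing from $\bm{T}_{high}$ precisely on the weakly jammed slots ($|Q_i|\geq x_3 C$) and from $\bm{T}_{low}$ otherwise, with the cap $y_3R_i$ automatically non-binding under $\overline{\mathcal{E}_{\geq}}$---and redefining $\beta'$ and $\gamma$'s fixed strategies accordingly. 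Once the inclusions point the right way, Claim~II (swapping in the fixed $\gamma$-strategy) and the Chernoff tail together close out the argument, and combining the two summands yields the claimed $\exp(-\Theta(i\lg n))$.
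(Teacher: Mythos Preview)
Your proposal is correct and takes essentially the same route as the paper: split on $\mathcal{E}_\geq=\mathcal{E}^{\geq 0.1}(\geq 0.1)$, invoke Lemma~\ref{lemma-multicastadp-fast-bcst} on one branch, and on the other re-run the three-process coupling with the inclusion direction flipped so that $\beta'$ (and then $\gamma$) \emph{upper}-bounds silences; the paper's $\gamma$-strategy and the $\mathbb{E}[Y]\leq 0.19\,R_ip_i$ calculation match yours. The one execution detail that differs is which tape gets the cap: you cap the weakly-jammed ($\bm{T}_{high}$) side at $y_3R_i$, so under $\overline{\mathcal{E}_\geq}$ the cap is \emph{not} reached and Claim~II needs an extra stochastic-domination step; the paper instead caps the heavily-jammed ($\bm{T}_{low}$) side at $(1-y_1)R_i$, so under $\overline{\mathcal{E}_3}$ the cap \emph{is} reached and $\beta'$ and $\gamma$ have identical unjammed-set patterns slot-for-slot, letting both claims go through verbatim as in Lemma~\ref{lemma-multicastadp-fast-halt}. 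Either choice works, but if you adopt the paper's capping convention the ``extra care'' you anticipate in the reversed Claim~I disappears and Claim~II becomes an exact match rather than a domination.
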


\begin{proof}[Proof sketch of Lemma \ref{lemma-multicastadp-halt-imply-informed}]
Consider two complement cases: either Eve jams a lot in the epoch, or she does not. If jamming is not strong, Lemma \ref{lemma-multicastadp-fast-bcst} implies no node remains uninformed. Otherwise, then $u$ should not hear a lot of silent slots and will not halt.

Let $\mathcal{E}_1$ be event (a) in lemma statement; 
and $\mathcal{E}_2$ be event (b) in lemma statement. 
Let $\mathcal{E}_3$ be the event that $\mathcal{E}^{\geq 0.1}(\geq 0.1)$ happens in the epoch. By Lemma \ref{lemma-multicastadp-fast-bcst}, $\Pr[\mathcal{E}_2 \wedge \mathcal{E}_3 ]\leq\exp(-\Theta(i\cdot\lg{n}))$. 
Bounding $\Pr[\mathcal{E}_1\wedge \overline{\mathcal{E}_3}]\leq \exp(-\Theta(i\cdot\lg{n}))$ is similar to that of Lemma \ref{lemma-multicastadp-fast-halt}'s. 
As a result, $\Pr[\mathcal{E}_1\wedge\mathcal{E}_2]=\Pr[\mathcal{E}_1\wedge\mathcal{E}_2\wedge\mathcal{E}_3]+\Pr[\mathcal{E}_1\wedge\mathcal{E}_2\wedge\overline{\mathcal{E}_3}]\leq\Pr[\mathcal{E}_1\wedge \overline{\mathcal{E}_3}]+\Pr[\mathcal{E}_2 \wedge \mathcal{E}_3 ]\leq \exp(-\Theta(i\cdot\lg{n}))$. 
\end{proof}

\subparagraph*{Main theorem.} We sketch the proof of Theorem \ref{thm-multicastadv} here, and full version is in Appendix \ref{sec-app-multicastadp-proof}.

Fix a node $u$, we begin by computing how long $u$ remains active. Let $L$ be the total runtime of $u$. Since epoch length increases geometrically, we only need to focus on the last epoch $u$ in which is active. Also, notice that Lemma \ref{lemma-multicastadp-fast-halt} suggests Eve must jam a lot in an epoch---the amount of which can be described as some function of epoch length---to stop $u$ from halting. Putting these pieces together, we show $\Pr(L>\Theta(1)\cdot T/C)\leq n^{-\Omega(1)}$. By a union bound, we know when $T=\Omega(C)$ w.h.p.\ all nodes halt within $O(T/C)$ slots.

Next, we analyze the cost of nodes. Again fix a node $u$, let $F$ denote its total cost. By an argument similar to above, we are able to prove $\Pr(F>\Theta(\lg{n})\cdot\sqrt{\lg{T}\cdot(T/n)})\leq n^{-\Omega(1)}$. By a union bound, we know when $T=\Omega(C)$ w.h.p.\ the cost of each node is $O(\sqrt{T/n}\cdot\sqrt{\lg{T}}\cdot\lg{n})$.

The last step is to show with high probability each node must have been informed when it halts, and this can be proved via an application of Lemma \ref{lemma-multicastadp-halt-imply-informed}.

Finally, we note that when $T=o(C)$, all nodes will halt by the end of the first epoch, with high probability. This results in the $\tau_{time}$ and $\tau_{cost}$ terms in the theorem statement.

\section{The \multicastadvadp Algorithm}\label{sec-multicastadvadp}

Our second algorithm---called \multicastadvadp---works even if knowledge of $n$ is absent. However, its design and analysis are much more involved than that of \multicastadp.

\vspace{-2ex}\subparagraph*{Building \multicastadvadp.} When the value of $n$ is unknown, the principal obstacle lies in properly setting nodes' working probabilities. In view of this, we let \multicastadvadp contain multiple \emph{super-epochs}, each of which contains multiple \emph{phases}, and nodes may use different working probabilities in different phases. Notice, for each super-epoch, we need to ensure it contains sufficiently many ``good'' phases, in the sense that within each such good phase broadcast will succeed if Eve does not heavily jam it. Another challenge posed by the unknown $n$ value is that the simple termination criterion---large fraction of silent slots---no longer works, as this can happen when the working probability is too low.

Gilbert et al.~\cite{gilbert14} provide a solution to the above two challenges in the single-channel setting. Specifically, at the beginning of a super-epoch $i$, nodes set their initial working probability to a pre-defined small value. After each phase, each node $u$ increases its working probability $p_u$ by a factor of $2^{\max\{0,\eta_u-0.5\}/i}$, where $\eta_u$ denotes the fraction of silent slots $u$ observed within the phase. This mechanism provides two important advantages: (a) Eve has to keep jamming heavily to prevent $p_u$ from reaching the ideal value; and (b) $p_u$ and $p_v$ might be different for two nodes $u$ and $v$, but the difference is bounded. As for termination, the number of messages nodes heard could be a good metric. However, a simple threshold would not work. Instead, Gilbert et al.\ develop a two-stage termination mechanism: when a node $u$ hears the message sufficiently many times, it becomes a \texttt{helper} and obtains an estimate of $n$; Later, when $u$ is sure that all nodes have become \texttt{helper}, it will stop execution.

In \multicastadvadp, we extend the above approach to the multi-channel setting. Specifically, we observe that the single-channel message dissemination scheme used in \cite{gilbert14} is relatively slow in that it needs $\Theta(\lg{n})$ phases to accomplish broadcast. By contrast, in \multicastadvadp, the application of epidemic broadcast reduces this time period to a single weakly-jammed phase. This replacement is not a simple cut-and-paste. Instead, we also adjust the phase structure accordingly. In particular, each phase now contains two \emph{steps}. This adjustment further demands us to change the way nodes' update their working probabilities after each phase: $p_u\gets p_u\cdot 2^{\max\{0,\eta^{step1}_u+\eta^{step2}_u-1.5\}}$. In the end, \multicastadvadp provides a slightly better resource competitive ratio than \cite{gilbert14}.

Handing adaptivity via coupling also becomes more challenging.
In more detail, in each phase we need the number of silent slots $u$ heard $N_u$ to be close to its expectation for \emph{any} jamming result (instead of, say, only when jamming is strong, as in the proof of Lemma \ref{lemma-multicastadp-halt-imply-informed}).
To acquire the desired results, we have to consider jamming results vectors at a finer level (rather than the unique extremal category in \multicastadp), which in turn requires the failure probability for each category to be much lower (otherwise a union bound over the increased number of categories would not work).
Larger deviation of $N_u$ from expectation solves the issue, but it further demands the initial working probability nodes used at the beginning of each epoch to be sufficiently high.
Unfortunately, this modification could result in nodes becoming \texttt{helper} with incorrect estimates of $n$, violating the correctness of the termination mechanism. We fix this problem by adding step three to each phase.

\vspace{-2ex}\subparagraph*{Algorithm description.} \multicastadvadp contains multiple super-epochs, and the first super-epoch number is $I_b=2\lg{C}+20$. In super-epoch $i$, there are $bi$ phases numbered from $0$ to $bi-1$, where $b$ is some large constant. Each phase contains three steps. For any super-epoch $i$, the length of each step is always $R_i=a\cdot 2^i\cdot i^3$, where $a$ is some large constant. Prior to execution, all nodes are in \texttt{init} status.
Similar to \multicastadp, each node $u$ maintains $M_u$ to indicate whether it knows the message $m$ or not.

We now describe nodes' behavior in each $(i,j)$-phase---i.e., phase $j$ of super-epoch $i$---in detail. For each slot in an $(i,j)$-phase, each node will go to a channel chosen uniformly at random. Then, for each node $u$, it will broadcast or listen on the chosen channel, each with a certain probability. In step one and two, this probability is $p_u^{i,j}$; in step three, this probability is $p_{step3}^{i}=C^2/2^i$. We often call $p_u^{i,j}$ as the working probability of node $u$. Notice, at the beginning of an super-epoch $i$, the probability $p_u^{i,j}$, which is actually $p_u^{i,0}$, is set to $C/2^i$. In a slot, if $u$ chooses to send, then the broadcast content depends on the value of $M_u$: if $M_u$ is $true$ then $u$ will broadcast $m$, otherwise $u$ will broadcast a beacon message $\pm$. On the other hand, if $u$ chooses to listen in a slot, then it will record the channel feedback. One point worth noting is, a node $u$ will only change $M_u$ from $false$ to $true$ if it hears message $m$ in step one. (The purpose of this somewhat strange design is to facilitate analysis.)

At the end of each phase $j$, nodes will compute $p_u^{i,j+1}$ (i.e., the working probability of the next phase). Specifically, for each node $u$, define $\Delta^{step1}_u=\Delta^{step2}_u=R_ip^{i,j}_u/(1-p^{i,j}_u/C)$ and $\Delta^{step3}_u=R_ip^{i}_{step3}/(1-p^{i}_{step3}/C)$. Let $N_u^{step1,c}$, $N_u^{step2,c}$, and $N_u^{step3,c}$ denote the number of silent slots $u$ observed in step one, step two, and step three in phase $j$, respectively. Then, $\eta_u^{i,j}=N_u^{step1,c}/\Delta^{step1}_u+N_u^{step2,c}/\Delta^{step2}_u+N_u^{step3,c}/\Delta^{step3}_u$, and $p_u^{i,j+1}=p_u^{i,j}\cdot 2^{\max\{0,\eta_u^{i,j}-2.5\}}$.

At the end of each phase $j$, nodes will also potentially change their status. Specifically, if a node $u$ is in \texttt{init} status and finds: (a) $\eta_u^{i,j}\geq 2.4$; and (b) it has heard the message $m$ at least $ai^3$ times during step two of phase $j$. Then, node $u$ will become \texttt{helper} and compute an estimate of $n$ as $n_u=C/((p_u^{i,j})^2\cdot 2^i)$. On the other hand, if $u$ is already a \texttt{helper} and finds $p_u^{i,j+1}\geq 64\sqrt{C/(2^i\cdot n_u)}$, then $u$ will change its status to \texttt{halt} and stop execution. Pseudocode of \multicastadvadp is provided in Figure \ref{fig-alg-multicastadvadp} in Appendix \ref{sec-app-code}.

\section{Analysis of \multicastadvadp}

Throughout the analysis, when considering an $(i,j)$-phase, we often omit the indices $i$ and/or $j$ if they are clear from the context. For any node $u$, we often use $p_u$ to denote its working probability in a step. We always use $V$ to denote active nodes, and use $M$ to denote active nodes with $M_u=true$. Omitted proofs and some auxiliary materials are in Appendix \ref{sec-app-multicastadvadp-proof}.

\vspace{-2ex}\subparagraph*{The ``bounded difference'' property.} The main goal of this part is to show nodes' working probabilities can never differ too much. This ``bounded difference'' property is used extensively in remaining analysis, either explicitly or implicitly.

\begin{lemma}\label{lemma-multicastadvadp-pu-pv-bounded}
Consider a super-epoch $i>\lg{n}$. With probability at least $1-\exp(-\Theta(iC))$, we have $1/2\leq p_u/p_v\leq 2$ for any two nodes $u$ and $v$ at any phase of the super-epoch.
\end{lemma}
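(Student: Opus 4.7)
The plan is to reduce the ratio bound to concentration of $\eta$-differences, establish a common conditional mean for $\eta_u$ that eliminates all $u$-dependence, and then close the gap via coupling. Since $p_u^{i,0}=p_v^{i,0}=C/2^i$ and $p_u^{i,j}=p_u^{i,0}\cdot 2^{\sum_{k<j}\max\{0,\eta_u^{i,k}-2.5\}}$, the identity $|\max\{0,a\}-\max\{0,b\}|\le|a-b|$ yields $|\log_2(p_u^{i,j}/p_v^{i,j})|\le\sum_{k<j}|\eta_u^{i,k}-\eta_v^{i,k}|$. So it suffices to show the right-hand sum stays below $1$ for every $u,v,j$ except with probability $\exp(-\Theta(iC))$.

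Next I would pin down a common conditional mean. Fix a phase's step and let $\{p_w\}$ denote the working probabilities at the phase's start. For step~1 and step~2, conditioning on the step's jamming result $\bm{Q}$ and on $\{p_w\}$, the probability that $u$ hears silence in slot $s$ equals $p_u\cdot|Q_s|/C\cdot\prod_{w\neq u}(1-p_w/C)$, so
\[
\mathbb{E}\!\left[\,N_u^{step,c}/\Delta_u^{step}\,\bigm|\,\bm{Q},\{p_w\}\,\right]=\prod_w(1-p_w/C)\cdot\frac{1}{R_i}\sum_s \frac{|Q_s|}{C}.
\]
The cancellation of $(1-p_u/C)$ inside $\Delta_u^{step}=R_ip_u/(1-p_u/C)$ erases all $u$-dependence; step~3 uses the node-invariant probability $p_{step3}^i$ and yields the same conclusion. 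Hence $\mathbb{E}[\eta_u-\eta_v\mid\bm{Q},\{p_w\}]=0$ for any realized jamming pattern, which is exactly what makes the $\eta$-difference a suitable target for concentration.

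To turn this into a high-probability statement I would couple the adaptive execution with a non-adaptive one, adapting Lemma~\ref{lemma-multicastadp-fast-bcst}'s construction. Since we need concentration for \emph{every} jamming pattern (rather than only when jamming is weak, as in \multicastadp), the coarse two-category split is refined into a finer partition indexed by the per-slot unjammed sizes: within each category the adaptive execution is coupled with a non-adaptive process where Eve commits in advance to a representative unjamming profile and channel labels are permuted according to the realized $Q_s$. In the non-adaptive process the per-slot silence indicators are independent, so a Chernoff bound gives deviation $\epsilon$ from the common mean with failure probability $\exp(-\Theta(\epsilon^2 \Delta_u^{step}))$. Using $p_u\ge C/2^i$ and $R_i=a\cdot 2^i\cdot i^3$, we have $\Delta_u^{step}\ge aCi^3$; choosing $\epsilon=\Theta(1/(bi))$ therefore gives per-category, per-pair, per-step, per-phase failure $\exp(-\Theta(aCi))$, and summing $|\eta_u^{i,k}-\eta_v^{i,k}|\le 6\epsilon$ over the $bi$ phases stays below $1$.

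A final union bound over $O(n^2)$ node pairs, $bi$ phases, three steps, and the jamming categories completes the argument; the hypothesis $i>\lg n$ absorbs the $n^2$ factor into the exponent, leaving $\exp(-\Theta(iC))$ as claimed. The hard part will be the categorization itself: categories must be fine enough that the common-mean identity above is preserved within each category (so the Chernoff bound sees the correct conditional expectation), yet coarse enough that their count is dominated by the per-category tail $\exp(-\Theta(iC))$---exactly the finer-level trade-off foreshadowed in Section~\ref{sec-multicastadvadp}.
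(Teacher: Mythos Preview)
Your proposal is correct and follows essentially the same route as the paper: the same coupling (partition jamming results by the vector of per-slot unjammed-channel counts, permute channel labels so the coupled process has independent indicators), the same per-phase increment bound $p'_u/p'_v\le(p_u/p_v)\cdot 2^{O(1/(bi))}$, and the same union bound over categories, nodes, and phases. Your observation that $\mathbb{E}[N_u^{step,c}/\Delta_u^{step}\mid\bm{Q},\{p_w\}]$ is node-independent is exactly the paper's $\mathbb{E}[X_u]/\Delta_u=\alpha\,|\bm{Q}|/(RC)$.

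One organizational difference worth noting: the paper splits the concentration into three parts (Claim~\ref{claim-multicastadvadp-pu-pv-bounded}) and only invokes the fine deviation bound (part~(3)) under the side condition $X_u/\Delta_u,X_v/\Delta_v\ge 0.1$, using parts~(1) and~(2) to reach that condition whenever $\eta_u>2.5$. You instead apply an additive Chernoff/Bernstein bound directly, obtaining $\Pr[|Y_u(\bm{z})-\mathbb{E}[Y_u(\bm{z})]|>\epsilon\Delta_u]\le\exp(-\Theta(\epsilon^2\Delta_u))$ uniformly for all $\bm{z}$ (this is valid because $\mathbb{E}[Y_u(\bm{z})]\le\Delta_u$, so the Bernstein exponent $t^2/(2\mu+2t/3)\ge\epsilon^2\Delta_u/3$). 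Your route is a bit cleaner for this lemma in isolation; the paper's decomposition into Claim~\ref{claim-multicastadvadp-pu-pv-bounded} pays off because parts~(1) and~(2) are reused verbatim in later arguments (Lemma~\ref{lemma-multicastadvadp-eta-diff-bounded}, Lemma~\ref{lemma-multicastadvadp-Pv-bounded}, and Claim~III inside Lemma~\ref{lemma-multicastadvadp-pu-raise-and-msg}). The category count you will need is $\binom{R_i+C}{C}\le(2R_i)^C=\exp(\Theta(iC))$, and the constant $a$ in $R_i$ must be chosen large relative to $b$ so that the per-category tail $\exp(-\Theta(aCi/b^2))$ absorbs it; the paper makes this explicit via the constraint $a\ge 36gb^2$.
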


At a high level, the above lemma holds because the fraction of silent slots nodes observed during a phase cannot differ too much. To prove it formally, we show the following claim via a coupling argument. However, details of the coupling differ from the ones we saw in Section \ref{sec-multicastadp-analysis}. Specifically, we divide jamming results vectors into $\binom{R+C}{C}$ categories.

\begin{claim}\label{claim-multicastadvadp-pu-pv-bounded}
Consider a step of length $R$ and two active nodes $u$ and $v$. Let $p_u$ (resp., $p_v$) be the sending/listening probabilities of $u$ (resp., $v$); and let $X_u$ (resp., $X_v$) be the number of silent slots $u$ (resp., $v$) observed. Define $\Delta_u=Rp_u/(1-p_u/C)$ and $\Delta_v=Rp_v/(1-p_v/C)$. For constant $g\leq a/20$, define $\chi_u=\sqrt{{giC}/{(Rp_u)}}$ and $\chi_v=\sqrt{{giC}/{(Rp_v)}}$. Then:
\begin{enumerate}
	\item $\Pr[X_u/\Delta_u>1]\leq\exp(-\Theta(i^3C))$.
	\item $\Pr[(X_u/\Delta_u>0.2)\wedge(X_v/\Delta_v<0.1)]\leq\exp(-\Theta(i^3C))$.
	\item $\Pr[(|X_u/\Delta_u-X_v/\Delta_v|\geq \chi_u+\chi_v)\wedge((X_u/\Delta_u\geq 0.1)\wedge(X_v/\Delta_v\geq 0.1))]\leq\exp(-\Theta(iC))$.
\end{enumerate}
\end{claim}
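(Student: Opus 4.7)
My plan is to mirror the coupling strategy used in Lemma~\ref{lemma-multicastadp-fast-bcst}, but to refine the categorization of jamming-result vectors. First, I would construct a coupled process $\beta'$ in which, for each slot $i$, nodes' channel choices are permuted through $\pi_{Q_i}$ so that the effective unjammed set in slot $i$ is always $\{1,\ldots,|Q_i|\}$. As argued in Section~\ref{sec-multicastadp-analysis}, this preserves the marginal distribution of $(G_1,\ldots,G_R)$ and leaves the silence indicators $X_{u,i},X_{v,i}$ invariant slot-by-slot. The new ingredient is the categorization: instead of the single extremal category used for \multicastadp, I would group jamming-result vectors $\vec Q$ by the histogram $H=(h_0,\ldots,h_C)$ of cardinalities $|Q_i|$, of which there are exactly $\binom{R+C}{C}$. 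For each $H$, I would introduce a clean companion process $\gamma_H$ in which the adversary is oblivious and uses a fixed schedule realising $H$ (for concreteness, the sorted schedule that, in a pre-specified slot order, unjams $\{1,\ldots,|Q_i|\}$ with the $|Q_i|$'s in non-decreasing order). By routing the coupling's random tapes through a family of per-histogram tapes, I would arrange that for each bad event $A$ in parts~(1)--(3), $\Pr[A\wedge \mathrm{hist}(\vec Q)=H]$ is dominated by $\Pr[A\text{ in }\gamma_H]$.

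The payoff is that in $\gamma_H$ the adversary is oblivious, so across slots both $\{X_{u,i}\}$ and $\{X_{v,i}\}$ become mutually independent Bernoulli variables, and Chernoff applies directly. A short calculation from $\mathbb{E}[X_{u,i}]=p_u(|Q_i|/C)\prod_{w\neq u}(1-p_w/C)$ yields $\mathbb{E}[X_u]/\Delta_u=\mathbb{E}[X_v]/\Delta_v$, a common value $\mu_H\in[0,1]$ depending only on $H$ and the working probabilities. For~(1), the event $\{X_u/\Delta_u>1\}$ is an upper deviation of $X_u$ from its mean by $(1-\mu_H)\Delta_u$; a case split on $\mu_H$ (bounded away from $1$ versus close to $1$, with $1-\mu_H$ in the latter case lower-bounded via the collision factor $\prod_w(1-p_w/C)$) together with the multiplicative Chernoff upper-tail delivers a bound $\exp(-\Theta(i^3C))$ in the algorithmically-permitted range of $p_u$ for super-epoch $i>\lg n$. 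For~(2), I would split on whether $\mu_H\leq 0.15$ (so $\{X_u/\Delta_u>0.2\}$ is an upper-tail event) or $\mu_H>0.15$ (so $\{X_v/\Delta_v<0.1\}$ is a lower-tail event); either branch Chernoff-bounds to $\exp(-\Theta(i^3C))$. For~(3), a triangle-inequality decomposition shows that $\{|X_u/\Delta_u-X_v/\Delta_v|\geq\chi_u+\chi_v\}$ implies $\{|X_u/\Delta_u-\mu_H|\geq\chi_u\}$ or $\{|X_v/\Delta_v-\mu_H|\geq\chi_v\}$; combined with the conditioning $X_u/\Delta_u,X_v/\Delta_v\geq 0.1$ (which forces $\mu_H=\Omega(1)$ modulo a part~(2)-type event), a Bernstein/Chernoff bound gives $\exp(-\Omega(\chi_u^2\Delta_u))=\exp(-\Omega(giC))$, and symmetrically for $v$.

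The final step is a union bound over the $\binom{R+C}{C}\leq(eR/C)^C=\exp(O(iC))$ histograms, recalling $R=a\cdot 2^i\cdot i^3$. For~(1) and~(2), the per-histogram bound $\exp(-\Theta(i^3C))$ absorbs this counting factor trivially. For~(3), the per-histogram bound $\exp(-\Omega(giC))$ must dominate $\exp(O(iC))$, which is precisely where the constraint $g\leq a/20$ combined with ``$a$ sufficiently large'' pays off: the Chernoff constant can be chosen large enough to overpower the constant in $\log_2(eR/C)$. I expect the principal technical obstacle to lie not in the Chernoff calculations but in rigorously constructing the coupling: one must design a family of per-histogram random tapes together with a routing rule that, without leaking future information to Eve, ensures (a) the marginal distribution of $(G_1,\ldots,G_R)$ in $\beta'$ agrees with the algorithm's, and (b) for each histogram $H$, the intersection of the bad event with $\{\mathrm{hist}(\vec Q)=H\}$ is stochastically dominated by the corresponding event in $\gamma_H$. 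This is a substantial generalization of the two-tape construction of Section~\ref{sec-multicastadp-analysis}, since we now need a separate ``tape'' for each of exponentially many histograms and the routing must simultaneously track which histogram is being realized and the position within it.
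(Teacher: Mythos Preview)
Your plan is essentially the paper's: categorize by cardinality histogram (the paper's $K(\bm Q)\in\mathcal Z$ with $|\mathcal Z|=\binom{R+C}{C}$), reduce each category to an oblivious process where Chernoff applies, split on the threshold $\mu_H\le 0.15$ versus $\mu_H>0.15$ for part~(2), and use the triangle inequality together with the conditioning $X_w/\Delta_w\ge 0.1$ for part~(3). The union bound over histograms is handled exactly as you say.

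What you have overcomplicated is the coupling itself. You do \emph{not} need per-histogram tapes, and as you describe them they cannot be built: the histogram is only determined at the end of the step, so you cannot route slot~$i$'s randomness to a histogram-specific tape online. The paper instead uses only $C{+}1$ per-\emph{cardinality} tapes $\bm T_0,\ldots,\bm T_C$: in slot~$i$ route to the next unused chunk of $\bm T_{|Q_i|}$ and apply $\Psi_{Q_i}$. This is an online rule since $|Q_i|$ is known at slot~$i$, and it preserves the marginal of $\bm G$ by the same argument as in Section~\ref{sec-multicastadp-analysis}. The payoff is the exact identity $X_u(\bm Q)=Y_u(K(\bm Q))$ where $Y_u(\bm z)=\sum_{l=1}^C\sum_{j=1}^{z_l}Y_{u,l}^{(j)}$ is, for each fixed $\bm z$, a sum of independent indicators. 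No stochastic-domination step is needed, and the single family $\{Y_{u,l}^{(j)}\}$ serves all histograms at once; this dissolves the obstacle you flagged.

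One further simplification: for part~(1) the paper sidesteps the coupling and the union bound altogether. Since jamming can only remove silence, the no-jamming execution stochastically dominates; there the slot indicators are already independent with $\mathbb E[X_u]=\alpha\Delta_u<\Delta_u$, and a single Chernoff bound finishes.
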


\begin{proof}[Proof sketch]
We begin with part (1). Define $\alpha=\prod_{w\in V}(1-p_w/C)$. To make $X_u$ as large as possible, assume Eve does no jamming, thus whether $u$ hears silence are independent among different slots. Notice that $\mathbb{E}[X_u]=p_u\cdot(\prod_{v\in V\setminus\{u\}}(1-p_v/C))\cdot R=\alpha\cdot\Delta_u<\Delta_u$. Therefore, by a Chernoff bound, the probability that $X_u>\Delta_u$ is at most $\exp(-\Theta(\Delta_u))=\exp(-\Omega(i^3C))$.

Proofs for part (2) and (3) both rely on coupling, and we only focus on part (2) here.

We first setup the coupling. Assume the randomnesses of nodes come from $C$ lists $(\bm{T}_{0},\cdots,\bm{T}_{C})$.
Specifically, for each slot $i$ in the step, if the jamming result is $Q_i\subseteq[C]$, then nodes' behavior in this slot is determined by $\Psi_{Q_i}\left({T}_{|Q_i|}^{(\sum_{j\leq i}\mathbb{I}[|Q_j|=|Q_i|])}\right)$ using permutation $\pi_{Q_i}$ and bijection $\Psi_{Q_i}$.
Notice, $\pi_{Q_i}$ and $\Psi_{Q_i}$ are defined in Section \ref{sec-multicastadp-analysis} on page \pageref{def-psi_q-pi_q}, and ${T}_{|Q_i|}^{\sum_{j\leq i}\mathbb{I}[|Q_j|=|Q_i|]}$ is the $(\sum_{j\leq i}\mathbb{I}[|Q_j|=|Q_i|])$-th chunk in list $\bm{T}_{|Q_i|}$.
Let $X_{u,i}$ be an indicator random variable taking value 1 iff $u$ hears silence in the $i$\textsuperscript{th} slot, define $X_u=\sum_{i=1}^{R}X_{u,i}$.

Define $\mathcal{Z}=\{\bm{z}=\langle z_1,z_2,\cdots,z_C \rangle\in\mathbb{N}^C:\sum_{l=1}^{C}z_l\leq R\}$, thus $|\mathcal{Z}|=\binom{R+C}{C}\leq(R+1)^C\leq(2R)^C$. (Intuitively, for every $l\in[C]$, $z_l$ in $\bm{z}$ is the number of slots in which Eve leaves $l$ channels unjammed.) Denote the jamming results of this step as $\bm{Q}=(Q_1,\cdots,Q_R)\in \mathcal{Q}$, and define $|\bm{Q}|=\sum_{i=1}^{R}|Q_i|$. Further define function $K:\mathcal{Q}\to \mathcal{Z}$ such that $K(\bm{Q})=\langle K_1(\bm{Q}),\cdots,K_C(\bm{Q})\rangle$, where $K_l(\bm{Q})=\sum_{i=1}^R \mathbb{I}[|Q_i|=l]$. (That is, $K_l(\bm{Q})$ counts the number of slots in which Eve leaves $l$ channels unjammed.) Hence, given $K(\bm{Q})$, we can use a function $L:\mathbb{N}^C\to\mathbb{N}$ to compute $|\bm{Q}|$. In particular, $L(\bm{z})=\sum_{l=1}^{C}z_l\cdot l$ and $L(K(\bm{Q}))=|\bm{Q}|$.

Now, consider another execution, for any $j\geq 1$ and $l\in[C]$, let $Y_{u,l}^{(j)}$ be an indicator random variable taking value 1 iff $u$ hears silence in a slot in which the jamming result is $[l]$ and the behavior of nodes is determined by the $j$\textsuperscript{th} chunk of $\bm{T}_l$ directly.
Define $Y_u(\bm{z})=\sum_{l=1}^{C}\sum_{j=1}^{z_l}Y_{u,l}^{(j)}$ for any $\bm{z}\in \mathcal{Z}$. By definition, it is easy to verify $X_u(\bm{Q})=Y_u(K(\bm{Q}))$ for any $\bm{Q}$. That is, for any $\bm{Q}$, values of $X_u$ and $Y_u$ are identical. The significance of this observation is that it relates $X_u$---which counts the number of silent slots $u$ heard---to $Y_u$, and $Y_u$ can be interpreted as the sum of independent random variables once $\bm{z}$ is fixed.

At this point, we are ready to prove part (2). Notice $\mathbb{E}[X_u]/\Delta_u=\mathbb{E}[X_v]/\Delta_v=\alpha\cdot|\bm{Q}|/(RC)$ where $\alpha=\prod_{w\in V}(1-p_w/C)$. Also, it is easy to verify $\mathbb{E}[Y_u(\bm{z})]/\Delta_u=\mathbb{E}[Y_v(\bm{z})]/\Delta_v=\alpha\cdot L(\bm{z})/(RC)$. Let $\mathcal{Z}_1=\{\bm{z}\in\mathcal{Z}:L(\bm{z})\leq0.15RC/\alpha\}$. Then for $\bm{z}\in \mathcal{Z}_1$,  $\mathbb{E}[Y_u(\bm{z})]\leq 0.15\Delta_u$, further by a Chernoff bound, $\Pr(Y_u(\bm{z})>0.2\Delta_u)\leq \exp(-\Theta(i^3C))$. Similarly, for $\bm{z}\in\mathcal{Z}\setminus \mathcal{Z}_1$, $\Pr(Y_v(\bm{z})<0.1\Delta_v)\leq \exp(-\Theta(i^3C))$. Therefore, we can conclude $\Pr\left(X_u(\bm{Q})>0.2\Delta_u \wedge X_v(\bm{Q})<0.1\Delta_v \right) \leq \sum_{\bm{z}\in\mathcal{Z}_1}\Pr(Y_u(\bm{z})>0.2\Delta_u)+\sum_{\bm{z}\in\mathcal{Z}\setminus \mathcal{Z}_1}\Pr(Y_v(\bm{z})<0.1\Delta_v) \leq |\mathcal{Z}|\cdot \exp(-\Theta(i^3C))=\exp(-\Theta(i^3C))$.
\end{proof}

We now sketch the proof of Lemma \ref{lemma-multicastadvadp-pu-pv-bounded}. Denote the working probabilities of the current phase and the next phase as $p$ and $p'$. If $\eta_u\leq 2.5$ and $\eta_v\leq 2.5$, then $p'_u/p'_v=p_u/p_v$ and we are done. So assume $\eta_u> 2.5$. In such case, Claim \ref{claim-multicastadvadp-pu-pv-bounded} imply $|N^{c,step*}_u/\Delta^{step*}_u-N^{c,step*}_v/\Delta^{step*}_v|\leq \sqrt{{giC}/{(Rp_u)}}+\sqrt{{giC}/{(Rp_v)}}$ for any step $*$ in $\{1,2\}$, and $|N^{c,step3}_u/\Delta^{step3}_u-N^{c,step3}_v/\Delta^{step3}_v|\leq 2\sqrt{{giC}/{(Rp_{step3})}}$. This further suggests ${p'_u}/{p'_v}\leq({p_u}/{p_v})\cdot 2^{1/bi}$, thus the lemma is proved.

\vspace{-2ex}\subparagraph*{Correctness.} This part shows \multicastadvadp enforces two nice properties. First, when some node halts, all nodes must have become \texttt{helper}. This property can be seen as a stronger version of Lemma \ref{lemma-multicastadp-halt-imply-informed}, which implies all nodes are informed before any node halts, since a node must have heard the message $m$ when becoming a \texttt{helper}. The second property, on the other hand, states when a node becomes \texttt{helper}, it also obtains a good estimate of $n$. This property helps to ensure nodes can stop execution at the right time.


\begin{lemma}[``halt-imply-helper'' property]\label{lemma-multicastadvadp-halt-imply-helper}
The probability that some node has stopped execution while some other node has not become \texttt{helper} is at most $n^{-\Omega(1)}$.
\end{lemma}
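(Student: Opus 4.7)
The plan is to combine the bounded difference property (Lemma \ref{lemma-multicastadvadp-pu-pv-bounded}) with a coupling-based concentration argument for the number of step-two message hearings. Suppose $u$ halts at the end of phase $j'$ of super-epoch $i$, having become \texttt{helper} in some earlier phase $j^*$ with $n_u=C/((p_u^{i,j^*})^2\cdot 2^i)$. The halt criterion and $p_u^{i,j^*}\geq p_u^{i,0}=C/2^i$ together give $p_u^{i,j'+1}\geq 64\sqrt{C/(2^i n_u)}=64\, p_u^{i,j^*}\geq 64\cdot C/2^i$, and then Lemma \ref{lemma-multicastadvadp-pu-pv-bounded} forces $p_v^{i,j'+1}\geq 32\cdot C/2^i$ for every other node $v$. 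Since $p_v^{i,j+1}=p_v^{i,j}\cdot 2^{\max\{0,\eta_v^{i,j}-2.5\}}$, this growth requires $\sum_{j\leq j'}\max\{0,\eta_v^{i,j}-2.5\}\geq 5$, so there exists at least one phase $j$ of the super-epoch in which $\eta_v^{i,j}\geq 2.5$; in any such phase the first helper condition $\eta_v^{i,j}\geq 2.4$ is trivially satisfied.

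Therefore, if $v$ is still in \texttt{init} at the end of phase $j'$, then in each such ``good-for-$v$'' phase the second helper condition must have failed, i.e., $v$ heard $m$ fewer than $ai^3$ times during step two. The core task will be to show that for fixed $v$ and a fixed phase, the joint event $\{\eta_v^{i,j}\geq 2.5\}\cap\{v\text{ hears }m\text{ fewer than }ai^3\text{ times in step two}\}$ has probability at most $\exp(-\Theta(i^3C))$; a union bound over $v$, the $O(i)$ phases of the super-epoch, and the $O(\lg T)$ relevant super-epoch indices then yields the claimed $n^{-\Omega(1)}$ bound.

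To establish the per-phase bound I would proceed in two stages. First, conditioning on $\eta_v^{i,j}\geq 2.5$ and invoking Claim \ref{claim-multicastadvadp-pu-pv-bounded} in the contrapositive, Eve must leave at least a constant fraction of channels unjammed in at least a constant fraction of slots throughout steps one and two, since otherwise the expected silent-slot fraction $\alpha\,|\bm{Q}|/(RC)$ is too small and $\eta_v^{i,j}$ cannot reach $2.5$ except with probability $\exp(-\Theta(i^3C))$. Second, under this weak-jamming event, an epidemic-broadcast argument analogous to Lemma \ref{lemma-multicastadp-fast-bcst} shows that all active nodes are informed by the end of step one, except with probability $\exp(-\Theta(i^3))$. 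Finally, a coupling in the spirit of Claim \ref{claim-multicastadvadp-pu-pv-bounded}---grouping step-two jamming-result vectors into the $\binom{R_i+C}{C}\leq(2R_i)^C$ categories determined by the multiset of unjammed-channel counts and applying a Chernoff bound within each category to $v$'s conditionally independent message-hearing indicators---lower-bounds $v$'s step-two message-hearing count by $ai^3$ except with probability $\exp(-\Theta(i^3C))$.

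The main obstacle will be the final coupling: unlike the silence count from Section \ref{sec-multicastadp-analysis}, the message-hearing count depends on exactly one informed node sending on $v$'s channel, so the relevant indicators involve both the informed set $M$ and the adaptive jamming schedule. The argument hinges on the algorithmic restriction that $M_v$ can change only in step one, which freezes $M=V$ throughout step two and allows the chunk-based coupling to be replayed verbatim with message-hearing expectations in place of silence expectations; the blow-up factor $(2R_i)^C=2^{O(Ci)}$ from summing over categories is then comfortably absorbed by the $\exp(-\Theta(i^3C))$ per-category bound, since $R_i=a\cdot 2^i\cdot i^3$.
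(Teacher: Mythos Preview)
Your argument has a real gap at the ``core task'' step. You locate \emph{some} phase $j$ with $\eta_v^{i,j}\ge 2.5$ and then want to show that in this phase $v$ hears $m$ at least $ai^3$ times in step two. But the epidemic-broadcast argument you invoke for step one, and the message-hearing concentration for step two, both require that all working probabilities be $\Omega(\sqrt{C/(2^i n)})$ in that phase: this is what makes $p_w\cdot P_V=\Omega(1/2^i)$, which in turn makes the per-slot success probability large enough that $R_i=a\cdot 2^i i^3$ slots suffice. Your growth bound only gives $p_v^{i,j'+1}\ge 32\cdot C/2^i$; for $i>\lg(nC)$ this is far below $\sqrt{C/(2^i n)}$, and at the \emph{first} phase where $\eta_v\ge 2.5$ the probabilities may still be near the initial $C/2^i$. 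With such small $p_w$, the expected number of step-two messages $v$ hears is only $\Theta(R_i p_v\cdot P_V)=\Theta(ai^3\cdot nC/2^i)=o(ai^3)$, so the claimed $\exp(-\Theta(i^3C))$ bound cannot hold. (A minor side issue: your identity $64\sqrt{C/(2^i n_u)}=64\,p_u^{i,j^*}$ assumes $u$ became \texttt{helper} in the same super-epoch $i$, which need not be the case.)

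The paper avoids this by working at the halting phase $(i,j)$ itself rather than at an earlier phase for $v$. It first invokes the estimate-upper-bound lemma to get $n_u\le 4n$ w.h.p., so the halt criterion yields $p_u^{i,j}\ge 16\sqrt{2}\sqrt{C/(2^i n)}$; by bounded difference every node then has $p_w^{i,j}\ge 8\sqrt{2}\sqrt{C/(2^i n)}$. With this lower bound in hand, Lemma~\ref{lemma-multicastadvadp-pu-raise-and-msg} applies directly: the halt forces $\mathcal{E}_R^{i,j}$, hence $\Pr(\mathcal{E}_R^{i,j}\wedge\mathcal{E}_M^{i,j})\le\exp(-\Theta(i^2))$, and the $\eta$-closeness lemma gives $\eta_v^{i,j}\ge 2.4$ for every $v$. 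So every $v$ satisfies both \texttt{helper} conditions at the very phase where $u$ halts. Your approach could be repaired by (i) using $n_u\le 4n$ instead of your weaker bound, and (ii) restricting attention to the \emph{last} phase with $\eta_v>2.5$ so that $p_v$ there inherits the high value of $p_v^{i,j'+1}$; but once you do that you have essentially reproduced the paper's argument.
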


\begin{lemma}[``good-estimate'' property]\label{lemma-multicastadvadp-good-estimate}
For each node $u$, the probability that $u$ becomes \texttt{helper} with $n_u<n/256$ or $n_u>4n$ is at most $n^{-\Omega(1)}$.
\end{lemma}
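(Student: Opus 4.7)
The plan is to fix a node $u$, a super-epoch $i\geq I_b$, and a phase $j$, bound the probability that $u$ becomes \texttt{helper} in phase $(i,j)$ with $n_u\notin[n/256,4n]$, and then union-bound over $(i,j)$. Writing $p:=p_u^{i,j}$, the bad estimate corresponds to $p<\tfrac12\sqrt{C/(n\cdot 2^i)}$ (giving $n_u>4n$) or $p>16\sqrt{C/(n\cdot 2^i)}$ (giving $n_u<n/256$); the two directions are handled separately.

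The direction $n_u>4n$ is driven by helper-condition~(b). Lemma~\ref{lemma-multicastadvadp-pu-pv-bounded} gives $p_v\leq 2p$ for every active $v$, so the expected number of times $u$ hears $m$ in step two is at most $R_i\cdot p\cdot\sum_{v\in M}(p_v/C)\leq 2R_inp^2/C$, which is below $ai^3/2$ whenever $p<\tfrac12\sqrt{C/(n\cdot 2^i)}$. To turn this expectation into a concentration statement against the adaptive adversary, I would re-use the coupling behind Claim~\ref{claim-multicastadvadp-pu-pv-bounded}: partition the step's jamming-result vectors into the $\binom{R_i+C}{C}$ categories indexed by the count vector $K(\bm Q)$, write the hearing count as a sum of independent random variables once $K(\bm Q)$ is fixed, apply Chernoff within each category, and union-bound. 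This yields a per-phase failure probability of $\exp(-\Theta(i^3C))$, well below what the outer union bound must tolerate.

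The direction $n_u<n/256$ combines both helper conditions with the multiplicative growth cap $p_u^{i,j+1}/p_u^{i,j}\leq 2^{\max(0,\eta_u-2.5)}\leq\sqrt 2$. Let $j^*$ be the first phase in super-epoch $i$ meeting both helper conditions; by minimality at least one of them failed in phase $j^*-1$. If~(b) failed there, the coupling-concentration from direction~1 applied to the opposite tail forces $\mathbb{E}[\text{hearing count}\mid p_u^{i,j^*-1}]=O(ai^3)$ w.h.p., hence $(p_u^{i,j^*-1})^2=O(C/(n\cdot 2^i))$; the growth cap then gives $p_u^{i,j^*}\leq\sqrt 2\cdot O(1)\cdot\sqrt{C/(n\cdot 2^i)}$, comfortably inside the factor-$16$ window. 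If instead~(a) failed in phase $j^*-1$, then $\eta_u^{i,j^*-1}<2.4<2.5$ leaves $p$ unchanged into phase $j^*$, and I would argue directly about $p=p_u^{i,j^*}$ using (a) in phase $j^*$: coupling-based concentration of each $\eta_u^{step*}$ together with Lemma~\ref{lemma-multicastadvadp-pu-pv-bounded} translates $\eta_u^{i,j^*}\geq 2.4$ into $\alpha:=\prod_v(1-p_v/C)=\Omega(1)$ and $\alpha_3=(1-C/2^i)^n=\Omega(1)$, forcing $p=O(C/n)$ and $2^i=\Omega(nC)$; combining these with (b) in phase $j^*$ (via the matching lower-bound expression for $\mathbb{E}[\text{hear}]$) contains $p$ within the factor-$16$ window of $\sqrt{C/(n\cdot 2^i)}$.

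The hard part is pinning down the $O(1)$ constants in this second sub-case uniformly in $n$, $C$, and $i$: the bound $p=O(C/n)$ from~(a) alone is only tight enough when $2^i=O(n/C)$, so the proof needs the combined (a)+(b) bookkeeping (or, alternatively, an inductive argument that the growth cap already prevents $p$ from overshooting the window in earlier phases) to handle the regime $2^i\gg n/C$. Assembling these pieces and taking a union bound over the $O(bi)$ phases per super-epoch and all super-epochs $i\geq I_b$ gives the claimed $n^{-\Omega(1)}$ probability.
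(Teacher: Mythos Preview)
Your treatment of the $n_u>4n$ direction is essentially the paper's Lemma~\ref{lemma-multicastadvadp-estimate-upper-bound}. One simplification: since you only need an \emph{upper} bound on the number of messages heard, no coupling is needed---just assume Eve does no jamming in step two, which maximizes the count and makes the slot indicators independent.

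The $n_u<n/256$ direction, however, has a genuine gap. Your case analysis on which helper condition failed at phase $j^*-1$ does not yield the required upper bound on $p_u^{i,j^*}$:
\begin{itemize}
\item \textbf{If (b) failed at $j^*-1$.} You argue that a low message count forces $(p_u^{i,j^*-1})^2=O(C/(n2^i))$. But the expected count in step two factors as $R_ip_u\cdot p_u^m\cdot(\text{unjammed fraction})$; a small observed count (and hence, by concentration, a small expectation) can equally well come from heavy jamming of step two in phase $j^*-1$, with $p_u$ arbitrarily large. Eve is adaptive and nothing prevents her from jamming phase $j^*-1$ heavily while leaving phase $j^*$ clean, so you cannot isolate the $p_u$ factor here.
\item \textbf{If (a) failed at $j^*-1$.} As you yourself note, condition~(a) at phase $j^*$ only yields $\sum_v p_v/C=O(1)$, i.e.\ $p_u=O(C/n)$, which is far too weak when $2^i\gg nC$. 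Bringing in condition~(b) at $j^*$ does not help: hearing $\geq ai^3$ messages gives only a \emph{lower} bound $p_u=\Omega(\sqrt{C/(n2^i)})$, never an upper bound. So the ``combined (a)+(b) bookkeeping'' you allude to cannot close the factor-$16$ window.
\end{itemize}

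The paper's argument is precisely the trajectory idea you mention only parenthetically. For $i>\lg(nC)$ one has $p_u^{i,0}=C/2^i<16\sqrt{C/(2^in)}$, and by Corollary~\ref{cor-multicastadvadp-pu-inc-bounded} each phase raises $p_u$ by at most $\sqrt 2$. Hence if $p_u$ ever reaches $16\sqrt{C/(2^in)}$ while $u$ is still \texttt{init}, there is a crossing phase $j'$ with $p_u^{i,j'}\in[8\sqrt 2,\,16)\sqrt{C/(2^in)}$ at which $p_u$ was raised (event $\mathcal{E}_R$). Lemma~\ref{lemma-multicastadvadp-pu-raise-and-msg} applied at $j'$ (its hypothesis $p_v\geq 8\sqrt{C/(2^in)}$ holds for all $v$ by Lemma~\ref{lemma-multicastadvadp-pu-pv-bounded}) shows that, with probability $1-\exp(-\Theta(i^2))$, every node---in particular $u$---heard $\geq ai^3$ messages in step two of phase $j'$. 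Since $\mathcal{E}_R$ gives $\eta_u^{i,j'}>2.5>2.4$, both helper conditions hold at $j'$, so $u$ leaves \texttt{init} there, a contradiction. This is packaged as Claim~\ref{claim-multicastadvadp-pu-high-implies-helper}; combined with Lemma~\ref{lemma-multicastadvadp-no-helper-in-early-epoch} for $i\leq\lg(nC)$ and Lemma~\ref{lemma-multicastadvadp-halt-imply-helper} to guarantee all nodes are still active, a union bound over $i$ and $j$ finishes the proof. The point is that Lemma~\ref{lemma-multicastadvadp-pu-raise-and-msg} already couples the ``$p_u$ raised'' event to ``enough messages heard'' \emph{within the same phase}, eliminating the cross-phase jamming ambiguity that breaks your argument.
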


The following lemma is helpful for proving both of the above two properties. Roughly speaking, this lemma states that if in an $(i,j)$-phase some node $u$ has working probability $p_u=\Theta(\sqrt{C/(2^in)})$ and decides to raise $p_u$ at the end of the phase, then all nodes must have heard the message many times in step two of the phase.

\begin{lemma}\label{lemma-multicastadvadp-pu-raise-and-msg}
Consider an $(i,j)$-phase where $i>\lg{n}$. Assume at the beginning of the phase: $(\sum_{u\in V}{p_u})/C\leq 1/2$, all nodes are active and their working probabilities are within a factor of two, and the working probability of each node is at least $8\sqrt{{C}/{(2^in)}}$. Then, with probability at most $\exp(-\Theta(i^2))$, these two events both occur: (a) some node raises its working probability at the end of the phase; and (b) some node hears message $m$ less than $ai^3$ times in step two.
\end{lemma}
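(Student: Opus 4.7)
The plan is to show that event (a) forces Eve to be weak throughout the phase, which in turn first lets epidemic broadcast succeed in step one (informing every active node) and then guarantees enough $m$-hearings per node in step two. Since $p_u^{i,j+1}=p_u^{i,j}\cdot 2^{\max\{0,\eta_u^{i,j}-2.5\}}$, event (a) means $\eta_u^{i,j}>2.5$ for some node $u$. Applying Claim \ref{claim-multicastadvadp-pu-pv-bounded}(1) to this $u$ in each of the three steps gives $N_u^{step*,c}/\Delta_u^{step*}\leq 1$ with probability at least $1-\exp(-\Theta(i^3C))$, so each of the three summands of $\eta_u^{i,j}$ must exceed $1/2>0.2$. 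Then, applying Claim \ref{claim-multicastadvadp-pu-pv-bounded}(2) with $u$ as the ``large'' node and each other active $v$ as the ``small'' node, and union-bounding over the three steps and over $n\leq 2^i$ nodes, upgrades the bound to $N_v^{step*,c}/\Delta_v^{step*}\geq 0.1$ for every active $v$ and every step $*\in\{1,2,3\}$. Informally, Eve is forced to leave a constant fraction of the slot-channel pairs unjammed in every step, from every node's viewpoint.

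Next I argue that step one accomplishes broadcast. The uniform lower bound $N_v^{step1,c}/\Delta_v^{step1}\geq 0.1$ is essentially the ``weakly jammed'' event underlying Lemma \ref{lemma-multicastadp-fast-bcst}: a constant fraction of the $R_i$ slots have a constant fraction of channels unjammed. Combined with the hypotheses $p_v\geq 8\sqrt{C/(2^in)}$ and $\sum_{w\in V}p_w/C\leq 1/2$, the epidemic-broadcast analysis of Lemma \ref{lemma-multicastadp-fast-bcst} goes through with only cosmetic changes; the one modification is that working probabilities differ between nodes, but by at most a factor of two, which is absorbed into the hidden constants. Because $R_i=a\cdot 2^i\cdot i^3$ is considerably longer than the corresponding epoch in \multicastadp, the failure probability is comfortably $\exp(-\Theta(i^2))$ after union-bounding over nodes. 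Therefore every active node ends step one with $M_v=true$, and hence broadcasts $m$ (rather than $\pm$) whenever it chooses to send during step two.

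Finally, I bound the number of $m$-hearings in step two. For a fixed active $v$, writing $\alpha_{-v}=\prod_{w\neq v}(1-p_w/C)$ and letting $\bar q$ denote the average unjammed-channel fraction during step two, the expected number of slots in which $v$ hears $m$ is $\Theta\bigl(R_i\cdot p_v\cdot\bar q\cdot\alpha_{-v}\cdot\sum_{w\neq v}p_w/C\bigr)$. The step-two uniform silent-slot lower bound gives $\bar q\,\alpha_{-v}=\Omega(1)$; the assumption $\sum_w p_w/C\leq 1/2$ together with bounded difference makes $\sum_{w\neq v}p_w/C$ of order $np_v/C$; and $p_v\geq 8\sqrt{C/(2^in)}$ forces $R_i\cdot p_v\cdot np_v/C\geq \Omega(a\cdot i^3)$ with a large hidden constant, so the expectation comfortably exceeds $a\cdot i^3$. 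The main obstacle is concentration: Eve is adaptive and the ``hears $m$'' indicator couples several nodes' random choices. I plan to reuse the coupling framework of Claim \ref{claim-multicastadvadp-pu-pv-bounded}(2): keep the same chunk-lists $\bm{T}_0,\ldots,\bm{T}_C$ and bijections $\Psi_{Q_i}$, but replace the silent-slot indicator by the $m$-hearing indicator---still a local function of one chunk and one slot's jamming pattern---and sum the per-$K(\bm Q)$ Chernoff failures over the $\binom{R_i+C}{C}\leq(2R_i)^C$ possible $K$-vectors. The per-category failure $\exp(-\Theta(i^3C))$ absorbs both this $(2R_i)^C$ factor and the final union bound over the $n$ nodes, yielding the claimed $\exp(-\Theta(i^2))$ bound.
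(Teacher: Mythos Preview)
Your overall plan—event (a) forces Eve to be weak, which then makes step one's epidemic broadcast succeed and step two's message count large—is exactly the paper's strategy. The gap is in how you bridge from event (a) to the precondition needed for the broadcast and message-count analyses.

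Via Claim~\ref{claim-multicastadvadp-pu-pv-bounded}(1)(2) you correctly deduce that, off a set of probability $\exp(-\Theta(i^3C))$, event (a) forces $N_v^{step*,c}/\Delta_v^{step*}\geq 0.1$ for every active $v$ and every step. But this is an event about nodes' \emph{observations}—a function of both Eve's jamming \emph{and} the nodes' own random bits—whereas the coupling behind Lemma~\ref{lemma-multicastadp-fast-bcst} needs an event about Eve's jamming alone: it works by fixing the jamming vector and letting the nodes' randomness remain fresh. Conditioning on ``all nodes heard many silent slots'' biases that very randomness, so you cannot say this ``is essentially the weakly-jammed event'' and invoke Lemma~\ref{lemma-multicastadp-fast-bcst} as-is. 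The same issue recurs when you use the step-two silent-slot bound to conclude $\bar q\,\alpha_{-v}=\Omega(1)$ for the message-count expectation: $\bar q$ is a property of $\bm Q$, and you have not linked the observation $N_v^{step2,c}\geq 0.1\Delta_v^{step2}$ back to a lower bound on $|\bm Q|$.

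The paper closes this gap by introducing explicit weak-jamming events $\mathcal{E}_1=\mathcal{E}_{step1}^{\geq 0.25}(\geq 0.25)$ and $\mathcal{E}_2=\mathcal{E}_{step2}^{\geq 0.25}(\geq 0.25)$ about Eve's behavior only, and decomposing
\[
\Pr(\mathcal{E}_M\mathcal{E}_R)\leq\Pr(\mathcal{E}_{un}\mathcal{E}_1)+\Pr(\overline{\mathcal{E}_{un}}\mathcal{E}_M\mathcal{E}_2)+\Pr\bigl(\mathcal{E}_R\wedge(\overline{\mathcal{E}_1}\vee\overline{\mathcal{E}_2})\bigr).
\]
The third term is the piece missing from your argument: if $\overline{\mathcal{E}_1}$ holds then $\mathbb{E}[N_u^{step1,c}]\leq\tfrac{7}{16}\Delta_u^{step1}$, so a Lemma~\ref{lemma-multicastadp-fast-halt}-style coupling shows $N_u^{step1,c}>\tfrac12\Delta_u^{step1}$ is $\exp(-\Theta(i^3C))$-unlikely, and together with Claim~\ref{claim-multicastadvadp-pu-pv-bounded}(1) on the other two summands this rules out $\eta_u>2.5$. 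With $\mathcal{E}_1,\mathcal{E}_2$ available, the first two terms are handled exactly by the couplings you sketched. Note that your detour to ``high silent counts for \emph{every} node'' via Claim~\ref{claim-multicastadvadp-pu-pv-bounded}(2) is then unnecessary: only the raising node's $\eta$ matters for establishing incompatibility with strong jamming.
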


\begin{proof}[Proof sketch]
Let $\mathcal{E}_{R}$ be the event that some node raises its working probability by the end of the phase, $\mathcal{E}_{M}$ be the event that some node hears $m$ less than  $ai^3$ times during step two, $\mathcal{E}_{un}$ be the event that some node is still uninformed by the end of step one. Moreover, let $\mathcal{E}_1$ (respectively, $\mathcal{E}_2$) be the event that $\mathcal{E}_{step1}^{\geq 0.25}(\geq 0.25)$ (respectively, $\mathcal{E}_{step2}^{\geq 0.25}(\geq 0.25)$) occurs during step one (respectively, step two) of the phase. We know:

\begin{align*}
\Pr(\mathcal{E}_M\mathcal{E}_R)\leq &\Pr(\mathcal{E}_M\wedge(\mathcal{E}_1\wedge\mathcal{E}_2))+\Pr(\mathcal{E}_R\wedge\overline{(\mathcal{E}_1\wedge\mathcal{E}_2)})\\
\leq &\Pr(\mathcal{E}_{un}\mathcal{E}_1)+\Pr(\overline{\mathcal{E}_{un}}\mathcal{E}_M\mathcal{E}_2)+\Pr(\mathcal{E}_R \wedge (\overline{\mathcal{E}_1}\vee \overline{\mathcal{E}_2}))
\end{align*}

The reminder of the proof bounds the three probabilities in the last line.

\emph{Claim I: $\Pr(\mathcal{E}_{un}\mathcal{E}_1)\leq\exp(-\Theta(i^2))$. Proof sketch:} If $\mathcal{E}_1$ happens, then step one is not heavily jammed. Thus every node will be informed at the end of step one due to the effectiveness of the epidemic broadcast scheme, much like the proof of Lemma \ref{lemma-multicastadp-fast-bcst}.

\emph{Claim II: $\Pr(\overline{\mathcal{E}_{un}}\mathcal{E}_{M}\mathcal{E}_2)\leq\Pr(\mathcal{E}_{M}\mathcal{E}_2|\overline{\mathcal{E}_{un}})\leq\exp(-\Theta(i^3))$. Proof sketch:} Fix a node $u$, and assume all nodes know $m$ at the beginning of step two. Similar to the proof of Lemma \ref{lemma-multicastadp-fast-halt} (except that we focus on message slots and apply the coupling argument accordingly), the probability that $u$ hears $m$ less than $ai^3$ times during a step two when $\mathcal{E}_2$ occurs is at most $\exp(-\Theta(i^3))$. Take a union over all nodes and the claim is proved.

\emph{Claim III: $\Pr(\mathcal{E}_{R}\wedge(\overline{\mathcal{E}_1}\vee \overline{\mathcal{E}_2}))\leq\exp(-\Theta(i^3C))$. Proof sketch:} Notice that $\Pr(\mathcal{E}_{R}\wedge(\overline{\mathcal{E}_1}\vee\overline{\mathcal{E}_2})) \leq \Pr(\mathcal{E}_{R}\overline{\mathcal{E}_1})+\Pr(\mathcal{E}_{R}\overline{\mathcal{E}_2}) \leq \sum_{u\in V}\Pr(\mathcal{E}_{u,1}\overline{\mathcal{E}_1})+\sum_{u\in V}\Pr(\mathcal{E}_{u,2}\overline{\mathcal{E}_{2}})+4\sum_{u\in V}\exp(-\Theta(i^3C))$. Here, $\mathcal{E}_{u,1}$ (resp., $\mathcal{E}_{u,2}$) is the event that node $u$ hears silence more than $\Delta_{u}^{step1}/2$ (resp., $\Delta_{u}^{step2}/2$) times in step one (resp., step two) of the phase, and the last inequality is due to part (1) of Claim \ref{claim-multicastadvadp-pu-pv-bounded}. 
When $\overline{\mathcal{E}_1}$ occurs, the expected number of silent slots heard by $u$ in step one is at most $7/16\Delta_{u}^{step1}$. Again via a coupling argument, we know $\Pr(\mathcal{E}_{u,1}\overline{\mathcal{E}_1})\leq \exp(-\Theta(i^3))$, and bounding $\Pr(\mathcal{E}_{u,2}\overline{\mathcal{E}_2})$ is similar.

\end{proof}

At this point, to prove the ``halt-imply-helper'' property, we only need to combine the above lemma with the following two observations. First, nodes are unlikely to become \texttt{helper} in early super-epochs, as the sending probabilities in these super-epochs are too high and nodes cannot hear enough silent slots. (See Lemma \ref{lemma-multicastadvadp-no-helper-in-early-epoch} in Appendix \ref{sec-app-multicastadvadp-proof} on page \pageref{lemma-multicastadvadp-no-helper-in-early-epoch}.) Second, when nodes' working probabilities in step two are too small, they will also not become \texttt{helper} as the number of messages heard is not enough. (See Lemma \ref{lemma-multicastadvadp-estimate-upper-bound} in Appendix \ref{sec-app-multicastadvadp-proof} on page \pageref{lemma-multicastadvadp-estimate-upper-bound}.) Notice, this second observation also leads to an upper bound on the estimates of $n$.

To prove the ``good-estimate'' property, what remains is to show a lower bound for $n_u$. To that end, we show if all nodes are alive and $u$'s working probability is close to the ideal value $\Theta(\sqrt{C/(2^in)})$, then $u$ must have become \texttt{helper} already. (See Claim \ref{claim-multicastadvadp-pu-high-implies-helper} in Appendix \ref{sec-app-multicastadvadp-proof} on page \pageref{claim-multicastadvadp-pu-high-implies-helper}.) By then, a lower bound of $n_u$ can be derived as a simple corollary of this claim.

\vspace{-2ex}\subparagraph*{Termination.} This part shows nodes will quickly become \texttt{helper} and then halt once jamming from Eve becomes weak. (In other words, Eve cannot delay nodes unless she spends a lot of energy.) We begin by classifying phases and super-epochs into \emph{weakly jammed} ones and \emph{strongly jammed} ones. Specifically, call a phase weakly jammed if event $\mathcal{E}^{\geq 0.95}(\geq 0.95)$ occurs for all three steps of the phase. On the other hand, if event $\mathcal{E}^{>0.05}(<0.95)$ occurs for any of the three steps, then the phase is strongly jammed. Call a super-epoch weakly jammed if at least half of the phases in the super-epoch are weakly jammed, otherwise the super-epoch is strongly jammed.

We first show, if a node's working probability has not reached the ideal value, then this probability will increase by some constant factor in a weakly jammed phase.

\begin{lemma}\label{lemma-multicastadvadp-pu-inc-in-weak-jam-phase}
Fix an $(i,j)$-phase where $i\geq\lg(nC)+6$, and fix an active node $u$ satisfying $p^{i,j}_u<C/(128n)$. By the end of the phase, the following two events happen simultaneously with probability at most $\exp(-\Omega(iC))$: (a) the phase is weakly jammed; and (b) $p^{i,j+1}_u< p_u^{i,j}\cdot 2^{(1/10)}$.
\end{lemma}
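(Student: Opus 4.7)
The plan is to show that, conditional on the phase being weakly jammed, $\eta_u^{i,j}=\sum_{*\in\{1,2,3\}}N_u^{step*,c}/\Delta_u^{step*}$ is at least $2.6$ with all but $\exp(-\Omega(iC))$ probability; this immediately gives $p_u^{i,j+1}=p_u^{i,j}\cdot 2^{\max\{0,\eta_u^{i,j}-2.5\}}\geq p_u^{i,j}\cdot 2^{1/10}$, which is the negation of event (b). Concretely, I will argue that each of the three summands $\eta_u^{step*}$ is at least $0.87$ with high probability, so their sum is at least $2.61$.

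The first step is to control $\alpha=\prod_{v\in V}(1-p_v/C)$ from below. Since $i\geq\lg(nC)+6>\lg n$, Lemma \ref{lemma-multicastadvadp-pu-pv-bounded} gives $p_v\leq 2p_u$ for every active $v$, outside a failure event of probability $\exp(-\Theta(iC))$ which the final bound absorbs. Combined with $p_u<C/(128n)$ this yields $\sum_v p_v\leq n\cdot 2p_u<C/64$, so $\alpha>1-1/64>0.98$. The second step is to reuse the chunk-categorization coupling from the proof of Claim \ref{claim-multicastadvadp-pu-pv-bounded}, applied independently to each of the three steps of the phase. For a single step, partition jamming-results vectors $\bm{Q}$ by $K(\bm{Q})=\langle K_1(\bm{Q}),\ldots,K_C(\bm{Q})\rangle\in\mathcal{Z}$ and define $Y_u(\bm{z})=\sum_{l=1}^{C}\sum_{j=1}^{z_l}Y_{u,l}^{(j)}$ in the coupled execution so that $X_u(\bm{Q})=Y_u(K(\bm{Q}))$ pointwise. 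Within each category $\bm{z}$, $Y_u(\bm{z})$ is a sum of independent Bernoullis, with $\mathbb{E}[Y_u(\bm{z})]/\Delta_u^{step*}=\alpha\cdot L(\bm{z})/(R_iC)$ where $L(\bm{z})=\sum_l z_l\cdot l$; the total number of categories is $|\mathcal{Z}|=\binom{R_i+C}{C}\leq(2R_i)^C=\exp(O(iC))$.

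The third step is concentration. If the step is weakly jammed then the corresponding category satisfies $L(\bm{z})/(R_iC)\geq 0.95\cdot 0.95=0.9025$, so $\mathbb{E}[Y_u(\bm{z})]/\Delta_u^{step*}\geq 0.98\cdot 0.9025>0.884$. Because the initial working probability in super-epoch $i$ is $C/2^i$ and can only grow across phases, $\Delta_u^{step1},\Delta_u^{step2}\geq R_iC/2^i=ai^3C$, and $\Delta_u^{step3}=\Theta(ai^3C^2)$ is even larger. A multiplicative Chernoff bound then gives $\Pr[Y_u(\bm{z})<0.87\Delta_u^{step*}]\leq\exp(-\Omega(ai^3C))$ for each weakly-jammed category, and a union bound over the $\exp(O(iC))$ categories plus over the three steps yields a total failure probability of $\exp(-\Omega(i^3C))\leq\exp(-\Omega(iC))$, as required.

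The main technical obstacle is that the slack between $\mathbb{E}[\eta_u^{i,j}\mid\text{weakly jammed}]\geq 2.65$ and the required threshold $2.6$ is only a constant of about $0.05$, so Chernoff has to resolve a relative deviation of roughly $2\%$ in each step; this forces the per-category concentration to rely on $\Delta_u^{step*}=\Omega(i^3C)$, which in turn requires the lower bound $p_u\geq C/2^i$ coming from the super-epoch initialization. Once these constants are verified, Eve's adaptivity is dispatched by exactly the $K(\bm{Q})$ coupling of Claim \ref{claim-multicastadvadp-pu-pv-bounded}, and no genuinely new technique is needed.
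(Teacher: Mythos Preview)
Your proposal is correct and follows essentially the same route as the paper: invoke Lemma~\ref{lemma-multicastadvadp-pu-pv-bounded} to bound $P_V\leq n\cdot 2p_u/C<1/64$, deduce that in a weakly jammed step each summand $N_u^{step*,c}/\Delta_u^{step*}$ has expectation at least $0.95^2\cdot\alpha>0.87$, and then use the $K(\bm{Q})$-categorization coupling of Claim~\ref{claim-multicastadvadp-pu-pv-bounded} together with Chernoff and a union bound over the $\exp(O(iC))$ categories to conclude $\eta_u\geq 2.6$ with the required probability. One small point you glossed over: your bound on $\alpha=\prod_v(1-p_v/C)$ applies only to steps~1 and~2; for step~3 the relevant factor is $(1-p_{step3}^i/C)^n$, and you need the hypothesis $i\geq\lg(nC)+6$ (so that $np_{step3}^i/C=nC/2^i\leq 1/64$) rather than $p_u<C/(128n)$ to bound it---but this is exactly what the paper does and the conclusion is the same.
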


Building upon Lemma \ref{lemma-multicastadvadp-pu-inc-in-weak-jam-phase}, we can prove nodes' working probabilities will reach $\tilde{p_i}=1024\sqrt{{C}/{(2^in)}}$ in a weakly jammed super-epoch, as there are enough weakly jammed phases.

\begin{lemma}\label{lemma-multicastadvadp-pu-inc-in-weak-jam-epoch}
Fix a super-epoch $i\geq 34+\lg(nC)$ and a node $u$ that is active at the beginning of the super-epoch. The following two events happen simultaneously with probability at most $\exp(-\Omega(iC))$: (a) the super-epoch is weakly jammed; and (b) by the end of the super-epoch $u$ is still alive with a working probability less than $\tilde{p_i}$.
\end{lemma}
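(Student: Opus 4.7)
The plan is to lift Lemma~\ref{lemma-multicastadvadp-pu-inc-in-weak-jam-phase} from a single phase to an entire super-epoch by iterating it and counting the multiplicative progress made in each weakly jammed phase.

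First I would verify a compatibility fact: under the hypothesis $i\geq 34+\lg(nC)$, the target threshold $\tilde{p_i}=1024\sqrt{C/(2^i n)}$ lies below the cap $C/(128n)$ that appears in Lemma~\ref{lemma-multicastadvadp-pu-inc-in-weak-jam-phase}. Squaring, $\tilde{p_i}\leq C/(128n)$ reduces to $2^i\geq 2^{34}n/C$, i.e., $i\geq 34+\lg(n/C)$, which is implied by the hypothesis since $\lg(n/C)\leq\lg(nC)$. Consequently, any phase in which $p_u<\tilde{p_i}$ automatically meets the $p_u<C/(128n)$ precondition of that lemma.

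Next, for each phase $j\in\{0,\dots,bi-1\}$ I would define the bad event $B_j$: phase $j$ is weakly jammed and $u$ is active at its start with $p_u^{i,j}<C/(128n)$, yet $p_u^{i,j+1}<p_u^{i,j}\cdot 2^{1/10}$. Lemma~\ref{lemma-multicastadvadp-pu-inc-in-weak-jam-phase} gives $\Pr(B_j)\leq\exp(-\Omega(iC))$; a union bound over all $bi$ phases keeps the total at $bi\cdot\exp(-\Omega(iC))=\exp(-\Omega(iC))$. From here on I would condition on $\bigcap_j\overline{B_j}$.

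The rest is a monotone counting argument. Since the update rule $p_u^{i,j+1}=p_u^{i,j}\cdot 2^{\max\{0,\eta_u^{i,j}-2.5\}}$ is non-decreasing, once $p_u$ reaches $\tilde{p_i}$ it stays there; and if $u$ is no longer alive by the end of the super-epoch then event (b) already fails. So I may assume $u$ stays alive with $p_u<\tilde{p_i}$ throughout, which by the compatibility fact keeps Lemma~\ref{lemma-multicastadvadp-pu-inc-in-weak-jam-phase}'s precondition active; then under $\bigcap_j\overline{B_j}$ every weakly jammed phase multiplies $p_u$ by at least $2^{1/10}$. Starting from $p_u^{i,0}=C/2^i$, reaching $\tilde{p_i}$ demands $K=10\lg(\tilde{p_i}/p_u^{i,0})=100+5(i-\lg(nC))$ such jumps. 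Since (a) supplies at least $bi/2$ weakly jammed phases, the argument closes once $bi/2\geq 100+5(i-\lg(nC))$, which rearranges to $(b/2-5)i\geq 100-5\lg(nC)$; using $i\geq 34+\lg(nC)\geq 34$, this is satisfied whenever the global constant $b$ is large enough (e.g., $b\geq 16$). Hence, conditional on $\bigcap_j\overline{B_j}$, events (a) and (b) cannot co-occur, yielding the required $\exp(-\Omega(iC))$ bound. The only mildly delicate point is calibrating $b$ so that $bi/2$ dominates the $O(i)$ log-distance between $C/2^i$ and $\tilde{p_i}$ uniformly over all admissible $i$, including the borderline regime where $\lg(nC)$ is small; everything else is a routine amplification of the single-phase lemma via union bound and geometric counting.
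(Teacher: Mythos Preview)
Your proposal is correct and follows essentially the same strategy as the paper: verify $\tilde{p_i}\leq C/(128n)$ under the hypothesis on $i$, compute the number $\mu=100+5(i-\lg(nC))$ of $2^{1/10}$-increments needed to go from $C/2^i$ to $\tilde{p_i}$, observe that $bi/2>\mu$ for the paper's choice $b=20$, and then invoke Lemma~\ref{lemma-multicastadvadp-pu-inc-in-weak-jam-phase} phase by phase via a union bound. Your direct union bound over the $bi$ bad events $B_j$ is in fact a bit cleaner than the paper's version, which sums over all subsets $J\subseteq\{0,\dots,bi-1\}$ with $|J|\geq bi/2$ and thereby picks up an additional (but harmless) $2^{bi}$ factor before applying the phase lemma.
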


Lastly, we show that when a node's working probability reaches $\tilde{p_i}$, it will halt.

\begin{lemma}\label{lemma-multicastadvadp-large-pu-halt}
Fix a super-epoch $i\geq\lg{(nC)}-7$ and a node $u$. Assume when $u$ becomes \texttt{helper}, it is true that $n_u\geq n/256$ and all nodes are active at the beginning of that phase. Then, the probability that $u$ is active at the end of super-epoch $i$ with a working probability exceeding $\tilde{p_i}$ is at most $\exp(-\Theta(i))$.
\end{lemma}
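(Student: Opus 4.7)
The plan is to exploit two facts: once $u$ is a \texttt{helper} with $n_u \geq n/256$, its halt threshold $64\sqrt{C/(2^i n_u)}$ is at most $64\cdot 16\sqrt{C/(2^i n)} = \tilde{p_i}$; and any phase in which $p_u$ is raised forces $\eta_u^{i,j} > 2.5$, so condition (a) for becoming a \texttt{helper} ($\eta_u \geq 2.4$) holds automatically. Combining these, if $u$ is active at the end of super-epoch $i$ with $p_u > \tilde{p_i}$ then (i) $u$ cannot currently be a \texttt{helper}, for otherwise $p_u > \tilde{p_i}$ would already have triggered the halt rule; and (ii) there exists a first phase $j^{\star}$ in super-epoch $i$ with $p_u^{i,j^{\star}+1} > \tilde{p_i}$, well-defined because $p_u^{i,0} = C/2^i < \tilde{p_i}$ for $i \geq \lg(nC) - 7$. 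In phase $j^{\star}$, $u$ raised $p_u$ without becoming a \texttt{helper}, so condition (b)---hearing $m$ at least $a i^3$ times during step two---must have failed for $u$.

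The task therefore reduces to upper-bounding the probability that condition (b) fails in phase $j^{\star}$ while condition (a) holds, which is exactly what Lemma \ref{lemma-multicastadvadp-pu-raise-and-msg} controls, giving failure probability $\exp(-\Theta(i^2))$ once its preconditions are verified at the start of phase $j^{\star}$. The easy ones: ``all nodes active'' follows contrapositively from Lemma \ref{lemma-multicastadvadp-halt-imply-helper}, since if some node had halted then $u$ itself would be a \texttt{helper}, contradicting (i); the bounded-difference property comes from Lemma \ref{lemma-multicastadvadp-pu-pv-bounded}; and each $p_v \geq 8\sqrt{C/(2^i n)}$ follows because Claim \ref{claim-multicastadvadp-pu-pv-bounded}(1) caps per-phase growth at $\sqrt{2}$ (so $p_u^{i,j^{\star}} \geq \tilde{p_i}/\sqrt{2}$), and bounded difference then transfers this lower bound to every $p_v$. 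A union bound over the $bi$ phases of super-epoch $i$ in which $j^{\star}$ could fall yields total failure probability $bi\cdot \exp(-\Theta(i^2)) = \exp(-\Theta(i))$, matching the lemma statement.

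The main obstacle is verifying the remaining precondition $(\sum_v p_v)/C \leq 1/2$ at the start of phase $j^{\star}$: since $p_u^{i,j^{\star}}$ may be as large as $\tilde{p_i} = 1024\sqrt{C/(2^i n)}$ and the bounded-difference property only yields $p_v \leq 2 p_u$, one obtains $(\sum_v p_v)/C \leq 2048\sqrt{n/(2^i C)}$, and the assumption $2^i \geq nC/128$ only bounds this by $2048\sqrt{128}/C$, which is $\leq 1/2$ only for sufficiently large $C$. Handling the small-$C$ regime cleanly (either by a case split exploiting the extra slack one gets when $C$ is a constant and reducing to the analysis used for \multicastadp, or by re-deriving condition (b) directly from the epidemic-broadcast analysis in that regime without invoking the $\sum p_v / C \leq 1/2$ hypothesis) is the delicate part of the argument. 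A related minor technicality is pinning down $p_u^{i,j^{\star}}\in[\tilde{p_i}/\sqrt{2},\tilde{p_i}]$ with high probability via Claim \ref{claim-multicastadvadp-pu-pv-bounded}(1), so that the constant factors above are the correct ones to absorb.
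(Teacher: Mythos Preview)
Your overall plan is correct and essentially matches the paper's argument, which packages the core step into Claim~\ref{claim-multicastadvadp-pu-high-implies-helper} (``if all nodes are alive and $u$ is still \texttt{init}, then $p_u$ cannot be large'') and then does a short case split on $u$'s status at the last phase. Your argument re-derives that claim's content at the threshold $\tilde{p_i}$ rather than $16\sqrt{C/(2^in)}$; either choice works.

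However, what you flag as ``the main obstacle'' is not an obstacle at all. The precondition $(\sum_v p_v)/C \le 1/2$ is exactly the content of Lemma~\ref{lemma-multicastadvadp-Pv-bounded}, which holds for every phase of any super-epoch $i>\lg n$ with probability $1-\exp(-\Theta(i^3C))$, regardless of how large individual $p_v$'s are relative to $\sqrt{C/(2^in)}$. You should simply invoke that lemma (as the paper does inside Claim~\ref{claim-multicastadvadp-pu-high-implies-helper} via the event $\mathcal{E}_D$) and drop the attempted manual bound $2048\sqrt{n/(2^iC)}$ and the proposed small-$C$ case split entirely.

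There is also a small gap in your step (i). The \texttt{if/elif} structure means that a node becoming \texttt{helper} in the \emph{last} phase of the super-epoch never checks the halt rule in that phase, so ``$p_u>\tilde{p_i}$ would already have triggered the halt rule'' does not by itself exclude $u$ being a \texttt{helper} at the end. The paper closes this (its ``scenario~3''): when $u$ becomes \texttt{helper} in phase $j$, by definition $p_u^{i,j}=\sqrt{C/(2^in_u)}$, so with $n_u\ge n/256$ and Corollary~\ref{cor-multicastadvadp-pu-inc-bounded} one gets $p_u^{i,j+1}\le\sqrt{2}\cdot 16\sqrt{C/(2^in)}<\tilde{p_i}$, contradicting $p_u^{i,bi}>\tilde{p_i}$. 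Add this one-line observation and your (i) is complete.
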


\vspace{-3ex}\subparagraph*{Main theorem.} In this last part we sketch the proof of Theorem \ref{thm-multicastadvadp}.

Fix an arbitrary node $u$. The first step is to analyze how long $u$ remains active. Since super-epoch length increases geometrically, we only need to focus on the last super-epoch in which $u$ is active. Specifically, let $\hat{I}=34+\lg C+\max\{\lg C,\lg n\}$, let $r_i$ be the number of slots in super-epoch $i$, and let $sr_i=\sum_{k=\hat{I}+1}^{i}r_k$ be the total number of slots from super-epoch $\hat{I}+1$ to super-epoch $i$. It is easy to verify, for $i\geq \hat{I}+1$, $sr_{i}\leq 5r_{i-1}$. Define constant $\beta=2400$, and let random variable $L$ denote node $u$'s actual runtime starting from super-epoch $\hat{I}+1$. Combine Lemma \ref{lemma-multicastadvadp-halt-imply-helper}, \ref{lemma-multicastadvadp-good-estimate}, \ref{lemma-multicastadvadp-pu-inc-in-weak-jam-epoch}, \ref{lemma-multicastadvadp-large-pu-halt}, along with the fact that Eve spends less than $r_iC/\beta=bi/2\cdot 0.05^2R_iC$ in super-epoch $i$ implies super-epoch $i$ is weakly jammed, we can prove $L\leq 5\beta T/C$ holds w.h.p. Take a union bound over all nodes, we know every node will terminate within $(\sum_{k=I_b}^{\hat{I}}bk\cdot 3R_k)+5\beta T/C=\tilde{O}(T/C+nC+C^2)$ slots, w.h.p.

Next, we analyze the cost of node $u$. Let $F_{step1,2}$ (respectively, $F_{step3}$) be node $u$'s total actual cost during step one and step two (respectively, step three) starting from super-epoch $\hat{I}+1$. By an analysis similar to above, we show $F_{step1,2}\leq \Theta(\sqrt{T/n}\cdot\lg^2{T})$ and $F_{step3}\leq \Theta(C^2\cdot(\lg{(T)}+\hat{I})^5)$, w.h.p. As a result, we can conclude w.h.p.\ the energy cost of each node is bounded by $F_{step1,2}+F_{step3}+\sum_{k=I_b}^{\hat{I}}(bk\cdot 3R_k)=\tilde{O}(\sqrt{T/n}+nC+C^2)$.

Finally, notice the algorithm itself ensures a node must be informed when it halts.

\section{Lower Bounds}\label{sec-lower-bound}

In this section, we show our algorithms achieve (near) optimal time and energy complexity simultaneously against an adaptive adversary with budget $T$. The time complexity part is obvious: Eve can jam all channels during the first $T/C$ slots, so the $O(T/C)$ term in the runtime of \multicastadp and \multicastadvadp is asymptotically optimal.

Obtaining an energy complexity lower bound is more involved. To do so, the first step is a simulation argument. Specifically, given any fair multi-channel broadcast algorithm $\mathcal{A}_n$, we can devise a multi-channel 1-to-1 communication algorithm $\mathcal{A}_2$ (in which the goal is to let one node called Alice to send a message $m$ to another node Bob) that simulates $\mathcal{A}_n$ internally. To make the simulation feasible, we allow Alice and Bob to have multiple transceivers, so that in each slot they can operate on multiple channels, as well as send and listen simultaneously. In more detail, Alice in $\mathcal{A}_2$ mimics the source node in $\mathcal{A}_n$. As for Bob, he simulates the $n-1$ non-source nodes in $\mathcal{A}_n$. Particularly, in each slot, for each channel, if at least one non-source node listens, then Bob uses a transceiver to listen; if exactly one non-source node broadcasts, then Bob uses a transceiver to broadcast the unique message; and if at least two non-source nodes broadcast, then Bob uses a transceiver to broadcast noise. (Notice Bob can simultaneously listen and broadcast on a channel: he uses two transceivers and incurs two units of cost.) On the other hand, Eve's strategy for disrupting $\mathcal{A}_n$ and $\mathcal{A}_2$ is called $\mathcal{S}$: in each slot, for each channel, if the probability that the source node (resp., Alice) successfully transmits $m$ to some non-source node (resp., Bob) exceeds $1/T$, then Eve jams that channel.

Clearly, the above simulation is ``perfect'': an execution of $\mathcal{A}_2$ is identical to an execution of $\mathcal{A}_n$, assuming nodes and Eve use identical random bits in the two executions. To simplify presentation, we further assume $\mathcal{A}_n$ automatically stops once all nodes are informed, and $\mathcal{A}_2$ automatically stops once Bob is informed. This modification will not increase nodes' energy cost, thus will not affect the correctness of our lower bound. Now, observe that the success of $\mathcal{A}_2$ is a necessary condition for the success of $\mathcal{A}_n$, and Bob's energy cost will not exceed the sum of all non-source nodes' cost, hence the following lemma is immediate.

\begin{lemma}\label{lemma-lower-bound-1-to-1-sim-broadcast}
For any fair multi-channel broadcast algorithm $\mathcal{A}_n$, there exists a multi-channel 1-to-1 communication algorithm $\mathcal{A}_2$. If in $\mathcal{A}_n$ each node incurs an expected cost of $f(T)$ and $\mathcal{A}_n$ succeeds with probability $p$, then: (a) in $\mathcal{A}_2$ Alice and Bob incur an expected cost of at most $f(T)$ and $n\cdot f(T)$, respectively; (b) $\mathcal{A}_2$ succeeds with probability at least $p$.
\end{lemma}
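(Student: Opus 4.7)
The plan is to formalize the simulation argument already sketched in the prose. First, I would couple the two executions by supplying all honest parties with identical random bits—the source in $\mathcal{A}_n$ gets the same tape as Alice in $\mathcal{A}_2$, and the $n-1$ non-source nodes in $\mathcal{A}_n$ get tapes that Bob feeds into his $n-1$ internally simulated instances—and I would give Eve the same randomness in both executions. Because the jamming strategy $\mathcal{S}$ is a deterministic function of the public execution transcript together with the algorithm code (both of which are identical on the two sides), it suffices to argue by induction on the slot index that the per-channel outcomes (silence, unique message, or noise) match slot by slot. On each channel in each slot, any listener in $\mathcal{A}_n$ observes silence/message/noise according to how many non-source nodes send and whether Eve jams; Bob's listening transceiver on that channel, fed the aggregate produced by his own sending transceivers (which, by construction, emit exactly the corresponding silence/unique-message/noise pattern) plus Eve's jamming, observes exactly the same feedback. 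Hence each simulated non-source node inside Bob sees the same input as its counterpart in $\mathcal{A}_n$ and, sharing the same random bits, makes the same subsequent decisions.

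For claim~(a), Alice's channel accesses are in one-to-one correspondence with the source's channel accesses in $\mathcal{A}_n$, so her expected cost equals the source's, which by fairness of $\mathcal{A}_n$ is exactly $f(T)$. For Bob, I would observe that on every (slot, channel) pair he uses at most one listening transceiver (used precisely when some simulated non-source node listens there) plus at most one sending transceiver (used precisely when some simulated non-source node sends or is to produce noise there), so Bob's cost on that pair is at most the total number of non-source node actions on it. Summing over all (slot, channel) pairs bounds Bob's total cost by the sum of the non-source nodes' costs, whose expectation is $(n-1)f(T) \leq n\cdot f(T)$ by fairness.

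For claim~(b), the coupling implies that whenever $\mathcal{A}_n$ succeeds every non-source node holds $m$, hence at least one of Bob's simulated copies holds $m$; we let $\mathcal{A}_2$ declare success at exactly that moment (Bob can inspect his own simulation state). Thus the success event of $\mathcal{A}_2$ contains the success event of $\mathcal{A}_n$ under the coupling, giving $\Pr[\mathcal{A}_2 \text{ succeeds}] \geq p$. I expect the argument to be essentially bookkeeping; the only subtle point worth care is justifying that an adaptive Eve really does couple across the two executions, which is immediate from the fact that $\mathcal{S}$ depends only on public history and algorithm description—both identical by the inductive state-correspondence established in the first paragraph.
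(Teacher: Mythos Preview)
Your proposal is correct and follows essentially the same approach as the paper. The paper treats the lemma as ``immediate'' from the perfect-simulation construction (identical executions under identical random bits, so $\mathcal{A}_n$'s success implies $\mathcal{A}_2$'s, and Bob's cost is bounded by the sum of the non-source nodes' costs); you simply spell out the slot-by-slot coupling and the per-channel cost accounting more explicitly, which is fine.
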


What remains is an energy complexity lower bound for $\mathcal{A}_2$: with such a result, Theorem \ref{thm-cost-lower-bound} is immediate via simple reduction. Indeed, we are able to prove Theorem \ref{thm-cost-lower-bound-1-to-1}, an energy complexity lower bound for 1-to-1 communication in the multi-channel setting. This result could be of independent interest, and at a high-level its proof is organized in the following way. First, we note that in a rough sense, any multi-channel 1-to-1 communication algorithm $\mathcal{A}$ can be viewed as a decision tree, and each path from the root to a leaf in the tree corresponds to an oblivious algorithm. Then, we argue that $\mathcal{A}$ can be used to generate another algorithm $\mathcal{A}'$ which is a ``convex combination'' (or, a distribution) of all such oblivious algorithms, without changing the success probability or the product of Alice's and Bob's expected cost. Moreover, an important observation is that among all the oblivious algorithm used in the ``convex combination'', at least one---say $\mathcal{A}_{\hat{w}}$---is (roughly) as good as $\mathcal{A}'$ in terms of both success probability and energy efficiency. Finally, depending on whether Eve uses all her budget during execution, we consider two potential scenarios for $\mathcal{A}_{\hat{w}}$, and for both we show $\mathbb{E}_{\mathcal{A}_{\hat{w}}}[A]\cdot\mathbb{E}_{\mathcal{A}_{\hat{w}}}[B]\in\Omega(T)$, which in turn implies $\mathbb{E}_\mathcal{A}[A]\cdot\mathbb{E}_\mathcal{A}[B]\in\Omega(T)$. Complete proof of Theorem \ref{thm-cost-lower-bound-1-to-1} is deferred to Appendix \ref{sec-app-lower-bound-proof} on page \pageref{sec-app-lower-bound-proof} due to space constraint.

\begin{theorem}\label{thm-cost-lower-bound-1-to-1}
Consider any multi-channel 1-to-1 communication algorithm that succeeds with constant probability against an adaptive adversary Eve with budget $T$. Let $A$ and $B$ denote Alice's and Bob's expected cost respectively, then Eve can force $\mathbb{E}[A]\cdot\mathbb{E}[B]\in\Omega(T)$.
\end{theorem}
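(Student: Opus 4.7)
The plan is to use Eve's explicit strategy $\mathcal{S}$ (jam channel $c$ in slot $t$ iff the probability of successful transmission on $(t,c)$ exceeds $1/T$) as a universal attack, and then reduce the analysis of an arbitrary randomized algorithm to that of a single oblivious one, on which a Cauchy--Schwarz style bound applies.

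\emph{Step 1 (Reduction to oblivious algorithms).} First I would view any randomized algorithm $\mathcal{A}$ as a decision tree whose branches correspond to possible sequences of channel feedback. Each root-to-leaf path, together with the random choices made along that path, defines an \emph{oblivious} algorithm $\mathcal{A}_w$ whose per-slot action distribution is predetermined (independent of feedback). Hence $\mathcal{A}$ is equivalent to a distribution $\mathcal{A}'=\{(\mathcal{A}_w,q_w)\}_w$. Because Eve's strategy $\mathcal{S}$ depends only on per-channel transmission-success probabilities---which are determined by Alice's and Bob's marginal action distributions, not by the feedback history---running $\mathcal{A}$ and $\mathcal{A}'$ against $\mathcal{S}$ yield identical success probabilities and identical expected costs for both parties. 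Next, by an averaging/pigeonhole argument over $\{q_w\}$, I would extract an oblivious $\mathcal{A}_{\hat w}$ such that the success probability of $\mathcal{A}_{\hat w}$ is still $\Omega(1)$ while $\mathbb{E}_{\hat w}[A]\cdot\mathbb{E}_{\hat w}[B]=O\bigl(\mathbb{E}_{\mathcal{A}}[A]\cdot\mathbb{E}_{\mathcal{A}}[B]\bigr)$; it therefore suffices to prove the lower bound for $\mathcal{A}_{\hat w}$.

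\emph{Step 2 (Two-case analysis for $\mathcal{A}_{\hat w}$).} For the chosen oblivious algorithm, let $\alpha_{t,c}=\Pr[\text{Alice sends on }(t,c)]$ and $\beta_{t,c}=\Pr[\text{Bob listens on }(t,c)]$; these are independent, so Eve's jamming rule is: jam $(t,c)$ iff $\alpha_{t,c}\beta_{t,c}>1/T$. In the first case Eve exhausts her budget, so the jammed set $J$ has $|J|=T$ and every $(t,c)\in J$ satisfies $\sqrt{\alpha_{t,c}\beta_{t,c}}>1/\sqrt{T}$. Summing and applying Cauchy--Schwarz,
$$\sqrt{T}<\sum_{(t,c)\in J}\sqrt{\alpha_{t,c}\beta_{t,c}}\leq\sqrt{\sum_{(t,c)\in J}\alpha_{t,c}}\cdot\sqrt{\sum_{(t,c)\in J}\beta_{t,c}}\leq\sqrt{\mathbb{E}_{\hat w}[A]\cdot\mathbb{E}_{\hat w}[B]},$$
which gives $\mathbb{E}_{\hat w}[A]\cdot\mathbb{E}_{\hat w}[B]>T$. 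In the second case Eve spends strictly less than $T$, so she has jammed exactly the ``profitable'' pairs, and every unjammed $(t,c)$ satisfies $\alpha_{t,c}\beta_{t,c}\leq 1/T$. A union bound then gives the success probability $\leq\sum_{\text{unjammed}}\alpha_{t,c}\beta_{t,c}\leq(1/\sqrt{T})\sum_{\text{unjammed}}\sqrt{\alpha_{t,c}\beta_{t,c}}\leq(1/\sqrt{T})\sqrt{\mathbb{E}_{\hat w}[A]\cdot\mathbb{E}_{\hat w}[B]}$, and constant success probability forces $\mathbb{E}_{\hat w}[A]\cdot\mathbb{E}_{\hat w}[B]=\Omega(T)$. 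Combining with Step~1 yields Theorem~\ref{thm-cost-lower-bound-1-to-1}.

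\emph{Main obstacle.} The core technical subtlety lies in Step~1: although decomposing $\mathcal{A}$ into a convex combination of oblivious paths preserves linear quantities (success probability, each individual expected cost), it does \emph{not} automatically preserve the bilinear product $\mathbb{E}[A]\cdot\mathbb{E}[B]$. Thus the averaging argument used to single out $\mathcal{A}_{\hat w}$ must be set up carefully---most likely by separately ensuring that one of the two factors does not ``inflate'' under conditioning, e.g.\ by weighting the paths so that an index $\hat w$ exists with both $p_{\hat w}=\Omega(1)$ and $\mathbb{E}_{\hat w}[A]\mathbb{E}_{\hat w}[B]=O(\mathbb{E}_{\mathcal{A}}[A]\mathbb{E}_{\mathcal{A}}[B])$. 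Once this reduction is justified, Step~2 is a clean Cauchy--Schwarz calculation.
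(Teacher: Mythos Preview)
Your overall architecture matches the paper's (threshold strategy $\mathcal{S}$, decision-tree reduction to a convex combination of oblivious algorithms, two-case split on whether Eve depletes her budget), and you correctly flag the bilinear-product obstacle in Step~1. The paper resolves that obstacle with an explicit rearrangement-type inequality, $\min_w(\|\vec q\|_1 x_w)(\|\vec q\|_1 y_w)\le(\sum_w q_w x_w)(\sum_w q_w y_w)$ for nonnegative $\vec q,\vec x,\vec y$, combined with Markov's inequality on the failure probability to isolate a subfamily $\Sigma_{2\lambda}$ of weight $\ge 1/2$. (Your one-line justification that ``$\mathcal{S}$ depends only on marginals, not on feedback history'' is not quite right---for adaptive $\mathcal{A}$ the per-slot marginals are attached to the current tree node and hence do depend on history---but the conclusion that success probability and each expected cost are preserved is correct and is what the paper verifies directly.)

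There is, however, a genuine gap in Step~2 that you do not flag. In both cases your last displayed inequality uses $\sum_{(t,c)}\alpha_{t,c}\le\mathbb{E}_{\hat w}[A]$, but this points the wrong way: since the protocol halts once Bob is informed, $\mathbb{E}_{\hat w}[A]=\sum_i\pi^{(i)}\sum_k\alpha_{i,k}$ with $\pi^{(i)}\le 1$ the probability of still being active at slot $i$, so in fact $\mathbb{E}_{\hat w}[A]\le\sum_{i,k}\alpha_{i,k}$. In your Case~2 ($|J|<T$) this is easily repaired by truncating to the first $t$ slots at which $\sum_{j\le t,k}\alpha_{j,k}\beta_{j,k}$ first reaches a fixed constant, so that $\pi^{(i)}$ stays bounded below; with that fix your Cauchy--Schwarz chain works and is actually simpler than the paper's dyadic/AM--GM argument for this scenario. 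In your Case~1 ($|J|\ge T$) the gap is real and not patched by truncation: the jammed set $J$ can span many slots, during which success on the low-probability \emph{unjammed} pairs drives $\pi^{(i)}\to 0$. For instance, with $C=T$ channels, give each slot one pair with $\alpha=\beta=1$ (jammed) and $C-1$ pairs with $\alpha=\beta=1/\sqrt{T}$ (unjammed, $\alpha\beta=1/T$); then $\sum_{J}\alpha_{t,c}=T$ but $\mathbb{E}_{\hat w}[A]=\Theta(\sqrt{T})$, and your chain reads $T\le\Theta(\sqrt{T})$. The paper handles this case by a substantially different and more delicate argument: it moves the jammed coordinates of the first $l$ slots into a separate later block (showing this can only decrease both expected costs), normalizes the first block so that every surviving pair has $\alpha\beta\in\{0,1/T\}$, and then proves $\mathbb{E}[A]\cdot\mathbb{E}[B]>T$ by a slot-by-slot induction using AM--GM.
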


\bibliographystyle{plainurl}
\bibliography{./opodis20-ref}

\clearpage
\appendix

\section{More Discussion on Handling the Adaptivity of the Adversary}\label{sec-app-adaptive-discussion}

It is worth noting techniques like ``the principle of deferred decision'' cannot resolve the dependency issue directly. Specifically, assume we first reveal the jamming results of Eve over one epoch, and only then reveal the choices made by node $u$ to determine its behavior. It is tempting to think, if in each slot our algorithm enforces nodes to determine their behavior without considering execution history, then the choices made by $u$ would be independent of the jamming results of Eve. But this is not true! Consider a simple system containing only two nodes $u$ and $v$, further assume there are only two channels and two time slots. Imagine Eve employs the following jamming strategy. In slot one, she does no jamming. In slot two, she jams both channels iff $u$ heard a message from $v$ in slot one. Now, if the jamming results turn out to be no jamming during slot two, then we can be assured $u$ did not hear from $v$ during slot one. On the other hand, if the jamming results indicate jamming during slot two, then $u$ must have received a message from $v$ during slot one. Fundamentally, this is because the adaptivity of Eve allows $Q_i$ to depend on $\{G_1,G_2,\cdots,G_{i-1}\}$. Hence, once the jamming results are fixed, the distribution for honest nodes' behavior is potentially distorted.

Some previous work also consider an adaptive adversary, but either they do not have the problem we face, or their solution is not directly applicable to our setting. In particular, the Las Vegas algorithm King et al.~\cite{king11} devised for the 1-to-1 communication problem does not introduce the dependency issue. For the broadcast problem, in the single-channel setting, Gilbert et al.~\cite{gilbert14} observed that jammed slots within one epoch can be postponed to the end of that epoch without loss of generality (see Lemma 1 of \cite{gilbert14}). However, it is unclear how to correctly extend this observation to the multi-channel setting. 

\section{Pseudocode of Algorithms}\label{sec-app-code}

\subsection{Pseudocode of \multicastadp}

\begin{figure}[!h]
\hrule
\vspace{1ex}\textbf{\multicastadp executed at node $u$:}\vspace{1ex}
\hrule
\begin{small}
\begin{algorithmic}[1]
\State $M_u\gets false$.
\If {($u$ is the source node)} $M_u\gets true$. \EndIf
\For {(each epoch $i\geq I_b$)}
	\State $N_c\gets 0$, $p\gets (\sqrt{C/n})/2^i$.
	\For {(each slot from $1$ to $R=a\cdot 4^i\cdot i\cdot\lg^2{n}$)}
		\State $ch\gets\texttt{random}(1,C)$, $rnd\gets\texttt{random}(1,1/p)$.
		\If {($rnd==1$)}
			\State $feedback\gets\texttt{listen}(ch)$.
			\If {($feedback$ is silence)}
				$N_c\gets N_c+1$.
			\ElsIf {($feedback$ includes message $m$)}
				$M_u\gets true$.
			\EndIf
		\ElsIf {($rnd==2$)}
			\If {($M_u==true$)}
				$\texttt{broadcast}(ch,m)$.
			\Else
				\ $\texttt{broadcast}(ch,\pm)$.
			\EndIf
		\EndIf
	\EndFor
	\If {($N_c\geq Rp/2$)} \textbf{halt}. \EndIf
\EndFor
\end{algorithmic}
\end{small}
\hrule
\vspace{1ex}
\caption{Pseudocode of the \multicastadp algorithm. (In the pseudocode: (a) $\texttt{random}(x,y)$ returns a uniformly chosen random integer from the interval $[x,y]$; (b) $\texttt{listen}(ch)$ instructs the node to listen on channel $ch$ and returns channel status; and (c) $\texttt{broadcast}(ch,m)$ instructs the node to broadcast $m$ on $ch$. We also assume $1/p$ is an integer for the ease of presentation.)}\label{fig-alg-multicastadp}
\vspace{-3ex}
\end{figure}

\clearpage

\subsection{Pseudocode of \multicastadvadp}

\begin{figure}[!h]
\hrule
\vspace{1ex}\textbf{\multicastadvadp executed at node $u$:}\vspace{1ex}
\hrule
\begin{small}
\begin{algorithmic}[1]
\State $status_u\gets init, M_u \gets false, n_u \gets -1, b\gets 20$.
	\If {($u$ is the source node)} $M_u \gets true$.\EndIf
	\State $i\gets 2\lg{C}+20$.
\While {($true$)}
	\State $p^{i,0}_u\gets C/2^i, p^i_{step3}\gets C^2/2^i, R_i\gets a\cdot 2^i\cdot i^3$.
	\For {(each phase $j$ from $0$ to $bi-1$)}
		\State $N^{step1,c}_u\gets 0$, $N^{step2,c}_u\gets 0$, $N^{step3,c}_u\gets 0$, $N^{step2,m}_u\gets 0$.
		\LineComment{\textsc{Step 1}.}
		\For {(each of the $R_i$ slots)} 
			\State $ch\gets\texttt{random}(1,C)$, $rnd\gets\texttt{random}(1,1/p^{i,j}_u)$.
			\If {($rnd==1$)}
				\State $feedback\gets\texttt{listen}(ch)$.
				\If {($feedback$ is silence)} $N^{step1,c}_u\gets N^{step1,c}_u +1$.
				\ElsIf {($feedback$ includes message $m$)} $M_u \gets true$.
				\EndIf
			\ElsIf {($rnd==2$)}
		 		\If {($M_u == true$)} $\texttt{broadcast}(ch,m)$.
				\Else\ $\texttt{broadcast}(ch,\pm)$.
				\EndIf
			\EndIf
		\EndFor
		\LineComment{\textsc{Step 2}.}
		\For {(each of the $R_i$ slots)} 
			\State $ch\gets\texttt{random}(1,C)$, $rnd\gets\texttt{random}(1,1/p^{i,j}_u)$.
			\If {($rnd==1$)}
				\State $feedback\gets\texttt{listen}(ch)$.
				\If {($feedback$ is silence)} $N^{step2,c}_u\gets N^{step2,c}_u+1$.
				\ElsIf {($feedback$ includes message $m$)} $N^{step2,m}_u\gets N^{step2,m}_u+1$.
				\EndIf
			\ElsIf {($rnd==2$)}
			 	\If {($M_u == true$)} $\texttt{broadcast}(ch,m)$.
				\Else\ $\texttt{broadcast}(ch,\pm)$.
				\EndIf
			\EndIf
		\EndFor
		\LineComment{\textsc{Step 3}.}
		\For {(each of the $R_i$ slots)} 
			\State $ch\gets\texttt{random}(1,C)$, $rnd\gets\texttt{random}(1,1/p^i_{step3})$.
			\If {($rnd==1$)}
				\State $feedback\gets\texttt{listen}(ch)$.
				\If {($feedback$ is silence)} $N^{step3,c}_u\gets N^{step3,c}_u+1$.
				\EndIf
			\ElsIf {($rnd==2$)}
			 	\If {($M_u == true$)} $\texttt{broadcast}(ch,m)$.
				\Else\ $\texttt{broadcast}(ch,\pm)$.
				\EndIf
			\EndIf
		\EndFor
		\LineComment{\textsc{Postprocessing}.}
		\State $\Delta^{step1}_u=\Delta^{step2}_u\gets R_ip^{i,j}_u/(1-p^{i,j}_u/C)$, $\Delta^{step3}_u\gets R_ip^{i}_{step3}/(1-p^{i}_{step3}/C)$.
		\State $\eta\gets \frac{N^{step1,c}_u}{\Delta^{step1}_u}+\frac{N^{step2,c}_u}{\Delta^{step2}_u}+\frac{N^{step3,c}_u}{\Delta^{step3}_u}$.
		\State $p^{i,j+1}_u\gets p_u^{i,j} \cdot 2^{\max\{0,\eta-2.5\}}$.
		\If {($status_u==init$ \textbf{and}  $N^{step2,m}_u\geq ai^3$ \textbf{and} $\eta\geq 2.4$)}
			\State $status_u\gets helper$, $n_u\gets{C}/{((p^{i,j}_u)^2\cdot 2^i)}$.
		\ElsIf {($status_u==helper$ \textbf{and} $p^{i,j+1}_u\geq 64\sqrt{C/(2^in_u)}$)}
			\State $status_u\gets halt$.
			\State \textbf{return}.
		\EndIf
	\EndFor
	\State $i\gets i+1$.
\EndWhile
\end{algorithmic}
\end{small}
\hrule
\vspace{1ex}
\caption{Pseudocode of the \multicastadvadp algorithm.}\label{fig-alg-multicastadvadp}
\vspace{-3ex}
\end{figure}

\clearpage

\section{Omitted Parts in the Analysis of \multicastadp}\label{sec-app-multicastadp-proof}

\begin{proof}[\underline{Proof of Claim $\bm{G}\sim\bm{\mathcal{G}}$}]
\begin{figure*}[!t]
\hrule
\vspace{-1ex}
\begin{small}
\begin{align*}
& \Pr[G_i=\sigma_i~|~\bm{G}_{<i}=\bm{\sigma}_{<i}] \\
=\quad \ \ & \sum_{\bm{q}_{\leq i}\in\left(2^{[C]}\right)^i}{\Pr\left[\bm{Q}_{\leq i}=\bm{q}_{\leq i}~|~\bm{G}_{<i}=\bm{\sigma}_{<i}\right]\cdot\Pr\left[G_i=\sigma_i~|~\bm{G}_{<i}=\bm{\sigma}_{<i}\wedge \bm{Q}_{\leq i}=\bm{q}_{\leq i}\right]} \\
=\quad \ \ & \sum_{\bm{q}_{\leq i}\in \Omega^{*}}{\left(\Pr\left[\bm{Q}_{\leq i}=\bm{q}_{\leq i}~|~\bm{G}_{<i}=\bm{\sigma}_{<i}\right]\cdot\Pr\left[G_i=\sigma_i~|~\bm{G}_{<i}=\bm{\sigma}_{<i}\wedge \bm{Q}_{\leq i}=\bm{q}_{\leq i}\right]\right)} \quad + \\
& \sum_{\bm{q}_{\leq i}\in \Omega^{**}}{\left(\Pr\left[\bm{Q}_{\leq i}=\bm{q}_{\leq i}~|~\bm{G}_{<i}=\bm{\sigma}_{<i}\right]\cdot\Pr\left[G_i=\sigma_i~|~\bm{G}_{<i}=\bm{\sigma}_{<i}\wedge \bm{Q}_{\leq i}=\bm{q}_{\leq i}\right]\right)} \quad + \\
& \sum_{\bm{q}_{\leq i}\in \Omega^{***}}{\left(\Pr\left[\bm{Q}_{\leq i}=\bm{q}_{\leq i}~|~\bm{G}_{<i}=\bm{\sigma}_{<i}\right]\cdot\Pr\left[G_i=\sigma_i~|~\bm{G}_{<i}=\bm{\sigma}_{<i}\wedge \bm{Q}_{\leq i}=\bm{q}_{\leq i}\right]\right)}  \\
\overset{\text{(equality 1)}}{=}& \sum_{\bm{q}_{\leq i}\in \Omega^{*}}{\left(\Pr\left[\bm{Q}_{\leq i}=\bm{q}_{\leq i}~|~\bm{G}_{<i}=\bm{\sigma}_{<i}\right]\cdot\Pr\left[\Psi_{q_i}\left(T_{hi}^{\left(K\left(\bm{q}_{\leq i}\right)\right)}\right)=\sigma_i\right]\right)} \quad + \\
& \sum_{\bm{q}_{\leq i}\in \Omega^{**}}{\left(\Pr\left[\bm{Q}_{\leq i}=\bm{q}_{\leq i}~|~\bm{G}_{<i}=\bm{\sigma}_{<i}\right]\cdot\Pr\left[\Psi_{q_i}\left(T_{lo}^{\left(i-K\left(\bm{q}_{\leq i}\right)\right)}\right)=\sigma_i\right]\right)} \quad +\\
& \sum_{\bm{q}_{\leq i}\in \Omega^{***}}{\left(\Pr\left[\bm{Q}_{\leq i}=\bm{q}_{\leq i}~|~\bm{G}_{<i}=\bm{\sigma}_{<i}\right]\cdot\Pr\left[\Psi_{q_i}\left(T_{lo}^{\left(i-y_1R\right)}\right)=\sigma_i\right]\right)} \\
\overset{\text{(equality 2)}}{=}& \sum_{\bm{q}_{\leq i}\in\left(2^{[C]}\right)^i}{\left(\Pr\left[\bm{Q}_{\leq i}=\bm{q}_{\leq i}~|~\bm{G}_{<i}=\bm{\sigma}_{<i}\right]\cdot\Pr(\sigma_i)\right)} = \Pr(\sigma_i)
\end{align*}
\end{small}
\vspace{-1ex}
\hrule
\vspace{1ex}
\caption{}\label{fig-nodes-rnd-uniform}
\vspace{-3ex}
\end{figure*}
Consider an arbitrary slot $i$. Consider an arbitrary $\sigma_i\in\Omega$ and an $\bm{\sigma}_{<i}=(\sigma_1,\cdots,\sigma_{i-1})\in\Omega^{i-1}$. Let $\Omega^{*}=\{ \bm{q}_{\leq i}\in \Omega^{i}: \left(K(\bm{q}_{\leq i})\leq y_1R\right) \wedge (|q_i|\geq x_1C)\}$,
$\Omega^{**}=\{ \bm{q}_{\leq i}\in \Omega^{i}: \left(K(\bm{q}_{\leq i})\leq y_1R\right) \wedge (|q_i|< x_1C)\}$, and $\Omega^{***}=\{ \bm{q}_{\leq i}\in \Omega^{i}: K(\bm{q}_{\leq i})>y_1R \}$.

Let $\mathcal{G}$ be the correct distribution of nodes' behavior in any a slot. (I.e., the behavior are determined by, say a chunk in $\bm{T}_{low}$ or $\bm{T}_{high}$, directly.)
For simplicity, we use $\Pr(\sigma_i)\triangleq \Pr_{G_i\sim\mathcal{G}}[G_i=\sigma_i]$ to denote the probability that $G_i=\sigma_i$ when $G_i$ is sampled from $\mathcal{G}$.
In Figure \ref{fig-nodes-rnd-uniform}, we show $\Pr[G_i=\sigma_i~|~\bm{G}_{<i}=\bm{\sigma}_{<i}]=\Pr(\sigma_i)$ when the random bits used by Eve is fixed\footnote{More accurately,  $\Pr[G_i=\sigma_i ~|~\bm{G}_{<i}=\bm{\sigma}_{<i} \wedge \bm{F}_{\leq i}=\bm{f}_{\leq i} ]=\Pr(\sigma_i)$ for any $\bm{f}_{\leq i}$ used by Eve.}.  Notice, equality 1 in Figure \ref{fig-nodes-rnd-uniform} holds because the way we generate $G_i$ (i.e., nodes' behavior). On the other hand, equality 2 holds because $\Pr[\Psi_{q_i}(T_{hi}^{(K(\bm{q}_{\leq i}))})=\sigma_i]=\Pr[T_{hi}^{(K(\bm{q}_{\leq i}))}=\Psi^{-1}_{q_i}(\sigma_i)]=\Pr(\sigma_i)$ for $\bm{q}_{\leq i}\in \Omega^{*}$; and similarly, $\Pr[\Psi_{q_i}(T_{lo}^{(i-K(\bm{q}_{\leq i}))})=\sigma_i]=\Pr(\sigma_i)$ for $\bm{q}_{\leq i}\in \Omega^{**}$, and $\Pr[\Psi_{q_i}(T_{lo}^{(i-y_1R)})=\sigma_i]=\Pr(\sigma_i)$ for $\bm{q}_{\leq i}\in \Omega^{***}$.

Finally, due to the chain rule, we know for any $\bm{\sigma}\in\Omega^R$:

$\Pr[\bm{G}=\bm{\sigma}]=\prod_{i=1}^{R}{\Pr[G_i=\sigma_i~|~\bm{G}_{<i}=\bm{\sigma}_{<i}]}=\Pr_{\bm{G}\sim \bm{\mathcal{G}}}[\bm{G}=\bm{\sigma}]$

This completes the proof of the claim.
\end{proof}
\begin{proof}[\underline{Proof of Lemma \ref{lemma-multicastadp-fast-bcst}}]
 Throughout the proof, we use $R$ to denote the length of the $i$\textsuperscript{th} epoch (i.e., $R_i$), and use $p$ to denote the working probability of a node in the $i$\textsuperscript{th} epoch (i.e., $p_i$).

Define $\mathcal{E}_{X}$ (resp., $\mathcal{E}_{X'}$ or $\mathcal{E}_{Y}$) be the event that some node is still uninformed by the end of process $\beta$ (resp., $\beta'$ or $\gamma$). Recall $\mathcal{E}_{\geq}$ is related to $\bm{Q}$ in process $\beta$. The following two claims capture the relationship between them:
\begin{claim*} 
$\mathcal{E}_{X}$ implies $\mathcal{E}_{X'}$.
\end{claim*}

\begin{claimproof}
	
Notice that permuting nodes' behavior and unjammed channels together will not change nodes' feedback.
So for each slot $i$, $(\pi(Q'_i)= Q_i)\wedge (M'_{i-1}= M_{i-1})$ will imply $M'_i=M_i$, where $Q_i$ (resp., $Q'_i$) denotes the set of unjammed channels in $i^{th}$ slot, and $M_i$ (resp., $M'_i$) denotes the set of informed nodes by the end of slot $i$, in process $\beta$ (resp., $\beta'$).

Moreover, $(\pi(Q'_i)\subseteq Q_i)\wedge (M'_{i-1}\subseteq M_{i-1})$ implies $M'_i\subseteq M_i$. We prove via its contrapositive: if exists node $u$ that $u\in M'_i$ and $u\notin M_i$, then either $\pi(Q'_i)\not\subseteq Q_i$ or $M'_{i-1}\not\subseteq M_{i-1}$. Since $u\in M'_i$, then in $i^{th}$ slot of process $\beta'$,  $u$ listens on the channel $h\in Q'_i$, and there is a unique node $v\in M'_{i-1}$ that broadcast on $h$. If $v\notin M_{i-1}$ then the proof is complete. Otherwise, due to the way we generate the behavior in the two processes, $u$ will be successfully informed by $v$ on channel $\pi(h)$ in $\beta$, unless $\pi(h)\notin Q_i$, i.e. the channel is jammed in $\beta$.
\end{claimproof}

\begin{claim*} 
$\Pr[(\mathcal{E}_{X'}\wedge \mathcal{E}_{\geq})]\leq  \Pr[(\mathcal{E}_{Y}\wedge \mathcal{E}_{\geq})]$.	
\end{claim*}
\begin{claimproof}
If $\mathcal{E}_{\geq}$ holds, both in process $\beta'$ and $\gamma$, Eve leaves exactly $y_1R$ slots in which channel $\{1,2,\cdots,x_1C\}$ are unjammed, and jams all channels in $(1-y_1)R$ slots. We can simply ignore the slots in which all channels are jammed, since the set of informed nodes will not change in these slots. Then in each left slot, behavior and further feedback of nodes in the two processes are same correspondingly.
\end{claimproof}

Therefore, $\Pr[\mathcal{E}_{X} \wedge \mathcal{E}_{\geq}]\leq  \Pr[\mathcal{E}_{X'} \wedge \mathcal{E}_{\geq}]\leq \Pr[\mathcal{E}_{Y} \wedge \mathcal{E}_{\geq}]\leq \Pr[\mathcal{E}_{Y}]$. 
Now whether event $\mathcal{E}_{Y}$ occurs only depends on nodes' randomness, since the jamming vector is fixed as $\bm{\hat{q}}$: in the first $y_1R=2b\cdot i\cdot 4^i\cdot\lg^2{n}$ slots, Eve leaves $\{1,2,\cdots,x_1C\}$ unjammed; and all channels are jammed in the remaining slots. Here, $b$ is some sufficiently large constant. Let $\mathcal{R}_1$ denote the collection of the first half of these $y_1R$ slots, and let $\mathcal{R}_2$ denote the second half. To show effectiveness of the epidemic broadcast scheme, we rely on the following two claims: 
\begin{claim*} [$\mathcal{R}_1$-claim]
With probability at least $1-\exp(-\Theta(i\cdot\lg{n}))$, at the end of $\mathcal{R}_1$, the number of informed nodes is at least $n/2$.
\end{claim*}
\begin{claimproof} 
Divide $\mathcal{R}_1$ into $\lg{n}$ \emph{segments}, each containing $b\cdot i\cdot 4^i\cdot\lg{n}$ slots. To prove the first claim, we show after each segment, the number of informed nodes will double, with sufficiently high probability.

Fix a segment in $\mathcal{R}_1$, fix a time slot in this segment, let $t\in[1,n/2]$ denote the number of informed nodes at the beginning of this slot. Consider a node $u$ that is informed at the beginning of the segment. if $u$ wants to inform a previously uninformed node in this slot, the following conditions must hold: (a) $u$ broadcasts; (b) all other $t-1$ informed nodes do not broadcast on the channel chosen by $u$; (c) all uninformed nodes do not broadcast on the channel chosen by $u$ and at least one uninformed node listens on the channel chosen by $u$; and (d) the channel chosen by $u$ is not jammed by Eve. Therefore, the probability that $u$ informs an uninformed node in this slot is at least:
\begin{align*}
&\quad p\cdot \left(1-\frac{p}{C}\right)^{t-1}\cdot \left[\left(1-\frac{p}{C}\right)^{n-t}-\left(1-\frac{2p}{C}\right)^{n-t}\right] \cdot x_1\\
\geq &\quad p\cdot \left(1-\frac{p}{C}\right)^{n-1}\cdot \left[1-\left(\frac{C-2p}{C-p}\right)^{n/2}\right] \cdot x_1\\
\geq &\quad p\cdot e^{-2}\cdot \left(1-e^{-np/(2C)}\right )\cdot x_1 && \text{since } p\leq C/n\\
\geq &\quad \frac{x_1p^2n}{4e^2C}
\end{align*}
Thus after one segment, the probability that $u$ does not inform an uninformed node is at most $\left(1-(x_1p^2n)/(4e^2C)\right)^{b\cdot i\cdot 4^i\cdot\lg{n}}\leq\exp(-\Theta(i\cdot\lg{n}))$. Take a union bound over all the $O(n)$ nodes that are informed at the beginning of this segment, we know the number of informed nodes will at least double with probability at least $\exp(-\Theta(i\cdot\lg{n}))$ by the end of the segment. Take another union bound over the $\lg{n}$ segments, the claim holds.
\end{claimproof}

\begin{claim*} [$\mathcal{R}_2$-claim]
Assume by the end of $\mathcal{R}_1$ indeed there are at least $n/2$ informed nodes, then by the end of the epoch, all nodes will be informed, with probability at least $1-\exp(-\Theta(i\cdot\lg^2{n}))$.
\end{claim*}
\begin{claimproof} 
Fix a node $u$ that is still uninformed at the beginning of $\mathcal{R}_2$. Consider a slot in $\mathcal{R}_2$, assume there are $t\geq n/2$ informed nodes at the beginning of this slot. For $u$ to become informed in this slot, the following conditions must hold: (a) $u$ decides to listen; (b) some informed node $v$ broadcasts on the channel chosen by $u$; (c) all other $n-2$ nodes do not broadcast on the channel chosen by $u$; and (d) the channel chosen by $u$ is not jammed by Eve. Therefore, the probability that $u$ will be informed in this slot is at least:

\vspace{-1ex}$$p\cdot t\cdot \frac{p}{C} \cdot \left(1-\frac{p}{C}\right)^{n-2} \cdot x_1 \geq p\cdot \frac{n}{2}\cdot\frac{p}{C} \cdot \left(1-\frac{p}{C}\right)^{n}\cdot x_1 \geq \frac{x_1p^2n}{2e^2C}$$

\noindent where the last inequality is due to $p\leq C/n$.

Therefore, by the end of $\mathcal{R}_2$, the probability that $u$ is uninformed is at most $\big(1-(x_1p^2n)/(2e^2C)\big)^{|\mathcal{R}_2|}\leq\exp(-\Theta(i\cdot\lg^2{n}))$. Take a union bound over the $O(n)$ nodes that are uninformed at the beginning of $\mathcal{R}_2$ immediately leads to this claim.
\end{claimproof}
\end{proof}

\begin{proof}[\underline{Proof of Claim $\Pr[X\leq t]\leq\Pr[X'\leq t]$ in Lemma~\ref{lemma-multicastadp-fast-halt}}]
\begin{figure*}[!t]
\hrule
\vspace{-1ex}
\begin{small}
\begin{align*}
X_i =& \sum_{l\in q}\Bigg(\mathbb{I}\left[\left(\Psi_{q}\left(T^{(j)}\right)\right)^{ch}_{u}=l\right]\cdot \mathbb{I}\left[\left(\Psi_{q}\left(T^{(j)}\right)\right)^{act}_{u}=\text{listen}\right]\\
& \quad \quad \cdot \left(\prod_{v\in V\setminus \{u\}}\left(1-\mathbb{I}\left[\left(\Psi_{q}\left(T^{(j)}\right)\right)^{ch}_{v}=l\right]\cdot \mathbb{I}\left[\left(\Psi_{q}\left(T^{(j)}\right)\right)^{act}_{v}=\text{send}\right]\right)\right)\Bigg) \\
=& \sum_{l\in[|q|]}\Bigg(\mathbb{I}\left[\left(T^{(j)}\right)^{ch}_{u}=l\right]\cdot \mathbb{I}\left[\left(T^{(j)}\right)^{act}_{u}=\text{listen}\right]\\
& \quad \quad \cdot\left(\prod_{v\in V\setminus \{u\}}\left(1-\mathbb{I}\left[\left(T^{(j)}\right)^{ch}_{v}=l\right]\cdot \mathbb{I}\left[\left(T^{(j)}\right)^{act}_{v}=\text{send}\right]\right)\right)\Bigg) \\
\geq& \sum_{l\in[x_2C]}\Bigg(\mathbb{I}\left[\left(T^{(j)}\right)^{ch}_{u}=l\right]\cdot \mathbb{I}\left[\left(T^{(j)}\right)^{act}_{u}=\text{listen}\right]\\
& \quad \quad\cdot\left(\prod_{v\in V\setminus \{u\}}\left(1-\mathbb{I}\left[\left(T^{(j)}\right)^{ch}_{v}=l\right]\cdot \mathbb{I}\left[\left(T^{(j)}\right)^{act}_{v}=\text{send}\right]\right)\right)\Bigg) \\
\geq & X'_i
\end{align*}
\end{small}
\vspace{-3.5ex}
\hrule
\vspace{1ex}
\caption{}\label{fig-xi-leq-yi}
\vspace{-3ex}
\end{figure*}

To prove the claim, showing $X\geq X'$ is sufficient, which in turn is an immediate corollary if $X_i\geq X'_i$. Thus, we now prove $X_i\geq X'_i$.

Assume the jamming result of $i^{th}$ slot is $q$. We only consider the case $|q|\geq x_2C$ and $K(\bm{q}_{\leq i})\leq y_2R$ here, as the other two cases are very similar. Assume randomness of nodes in slot $i$ comes from chunk $T^{(j)}$ in $\bm{T}_{high}$. The proof for $X_i\geq X'_i$ is shown in Figure \ref{fig-xi-leq-yi}, where the equality in the second line is due to the definition of $\Psi_q$ (or more precisely, the definition of the permutation $\pi_q$).
\end{proof}

\begin{proof}[Proof of Lemma \ref{lemma-multicastadp-halt-imply-informed}]
Consider two complement cases: either Eve jams a lot in the epoch, or she does not. If jamming is not strong, Lemma \ref{lemma-multicastadp-fast-bcst} implies no node remains uninformed. Otherwise, then $u$ should not hear a lot of silent slots and will not halt.
Let $R$ be the length of the epoch, and $p$ be nodes' working probability. 
Let $\mathcal{E}_1$ be event (a) in lemma statement; (i.e., ``node $u$ hears silence at least $Rp/2$ times'');
and $\mathcal{E}_2$ be event (b) in lemma statement. (i.e., ``some node never hears the message'').
Let $\mathcal{E}_3$ be the event that $\mathcal{E}^{\geq 0.1}(\geq 0.1)$ happens in the epoch. 
Bounding $\Pr[\mathcal{E}_1\wedge \overline{\mathcal{E}_3}]\leq \exp(-\Theta(i\cdot\lg{n}))$ is similar to that of Lemma \ref{lemma-multicastadp-fast-halt}'s, but with several modifications. Let $x_1=0.1$, and $y_1=0.1$. As shown in Section \ref{sec-notation}, $\overline{\mathcal{E}_3}=\mathcal{E}^{>0.9}(<0.1)$, while implies $\mathcal{E}^{\geq (1-y_1)}(\leq x_1)$.

First in process $\beta$, the definition of $K$ is changed: let $K(\bm{Q}_{\leq i})=\sum_{j=1}^{i}{\mathbb{I}[|Q_j|\leq x_1C]}$ count heavily jammed slots (i.e., $|Q_j|\leq x_1C$) among the first $i$ slots. Also, $G_i$ is defined as:
\begin{displaymath}
	G_i=
\left\{\begin{array}{ll}
	\Psi_{Q_i}\left(T_{lo}^{\left(K\left(\bm{Q}_{\leq i}\right)\right)}\right), & |Q_i|\leq x_1C  \textrm{ and } K(\bm{Q}_{\leq i})\leq (1-y_1)R \\
	\Psi_{Q_i}\left(T_{hi}^{\left(i-K\left(\bm{Q}_{\leq i}\right)\right)}\right), &|Q_i|> x_1C  \textrm{ and } K(\bm{Q}_{\leq i})\leq (1-y_1)R \\
	\Psi_{Q_i}\left(T_{hi}^{\left(i-(1-y_1)R\right)}\right), &\textrm{ otherwise} \\
\end{array}
\right.
\end{displaymath}

We continue to introduce process $\beta'$. For each slot $i$, if in $\beta$ nodes use $\Psi_{Q_i}(T_{hi}^{(j)})$ (resp., $\Psi_{Q_i}(T_{lo}^{(j)})$) to determine their behavior, then in $\beta'$ nodes directly use $T_{hi}^{(j)}$ (resp., $T_{lo}^{(j)}$), and Carlo leaves all channels unjammed (resp., leaves $\{1,2,\cdots,x_1C\}$ unjammed).

Finally, in $\gamma$, adversary jams obliviously using the following strategy: in the first $(1-y_1)R$ slots, channels $\{1,2,\cdots,x_1C\}$ are unjammed; and in the remaining $y_1R$ slots, all channels are unjammed. Besides, in the $i^{th}$ slot, node directly use chunk $T^{(i)}_{lo}$ if $i\leq (1-y_1)R$, and otherwise $T^{(i-(1-y_1)R)}_{hi}$ to compute their behavior.

The definitions of $X_i,X'_i,Y_i$ and $X,X',Y$ remain unchanged. 
Similarly to the proof of Lemma \ref{lemma-multicastadp-fast-halt}, we can derive the following result: for any integer $t\geq 0$, $\Pr[X\geq t]\leq\Pr[X'\geq t]$, and $\Pr[(X'\geq t)\wedge \mathcal{E}^{\geq 1-y_1}(\leq x_1)]\leq \Pr[(Y\geq t)\wedge \mathcal{E}^{\geq 1-y_1}(\leq x_1)]$.
Therefore, 
\begin{align*}
	\Pr[\mathcal{E}_1\wedge \overline{\mathcal{E}_3}]&\leq \Pr[X\geq Rp/2 \wedge \mathcal{E}^{\geq 1-y_1}(\leq x_1)]\\
	&\leq \Pr[Y\geq Rp/2 \wedge \mathcal{E}^{\geq 1-y_1}(\leq x_1)]\\
	&\leq \Pr[Y\geq Rp/2]\\
	&\leq\exp(-\Theta(Rp))=\exp(-\Theta(i\cdot\lg{n}))
\end{align*}
The last inequality is due to a Chernoff bound and $\mathbb{E}[Y]=((1-y_1)R\cdot x_1+y_1R\cdot 1)\leq 0.19Rp(1-p/C)^{n-1}\leq 0.19Rp$.

Besides, by Lemma \ref{lemma-multicastadp-fast-bcst}, $\Pr[\mathcal{E}_2 \wedge \mathcal{E}_3 ]\leq\exp(-\Theta(i\cdot\lg{n}))$. As a result, $\Pr[\mathcal{E}_1\wedge\mathcal{E}_2]=\Pr[\mathcal{E}_1\wedge\mathcal{E}_2\wedge\mathcal{E}_3]+\Pr[\mathcal{E}_1\wedge\mathcal{E}_2\wedge\overline{\mathcal{E}_3}]\leq\Pr[\mathcal{E}_1\wedge \overline{\mathcal{E}_3}]+\Pr[\mathcal{E}_2 \wedge \mathcal{E}_3 ]\leq \exp(-\Theta(i\cdot\lg{n}))$. 
\end{proof}

\begin{proof}[\underline{Proof of Theorem \ref{thm-multicastadv}}]
Fix a node $u$, we first compute how long $u$ remains active. Let $L$ be the total runtime of $u$. Define constants $\alpha=10$ and $\beta=1/((1-x_2)(1-y_2))$. Recall $I_b$ denotes the number of the first epoch, and we set $R_{I_b-1}=0$ for the ease of presentation. As a result, for any $x\in\mathbb{R}^+$, there is a unique integer $i\geq I_b$ satisfying $R_{i-1}\leq x<R_i$. Let $sr_i=\sum_{k=\hat{I_b}+1}^{i}R_k$ be the total number of slots from epoch $\hat{I_b}+1$ to epoch $i$.

We now bound the probability that $L>\alpha\beta T/C$. Specifically,
\begin{align*}
\Pr(L>\alpha \beta T/C) \leq & \sum_{i=I_{b}}^{\infty}\Pr\Big( (R_{i-1}\leq \beta T/C<R_{i})\wedge (L>\alpha\beta T/C) \Big)\\
\leq & \sum_{i=I_{b}}^{\infty}\Pr\Big(  (\beta T/C<R_{i})\wedge (L>sr_{i}) \Big)\\
\leq & \sum_{i=I_{b}}^{\infty}\Pr\Big(  (\beta T/C<R_{i})\wedge (L>sr_{i}) \ | \  (L>sr_{i-1}) \Big)\\
\leq & \sum_{i=I_{b}}^{\infty}\exp\big(-\Theta(i\cdot\lg{n})\big)=n^{-\Omega(1)}
\end{align*}
In above, the second inequality is due to the fact that the length of epochs increases geometrically; the last inequality is due to Lemma \ref{lemma-multicastadp-fast-halt}, and the fact that Eve spends less than $(1-y_2)(1-x_2)R_{i}C$ in epoch $i$ implies event $\mathcal{E}^{\geq y_2}(\geq x_2)$ occurs in epoch $i$.

Take a union bound over all $n$ nodes, we know when $T=\Omega(C)$ all nodes will halt within $O(T/C)$ slots, with high probability.

Next, we analyze the cost of nodes. Again fix a node $u$. For any epoch $i$ in which $u$ is alive, its expected cost in this epoch is $2R_ip_i$. Apply a Chernoff bound, we know $u$'s cost will be at most $3R_ip_i$, with probability at least $1-\exp(-\Theta(R_ip_i))\geq 1-\exp(-\Theta(i\cdot \lg^2 n))$. Let $F_i$ denote $u$'s actual cost during epoch $i$, and let $F$ denote the actual total cost. Let $\mathcal{E}$ be the event that ``for all $k\geq I_b$, $F_k\leq 3R_kp_k$'', then $\Pr(\overline{\mathcal{E}})<\sum_{k=I_b}^{\infty}\exp(-\Theta(R_kp_k))=n^{-\Omega(1)}$. Finally, define $\gamma=169a\lg^2{n}$. We can now bound $\Pr(F>\sqrt{\gamma\beta\cdot\lg{T}\cdot(T/n)})$:
\begin{align*}
& \Pr\Big(F>\sqrt{\gamma\beta\cdot\lg{T}\cdot(T/n)}\Big) \\
\leq & \Pr\Big(\overline{\mathcal{E}}\Big) +
\sum_{i=I_{b}}^{\infty}\Pr\Big(\left(L>sr_i\right)\wedge(R_{i-1}\leq\beta T/C<R_{i})\Big) \\
& + \sum_{i=I_{b}}^{\infty}\Pr\Big(\mathcal{E}\wedge\left(L\leq sr_i\right)\wedge\left(R_{i-1}\leq\beta T/C<R_{i}\right)\wedge\left(F^2>\gamma\beta\cdot\lg{T}\cdot (T/n)\right)\Big) \\
\leq & n^{-\Omega(1)} + n^{-\Omega(1)} + \sum_{i=I_{b}}^{\infty}\Pr\Big(\mathcal{E}\wedge\left(L\leq sr_i\right)\wedge\left(F^2>\gamma\beta\cdot i\cdot\frac{R_{i-1}C}{\beta n}\right)\Big) \\
\leq & n^{-\Omega(1)} + n^{-\Omega(1)} + \sum_{i=I_{b}}^{\infty}\Pr\left(\mathcal{E}\wedge\left(F\leq \sum_{k=I_{b}}^{i}F_k\right)\wedge\left(F>13R_{i-1}p_{i-1}\right)\right) \\
= & n^{-\Omega(1)} + 0 = n^{-\Omega(1)}
\end{align*}
Notice, the last inequality is due to: (1) ``$L\leq sr_i$'' means $u$ halts by the end of epoch $i$, hence the total cost of $u$ is its cost up to the end of epoch $i$; and (2) ``$F^2>\gamma\beta\cdot i\cdot(R_{i-1}C)/(\beta n)$'' together with the definition $\gamma=169a\lg^2{n}$ implies $F>13R_{i-1}p_{i-1}$. In addition, the equality in the last line is due to: if $\mathcal{E}$ happens, then $F\leq\sum_{k=I_{b}}^{i}F_k\leq\sum_{k=I_{b}}^{i}{3R_kp_k}\leq 13R_{i-1}p_{i-1}$.

Take a union bound over all $n$ nodes, we know when $T=\Omega(C)$ the cost of each node is $O(\sqrt{T/n}\cdot\sqrt{\lg{T}}\cdot\lg{n})$, with high probability.

We continues to show w.h.p.\ each node must have been informed when it halts:
\begin{align*}
& \Pr(\textrm{some node halts while uninformed})\\
\leq\quad & \sum_u\Pr(\textrm{some node halts while uninformed}\wedge u\textrm{ is the first node that halts})\\
\leq\quad & \sum_u\sum_{i=I_b}^{\infty}\Pr(\textrm{some node is uninformed when }u\textrm{ halts firstly in epoch }i)\\
\leq\quad & n\cdot\sum_{i=I_b}^{\infty}\exp(-\Theta(i\cdot\lg{n}))=n^{-\Omega(1)}
\end{align*}
Notice, the last inequality is due to Lemma \ref{lemma-multicastadp-halt-imply-informed}.

Finally, we note that when $T=o(C)$, all nodes will halt by the end of the first epoch, with high probability. In such case, the time and energy cost for each node is $O(R_{I_b})$ and $O(R_{I_b}p_{I_b})$, respectively. This results in the $\tau_{time}$ and $\tau_{cost}$ terms in the theorem statement.
\end{proof}

\section{Omitted Parts in the Analysis of \multicastadvadp}\label{sec-app-multicastadvadp-proof}

This section contains two parts: (a) some auxiliary material (such as additional notations and lemmas); and (b) proofs that are omitted in the main body of the paper.

\subsection{Auxiliary Material}\label{subsec-app-multicastadvadp-aux-material}

Consider a step in an $(i,j)$-phase, recall $p_u$ denotes $u$'s working probability in this step. We introduce the following definitions to facilitate presentation and analysis: (a) $P_V=(\sum_{u\in V}{p_u})/C$, and $P_M=(\sum_{u\in M}{p_u})/C$. (b) Consider a slot in the step in which $u$ listens on a channel not jammed by Eve, define $p_u^c$ be the probability that $u$ hears silence, and $p_u^m$ be the probability that $u$ hears message $m$. Clearly, $p_u^c=\prod_{v\in V\setminus\{u\}}(1-p_v/C)$, and $p_u^m=\sum_{v\in M\setminus\{u\}}((p_v/C)\cdot \prod_{w\in V\setminus\{u,v\}}(1-(p_w/C)))$.

\smallskip The following simple fact will be helpful in various places.

\begin{fact}\label{fact-pcu-pmu-bounds}
$e^{-2P_V}\leq p_u^c\leq e^{-P_V}/(1-p_u/C)$ and $(P_M-p_u/C)e^{-2P_V}\leq p_u^m\leq P_M$.
\end{fact}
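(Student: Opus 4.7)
The plan is to verify each of the four inequalities by direct calculation from the definitions of $p_u^c$ and $p_u^m$, leaning on three standard analytic estimates: $1-x \le e^{-x}$ for $x \in [0,1]$, $1-x \ge e^{-2x}$ for $x \in [0,1/2]$, and $e^{x} \le 1/(1-x)$ for $x \in [0,1)$. The implicit working hypothesis is that $p_w/C \le 1/2$ for every active $w$, which is enforced by the way \multicastadvadp initializes $p_u^{i,0}=C/2^i$ at the start of each super-epoch $i \ge I_b$ and only increases it by a factor of at most $2^{\max\{0,\eta-2.5\}}$ per phase.

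For $p_u^c = \prod_{v \in V \setminus \{u\}}(1-p_v/C)$, I would apply $1-x \le e^{-x}$ termwise to obtain $p_u^c \le \exp\bigl(-\sum_{v \in V \setminus \{u\}} p_v/C\bigr) = e^{-(P_V - p_u/C)}$, then absorb the residual factor $e^{p_u/C}$ via $e^{p_u/C} \le 1/(1-p_u/C)$, yielding the claimed upper bound $e^{-P_V}/(1-p_u/C)$. The matching lower bound follows by applying $1-x \ge e^{-2x}$ termwise, so the product is at least $\exp\bigl(-2(P_V - p_u/C)\bigr) \ge e^{-2P_V}$.

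For $p_u^m$, the upper bound is immediate: bounding each inner factor $(1-p_w/C)$ by $1$ gives $p_u^m \le \sum_{v \in M \setminus \{u\}} p_v/C \le P_M$. For the lower bound, I would again use $1-x \ge e^{-2x}$ on the inner product, getting $\prod_{w \in V \setminus \{u,v\}}(1-p_w/C) \ge e^{-2(P_V - p_u/C - p_v/C)} \ge e^{-2P_V}$, so that $p_u^m \ge e^{-2P_V} \sum_{v \in M \setminus \{u\}} p_v/C$. It remains to notice that $\sum_{v \in M \setminus \{u\}} p_v/C \ge P_M - p_u/C$ (with equality precisely when $u \in M$; when $u \notin M$ the sum equals $P_M$, which is still at least $P_M - p_u/C$).

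I do not expect a genuine obstacle here: all four inequalities are routine manipulations of products of the form $\prod(1-p_v/C)$. The only thing to be careful about is the range of validity of $1-x \ge e^{-2x}$, which is why the $p_w/C \le 1/2$ invariant matters; once this is in place the computation is essentially one line per inequality.
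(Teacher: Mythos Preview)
Your proposal is correct and essentially matches the paper's own proof: both arguments bound the product $\prod(1-p_w/C)$ termwise via $e^{-2x}\le 1-x\le e^{-x}$ and handle the missing $u$ (and $v$) factors explicitly. The only cosmetic difference is that the paper routes through the full product $\prod_{v\in V}(1-p_v/C)$ as an intermediate quantity (writing $p_u^c=\prod_{v\in V}(1-p_v/C)/(1-p_u/C)$), whereas you work directly with the product over $V\setminus\{u\}$ and then absorb $e^{p_u/C}$ via $e^x\le 1/(1-x)$; the two are equivalent.
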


\begin{proof}
On the one hand:

$$e^{-2P_V}\leq\prod_{v\in V}\left(1-\frac{p_v}{C}\right)\leq p_u^c\leq\frac{\prod_{v\in V}\left(1-\frac{p_v}{C}\right)}{1-p_u/C}\leq\frac{e^{-P_V}}{1-p_u/C}$$

On the other hand:

$$\left(P_M-\frac{p_u}{C}\right)e^{-2P_V}\leq \sum_{v\in M\setminus\{u\}}\left(\frac{p_v}{C}\cdot\prod_{w\in V}\left(1-\frac{p_w}{C}\right)\right)\leq p^m_u\leq P_M$$
\end{proof}

The following corollary, which states the working probability of a node cannot increase by a factor more than $\sqrt{2}$ after a phase, is a direct consequence of part (1) of Claim \ref{claim-multicastadvadp-pu-pv-bounded}:

\begin{corollary}\label{cor-multicastadvadp-pu-inc-bounded}
Consider an $(i,j)$-phase and an active node $u$, $p^{i,j+1}_u\leq p_u^{i,j}\cdot\sqrt{2}$ with probability at least $1-\exp(-\Theta(i^3C))$.
\end{corollary}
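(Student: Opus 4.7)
The plan is to reduce the corollary directly to part (1) of Claim~\ref{claim-multicastadvadp-pu-pv-bounded} applied to each of the three steps of the $(i,j)$-phase. Recall the algorithm's update rule: $p_u^{i,j+1}=p_u^{i,j}\cdot 2^{\max\{0,\eta_u^{i,j}-2.5\}}$, where $\eta_u^{i,j}=N_u^{step1,c}/\Delta_u^{step1}+N_u^{step2,c}/\Delta_u^{step2}+N_u^{step3,c}/\Delta_u^{step3}$. So to conclude $p_u^{i,j+1}\leq p_u^{i,j}\cdot\sqrt{2}=p_u^{i,j}\cdot 2^{1/2}$, it suffices to show $\eta_u^{i,j}\leq 3$, i.e., that each of the three summands is bounded by $1$.

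First I would invoke part (1) of Claim~\ref{claim-multicastadvadp-pu-pv-bounded} with the role of the generic step of length $R$ played by step one of the phase (which has length $R_i$ and working probability $p_u^{i,j}$). This yields $N_u^{step1,c}/\Delta_u^{step1}\leq 1$ except with probability at most $\exp(-\Theta(i^3C))$. I would then apply the same claim verbatim to step two (same $R_i$ and working probability) and to step three (where the working probability is $p_{step3}^{i}=C^2/2^i$; since Claim~\ref{claim-multicastadvadp-pu-pv-bounded} is stated for an arbitrary active node's working probability, it applies here too, yielding the analogous bound $N_u^{step3,c}/\Delta_u^{step3}\leq 1$ with the same failure probability).

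Finally, I would take a union bound over the three failure events. On the complement, $\eta_u^{i,j}\leq 1+1+1=3$, so $\max\{0,\eta_u^{i,j}-2.5\}\leq 1/2$, giving $p_u^{i,j+1}\leq p_u^{i,j}\cdot\sqrt{2}$. The total failure probability is $3\exp(-\Theta(i^3C))=\exp(-\Theta(i^3C))$, matching the corollary's claim.

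There is essentially no obstacle here: all the technical work (handling the adaptive adversary via coupling, grouping jamming-result vectors into the $\binom{R+C}{C}$ categories, and the Chernoff step) is already encapsulated in part (1) of Claim~\ref{claim-multicastadvadp-pu-pv-bounded}. The only mild point to check is that the claim's hypotheses (active node, working probability $p_u$, step length $R$) are satisfied for step three with the substitution $p_u\leftarrow p_{step3}^i$, which is immediate from the algorithm's pseudocode since step three is structurally identical to steps one and two apart from the working probability.
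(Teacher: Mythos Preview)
Your proposal is correct and is exactly the argument the paper intends: it states the corollary as ``a direct consequence of part (1) of Claim~\ref{claim-multicastadvadp-pu-pv-bounded}'', and your write-up simply unpacks that sentence by applying the claim once per step and union-bounding to force $\eta_u^{i,j}\leq 3$.
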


The following lemma shows, for two nodes $u$ and $v$, $\eta_u$ and $\eta_v$ cannot differ much after a phase. Its analysis uses a coupling argument which is similar to that of Claim \ref{claim-multicastadvadp-pu-pv-bounded}.

\begin{lemma}\label{lemma-multicastadvadp-eta-diff-bounded}
Consider an $(i,j)$-phase where $i>\lg n$. The probability that both (a) some node $u$ has $\eta_u>2.5$, and (b) some node $v$ has $\eta_v<2.4$, is at most $\exp(-\Theta(i^3C))$.
\end{lemma}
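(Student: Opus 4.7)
\medskip

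\noindent\textbf{Proof proposal for Lemma \ref{lemma-multicastadvadp-eta-diff-bounded}.}
The key observation is that $\eta_u$ and $\eta_v$ are each a sum of three normalized silent-slot ratios, one per step. Write $r_u^* = N_u^{step*,c}/\Delta_u^{step*}$ for $* \in \{1,2,3\}$, so $\eta_u = r_u^1 + r_u^2 + r_u^3$. The hypothesis ``$\eta_u > 2.5$ and $\eta_v < 2.4$'' gives $\sum_{*=1}^{3}(r_u^* - r_v^*) > 0.1$, and by pigeonhole at least one step witnesses $r_u^* - r_v^* > 1/30$. Hence it suffices to bound, per fixed pair $(u,v)$ and per step, the probability of such a single-step gap by $\exp(-\Theta(i^3 C))$, then take a union bound over the $3\binom{n}{2}$ pair/step choices.

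To bound a single-step gap, I would use a small generalization of Claim~\ref{claim-multicastadvadp-pu-pv-bounded}(2). The coupling proof there (lists $\bm{T}_0,\ldots,\bm{T}_C$; bijection $\Psi_{Q_i}$; partition of $\mathcal{Z}$ into $\mathcal{Z}_1$ versus $\mathcal{Z}\setminus\mathcal{Z}_1$ by the threshold $0.15\,RC/\alpha$) carries over verbatim when $0.2$ and $0.1$ are replaced by any two constants $a > b$ with constant gap: one sets the threshold to $((a+b)/2)\cdot RC/\alpha$, and the same two-sided Chernoff bounds yield $\Pr[r_u^* > a \wedge r_v^* < b] \leq \exp(-\Theta(i^3 C))$. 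The Chernoff exponent scales with $\Delta_u^{step*} = \Omega(R_i p_u) = \Omega(i^3 C)$ (using $p_u \geq C/2^i$, which is maintained because working probabilities only ever increase), comfortably dominating the $|\mathcal{Z}| \leq (2R)^C = \exp(O(iC))$ blowup from the category union bound.

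I then convert ``$r_u^* - r_v^* > 1/30$'' into such an event by discretizing $[0,1]$ into a fixed grid $c_0 < c_1 < \cdots < c_K$ of spacing $1/90$, with $K = O(1)$ points. Whenever $r_u^* - r_v^* > 1/30$, some grid point $c_j$ lies in $(r_v^* + 1/90,\, r_u^* - 1/90)$, so setting $a = c_j + 1/90$ and $b = c_j - 1/90$ places us in an event covered by the generalized claim. Union bounding over the $O(1)$ grid points and the three steps preserves the bound $\exp(-\Theta(i^3 C))$ per pair. A final union bound over $\binom{n}{2}$ pairs costs a factor $n^2 = e^{2\ln n}$, which is absorbed because $i > \lg n$ forces $i^3 C \gg \lg n$, giving the claimed $\exp(-\Theta(i^3 C))$.

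The main obstacle I anticipate is not conceptual but bookkeeping: one must carefully verify that generalizing Claim~\ref{claim-multicastadvadp-pu-pv-bounded}(2) to arbitrary constant-gap $(a,b)$ does not disturb (i) the coupling setup, which is oblivious to the particular threshold, or (ii) the Chernoff exponent, which scales as $\Theta(\Delta_u) = \Omega(i^3 C)$ for both steps 1--2 (with $R = R_i$, $p_u \geq C/2^i$) and step 3 (with $R = R_i$, $p_{step3}^i = C^2/2^i$, hence $\Delta_u^{step3} \geq a \cdot i^3 C^2/2 = \Omega(i^3 C)$). Both checks are routine given the coupling machinery already developed; no appeal to Claim~\ref{claim-multicastadvadp-pu-pv-bounded}(3) is needed, which is fortunate since part (3) only offers the weaker $\exp(-\Theta(iC))$ bound.
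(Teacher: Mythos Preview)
Your approach is correct but differs from the paper's. You localize the $0.1$ gap in $\eta_u-\eta_v$ to a single step via pigeonhole, then discretize the per-step ratio into an $O(1)$-size grid and reduce each case to a constant-gap variant of Claim~\ref{claim-multicastadvadp-pu-pv-bounded}(2). The paper instead treats the three steps jointly: it works in the product category space $\vec{\mathcal{Z}}=\mathcal{Z}\times\mathcal{Z}\times\mathcal{Z}$, partitions by whether the combined ``expected $\eta$'' quantity $\alpha_1 L(\bm{z}^{step1})+\alpha_2 L(\bm{z}^{step2})+\alpha_3 L(\bm{z}^{step3})$ lies below or above $2.45\,R_iC$, and applies a single Chernoff bound to $\sum_* Y_u^{step*}/\Delta_u^{step*}>2.5$ on one side and to $\sum_* Y_v^{step*}/\Delta_v^{step*}<2.4$ on the other, absorbing $|\vec{\mathcal{Z}}|\le (2R)^{3C}$ and the $n^2$ pair-count into the exponent.

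The paper's route is cleaner: one partition, two Chernoff applications, no discretization. Your route is more modular---it literally black-boxes the per-step claim---and keeps the category count at $(2R)^C$ rather than $(2R)^{3C}$, though both are swallowed by $\exp(\Theta(i^3C))$. The only caveats in your argument are minor: when the grid point $c_j$ sits at the boundary (so $b=c_j-1/90\le 0$ or $a=c_j+1/90>1$), the event is either vacuous or handled directly by Claim~\ref{claim-multicastadvadp-pu-pv-bounded}(1), and the Chernoff constant depends on the specific threshold pair $(a,b)$, but with only $O(1)$ such pairs you may take the worst constant.
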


\begin{proof}
Let $R_i$ be the number of slots in each step of the phase. Similar to the proof of Claim \ref{claim-multicastadvadp-pu-pv-bounded}, we use a coupling argument. Specifically, for each step $*$ in $\{1,2,3\}$, nodes' behavior is determined by $R_i$ chunks from lists $(\bm{T}_{step*,1},\bm{T}_{step*,2},\cdots,\bm{T}_{step*,C})$. Let $\vec{\mathcal{Q}}=\mathcal{Q}\times\mathcal{Q}\times\mathcal{Q}$, and $\vec{\mathcal{Z}}=\mathcal{Z}\times\mathcal{Z}\times\mathcal{Z}$. The function $\vec{K}:\vec{\mathcal{Q}}\to \vec{\mathcal{Z}}$ is defined using function $K$: $\vec{K}(\vec{\bm{Q}})=\vec{K}(\langle \bm{Q}^{step1},\bm{Q}^{step2},\bm{Q}^{step3} \rangle)=\langle K(\bm{Q}^{step1}),K(\bm{Q}^{step2}),K(\bm{Q}^{step3}) \rangle$.


For each step $*$ in $\{1,2,3\}$, define $X^{step*}_{u},Y^{step*}_{u}$ by borrowing definition of $X_{u},Y_{u}$ from the proof of Claim \ref{claim-multicastadvadp-pu-pv-bounded} respectively, so $X_u(\bm{Q}^{step*})$ still equals $Y_u^{step*}(K(\bm{Q}^{step*}))$. 
Notice that for each step $*$ in $\{1,2,3\}$, the ratio $\mathbb{E}[X_u^{step*}]/\Delta_u^{step*}$ are identical for all nodes. Let $\alpha_{1}=\alpha_{2}=\prod_{w\in V}(1-p_w/C)$, and $\alpha_{3}=\prod_{w\in V}(1-p_{step3}/C)$. Let $\vec{\mathcal{Z}}'=\{\vec{\bm{z}}=\langle\bm{z}^{step1},\bm{z}^{step2},\bm{z}^{step3}\rangle:\alpha_{1}\cdot L(\bm{z}^{step1})+\alpha_{2}\cdot L(\bm{z}^{step2})+\alpha_{3}\cdot L(\bm{z}^{step3})\leq 2.45R_iC\}$. At this point, we conclude:
\begin{align*}
& \Pr\left(\exists u,v~:~(\eta_u>2.5)\wedge(\eta_v<2.4)\right)\\
\leq & \sum_{u\in V}\sum_{v\in V}
\left(
\sum_{\vec{\bm{z}}\in\vec{\mathcal{Z}}'}\Pr\left(\sum_{*=1}^{3}\frac{Y^{step*}_u(\bm{z}^{step*})}{\Delta^{step*}_u}>2.5\right)+
\sum_{\vec{\bm{z}}\in\vec{\mathcal{Z}}\setminus\vec{\mathcal{Z}}'}\Pr\left(\sum_{*=1}^{3}\frac{Y^{step*}_v(\bm{z}^{step*})}{\Delta^{step*}_v}<2.4\right)
\right)\\
\leq & n^2\cdot(2R)^{3C}\cdot\exp(-\Theta(i^3C))=\exp(-\Theta(i^3C))
\end{align*}

This completes the proof of the lemma.
\end{proof}

The following lemma shows $P_V$ has a natural upper bound. This is because when $P_V$ is large, nodes cannot hear too many silent slots due to contention among themselves.

\begin{lemma}\label{lemma-multicastadvadp-Pv-bounded}
Consider a super-epoch $i>\lg{n}$. With probability at least $1-\exp(-\Theta(i^3C))$, $P_V=(\sum_{u\in V}{p_u})/C\leq 1/2$ holds at any phase in the super-epoch.
\end{lemma}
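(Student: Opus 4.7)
The plan is to proceed by induction on phases within super-epoch $i$, maintaining the invariant $P_V \leq 1/2$ at the start of every phase. For the base case, at the beginning of super-epoch $i$ each active node has $p_u^{i,0}=C/2^i$, so $P_V = n/2^i \leq 1/2$ when $i > \lg n$ (using $i\geq\lfloor\lg n\rfloor+1$). For the inductive step, suppose $P_V \leq 1/2$ at the start of some phase $j$, and write $P_V'$ for its value at the start of phase $j+1$; I would split into two complementary sub-cases according to the current value of $P_V$.

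In sub-case A, $P_V \leq 1/(2\sqrt{2})$, so even the crudest possible growth suffices: Corollary \ref{cor-multicastadvadp-pu-inc-bounded} gives $p_u^{i,j+1}\leq \sqrt{2}\,p_u^{i,j}$ per node with failure probability $\exp(-\Theta(i^3C))$, and a union bound over the $n\leq 2^i$ active nodes is absorbed to give $P_V'\leq \sqrt{2}\,P_V \leq 1/2$. In sub-case B, $1/(2\sqrt{2}) < P_V \leq 1/2$. Here the goal is to show that w.h.p.\ no $\eta_u$ exceeds the raising threshold $2.5$, in which case $p_u^{i,j+1}=p_u^{i,j}$ for every $u$ and thus $P_V'=P_V\leq 1/2$. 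The expectation sits comfortably below the threshold: Fact \ref{fact-pcu-pmu-bounds} gives $\mathbb{E}[\eta_u^{step}] = \mathbb{E}[N_u^{step,c}]/\Delta_u^{step} \leq p_u^c(1-p_u/C) \leq e^{-P_V}$, so $\mathbb{E}[\eta_u]\leq 3e^{-P_V}\leq 3e^{-1/(2\sqrt{2})}<2.15$, leaving a constant gap to $2.5$.

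The main obstacle is turning this expectation bound into a high-probability bound in the presence of an adaptive Eve: the summands of $N_u^{step,c}$ depend on the $Q_i$'s, which in turn depend on past node behavior, so Chernoff does not apply directly. I would follow the coupling template of Claim \ref{claim-multicastadvadp-pu-pv-bounded} (extended as in Lemma \ref{lemma-multicastadvadp-eta-diff-bounded}): introduce $C$ chunk lists per step, let nodes' behavior in a slot with jamming result $Q_i$ be $\Psi_{Q_i}(T^{(\cdot)}_{step*,|Q_i|})$, and couple $\eta_u$ to a variable $\tilde{\eta}_u(\vec{K}(\vec{\bm Q}))$ indexed by the signature $\vec{K}(\vec{\bm Q})\in\vec{\mathcal{Z}}$ counting how many slots in each step had each value of $|Q_i|$. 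Conditional on any fixed signature, $\tilde{\eta}_u$ is a sum of independent indicators with mean $\Theta(\Delta_u^{step})$, where $\Delta_u^{step}=\Theta(R_i p_u)\geq \Theta(R_i\cdot C/2^i)=\Theta(i^3C)$ because $p_u$ only ever grows from its initial value $C/2^i$. A Chernoff bound then yields per-signature deviation probability $\exp(-\Theta(i^3C))$, and a union bound over the $|\vec{\mathcal{Z}}|\leq(2R_i)^{3C}=\exp(O(Ci))$ signatures is absorbed since $i^3C\gg Ci$.

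Putting everything together, each phase fails with probability at most $\exp(-\Theta(i^3C))$ after union bounds over active nodes and signatures, and a final union bound over the $bi$ phases of super-epoch $i$ preserves this bound, which is exactly the conclusion of the lemma.
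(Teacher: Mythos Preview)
Your overall structure---base case, then induction with a ``small $P_V$'' sub-case handled by Corollary~\ref{cor-multicastadvadp-pu-inc-bounded} and a ``large $P_V$'' sub-case where you argue $\eta_u\le 2.5$---matches the paper exactly (the paper uses the threshold $0.35$ where you use $1/(2\sqrt2)$, which is the same idea). The proof is essentially correct, but you are working much harder than necessary in sub-case~B, and there is one small imprecision.

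\textbf{The unnecessary coupling.} You treat the adaptive adversary as the main obstacle and invoke the full signature-coupling machinery of Claim~\ref{claim-multicastadvadp-pu-pv-bounded}. The paper instead exploits a monotonicity that makes this lemma much easier than the others: jamming a channel can only \emph{prevent} a node from hearing silence, never create silence. Hence $N_u^{c,step*}$ under any adaptive strategy is stochastically dominated by $N_u^{c,step*}$ under the fixed ``no jamming'' strategy. With no jamming the per-slot indicators are genuinely independent (nodes' behavior in different slots is independent and there is no adversary feedback loop), so an ordinary Chernoff bound applies directly---no signatures, no union bound over $(2R_i)^{3C}$ categories. Your argument is not wrong, just heavier than needed; the monotonicity observation is worth internalizing because it identifies exactly which inequalities in this paper need coupling and which do not.

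\textbf{The step-3 bound.} Your claim $\mathbb{E}[\eta_u]\le 3e^{-P_V}$ implicitly applies the same $e^{-P_V}$ bound to step~3, but step~3 uses the fixed probability $p_{step3}=C^2/2^i$, so its contention $n\,p_{step3}/C=nC/2^i$ is unrelated to $P_V$. The paper simply bounds the step-3 contribution by~$1$ (it is a ratio of an observed count to its no-jamming expectation) and checks $0.725+0.725+1.05\le 2.5$. With your threshold this becomes $2e^{-1/(2\sqrt2)}+1\approx 2.40<2.5$, so the gap survives and your proof still goes through, but the stated inequality $\mathbb{E}[\eta_u]\le 3e^{-P_V}$ is not justified as written.
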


\begin{proof}
It is easy to verify $P^{i,0}_V\leq n/2^i\leq 1/2$. Recall Corollary \ref{cor-multicastadvadp-pu-inc-bounded} shows after each phase each node will increase its working probability by a factor of at most $\sqrt{2}$. Hence, due to the fact that $0.35\cdot \sqrt{2}<1/2$, to prove the lemma, it suffices to show when $P_V\geq 0.35$, $P_V$ will not increase in subsequent phases, with probability at least $1-\exp(-\Theta(i^3C))$.

Fix a node $u$ and a phase in which $P_V\geq 0.35$ at the beginning. To make $\eta_u$ as large as possible, assume Eve does no jamming during the phase. Hence, $\mathbb{E}[N_u^{c,step1}]=\mathbb{E}[N_u^{c,step2}]=R_ip_u\cdot p_u^c\leq R_ip_u\cdot e^{-P_V}/(1-p_u/C)\leq e^{-P_V}\Delta^{step1}_u\leq 0.71\Delta^{step1}_u=0.71\Delta^{step2}_u$, and $\mathbb{E}[N_u^{c,step3}]\leq \Delta^{step3}_u$. By a Chernoff bound, the probability that both $N^{c,step1}_u/\Delta^{step1}_u$ and $N^{c,step2}_u/\Delta^{step2}_u$ does not exceed $0.725$, and $N^{c,step3}_u/\Delta^{step3}_u$ does not exceed $1.05$ (these three events lead to $\eta_u\leq 2.5$) is at least $1-2\cdot \exp(-\Theta(\Delta^{step1}_u))-\exp(-\Theta(\Delta^{step3}_u))\geq 1-\exp(-\Theta(i^3C))$. Finally, take a union bound over $O(i)$ phases and another union bound over $O(n)$ nodes, the lemma is proved.
\end{proof}

The following lemma shows nodes cannot become \texttt{helper} in early super-epochs, as sending probabilities in these super-epochs are too high for nodes to hear enough silent slots.

\begin{lemma}\label{lemma-multicastadvadp-no-helper-in-early-epoch}
For each node $u$, the probability that it becomes \texttt{helper} in some phase where all nodes are active within some super-epoch $i\leq\lg(nC)$, is at most $n^{-\Omega(1)}$.
\end{lemma}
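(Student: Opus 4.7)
The plan is to show that for any super-epoch $i$ with $I_b \leq i \leq \lg(nC)$ and any phase in which all nodes are active, the ``$\eta_u \geq 2.4$'' condition needed for $u$ to become \texttt{helper} fails with sufficiently high probability; since this condition is necessary, ruling it out is enough.

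The key observation I would exploit is that in step three the working probability is fixed at $p^i_{step3} = C^2/2^i$, so $P_V^{step3} = np^i_{step3}/C = nC/2^i \geq 1$ whenever $i \leq \lg(nC)$. By Fact \ref{fact-pcu-pmu-bounds}, this forces $\mathbb{E}[N_u^{step3,c}]/\Delta_u^{step3} \leq (1-p^i_{step3}/C)^n \leq e^{-P_V^{step3}} \leq e^{-1} \approx 0.368$. I would then run a coupling argument analogous to that in Claim \ref{claim-multicastadvadp-pu-pv-bounded}---partitioning the step-three jamming result vectors into the $|\mathcal{Z}|\leq(2R_i)^C$ categories determined by $K(\bm{Q})$ and applying Chernoff within each category---to obtain $\Pr[N_u^{step3,c}/\Delta_u^{step3}>e^{-1}+0.01]\leq\exp(-\Omega(ai^3C^2))$, using $\Delta_u^{step3}\geq ai^3C^2$.

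For steps one and two I would directly invoke Claim \ref{claim-multicastadvadp-pu-pv-bounded} part (1) to get $\Pr[N_u^{step*,c}/\Delta_u^{step*} > 1] \leq \exp(-\Omega(ai^3C))$ for each of the two steps; here $\Delta_u^{step*} \geq R_i p_u^{i,j} \geq R_i p_u^{i,0} = ai^3C$, because $p_u^{i,j}$ can only grow from its initial value $C/2^i$ and $\Delta_u^{step*}$ is monotone in $p_u$. A union bound then gives that, with probability at least $1-\exp(-\Omega(ai^3C))$, $\eta_u \leq 1 + 1 + e^{-1} + 0.01 < 2.4$, so $u$ does not become \texttt{helper} at the end of the phase. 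A final union bound over the $bi$ phases in each super-epoch and the super-epochs $I_b \leq i \leq \lg(nC)$, together with the fact that $I_b = 2\lg C + 20 \geq 20$ forces $I_b^3 C \geq 8000$ for any $C \geq 1$, yields total failure probability $O(\lg^2(nC)) \cdot \exp(-\Omega(a))$, which is $n^{-\Omega(1)}$ once the large constant $a$ is chosen sufficiently large in terms of the desired polynomial exponent.

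The hard part will be the tight numerical slack: $2.4 - (2 + e^{-1}) \approx 0.032$ leaves very little room for concentration error, so the coupling must deliver a concentration bound with a small additive deviation (I picked $0.01$ above), and the constants in the Chernoff application have to be tracked explicitly rather than swept into asymptotic notation. A secondary but minor point is that evaluating $\Delta_u^{step*}$ at the possibly grown $p_u^{i,j}$ rather than at $p_u^{i,0}$ only strengthens these bounds, since larger $P_V^{step1,2}$ further reduces $p^c_u$ in steps one and two as well; hence the worst case we need to rule out is indeed the one handled above.
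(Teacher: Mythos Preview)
Your overall strategy---bound $N_u^{step3,c}/\Delta_u^{step3}$ away from $0.4$ and combine with part~(1) of Claim~\ref{claim-multicastadvadp-pu-pv-bounded} on steps one and two---is the same as the paper's, but the final union bound does not close. The per-phase failure probability you obtain is $\exp(-\Omega(ai^3C))$, and you then argue this is $n^{-\Omega(1)}$ because $a$ can be chosen large. But $a$ is a fixed constant hard-wired into the algorithm, independent of $n$. When $C=1$ and $i=I_b=20$, your bound is $\exp(-\Omega(8000a))$, which is just some fixed number strictly below~$1$; multiplying by the $O(\lg^2(nC))$ phases actually makes the bound \emph{diverge} as $n\to\infty$. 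So the claimed $n^{-\Omega(1)}$ simply does not follow.

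What you are missing is that the exponent in the concentration bound is not really $\Theta(i^3C)$ uniformly: it also depends on how far the mean $\alpha\Delta_u$ sits below the threshold, and $\alpha=(1-p_{step3}/C)^n$ shrinks dramatically when $i$ is small. The paper exploits this via a case split. For $i\le(\lg n)/2$ one has $np_{step3}/C=nC/2^i\ge C\sqrt{n}$, so the probability that $u$ hears silence in any single step-three slot is at most $\exp(-\Omega(C\sqrt{n}))$; a union bound over the $O(\sqrt{n}\cdot\lg^4 n)$ step-three slots in these super-epochs shows $N_u^{step3,c}=0$ w.h.p., forcing $\eta_u\le 2<2.4$. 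For $(\lg n)/2<i\le\lg(nC)$ the Chernoff exponent $\Theta(i^3C)$ is already $\Omega(\lg^3 n)$, so the bound you wrote \emph{is} $n^{-\Omega(1)}$ there and the union bound closes. Without this split (or an equivalent way of extracting the extra factor of $\ln(1/\alpha)$ from the Chernoff bound when $\alpha$ is tiny), your argument fails for small super-epochs.

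A minor point: the coupling you propose for step three is unnecessary. You only need an \emph{upper} bound on $N_u^{step3,c}$, and jamming can only decrease it, so one may assume Eve does no jamming; the indicators are then independent and Chernoff applies directly. The paper takes this shortcut.
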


\begin{proof}
Fix a node $u$ and a phase in super-epoch $i$, we intend to upper bound $N^{c,step3}_u$. Therefore, assume Eve does no jamming during step three.

If $i\leq(\lg{n})/2$, then in a slot in which $u$ listens, the probability that $u$ hears silence is $(1-p_{step3}/C)^{n-1}\leq\exp(-(n-1)C/2^i)\leq \exp(-0.9C\sqrt{n})=n^{-\Omega(1)}$. Take a union bound over $\sum_{i=1}^{(\lg n)/2}bi\cdot a2^{i}i^3=O(\sqrt{n}\lg^4{n})$ slots, with high probability $u$ will never hear a silent slot during step three by the end of super-epoch $(\lg{n})/2$. Together with part (1) of Claim \ref{claim-multicastadvadp-pu-pv-bounded}, we know with high probability $\eta_u\leq 2.4$ always holds. As a result, $u$ will not be \texttt{helper} by the end of super-epoch $(\lg{n})/2$, with high probability.

Next, let us assume $(\lg{n})/2<i\leq\lg{(nC)}$. We know $\mathbb{E}[N^{c,step3}_u]= \Delta_u^{step3}\cdot(1-p_{step3}/C)^{n}\leq\Delta_u^{step3}\cdot e^{-np_{step3}/C}\leq\Delta_u^{step3}/e$ since $p_{step3}=C^2/2^i\geq C/n$. Thus by a Chernoff bound, with probability at least $1-\exp(-\Theta(i^3C^2))$, $N^{c,step3}_u$ will not reach $0.4\Delta_u^{step3}$. Now, together with part (1) of Claim \ref{claim-multicastadvadp-pu-pv-bounded}, and apply a union bound over all phases for super-epochs $(\lg{n})/2$ to $\lg(nC)$, the probability that $u$ becomes \texttt{helper} in these super-epochs is at most $\sum_{i=(\lg{n})/2}^{\lg(nC)}{bi\cdot\exp(-\Theta(i^3C))}=n^{-\Omega(1)}$.
\end{proof}

The following lemma gives an upper bound on the estimates of $n$ when nodes become \texttt{helper}. It also suggests when nodes' working probabilities in step two are too small, they will not become \texttt{helper}, as the number of messages heard is not enough.

\begin{lemma}\label{lemma-multicastadvadp-estimate-upper-bound}
For each node $u$, the probability that it becomes \texttt{helper} with $n_u>4n$ in some super-epoch $i>\lg n$ is at most $n^{-\Omega(1)}$.
\end{lemma}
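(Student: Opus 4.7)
The plan is to show that the condition $n_u>4n$ forces the working probability $p_u^{i,j}$ to be so small that $u$ cannot hear the message $ai^3$ times in step two, and hence fails the helper criterion. Since $n_u=C/((p_u^{i,j})^2\cdot 2^i)$, the event $n_u>4n$ is equivalent to $p_u^{i,j}<\tfrac{1}{2}\sqrt{C/(n\cdot 2^i)}$. First, fix a super-epoch $i>\lg n$ and a phase $j$, and condition on the bounded-difference event of Lemma~\ref{lemma-multicastadvadp-pu-pv-bounded}, which holds except with probability $\exp(-\Theta(iC))$. Under it every active $v$ satisfies $p_v\leq 2p_u^{i,j}$, so $P_V\leq 2np_u^{i,j}/C$ and therefore $p_u^m\leq P_M\leq 2np_u^{i,j}/C$. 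Combining with the assumed upper bound on $p_u^{i,j}$ and with $R_i=a\cdot 2^i\cdot i^3$ gives $R_i\cdot p_u^{i,j}\cdot p_u^m<ai^3/2$; that is, the step-two message expectation is already below half the helper threshold $ai^3$.

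The next step is to handle Eve's adaptivity via a one-sided coupling confined to step two. Condition on the entire state at the start of step two (in particular $M$ and every $p_v$) and construct an alternate execution on the same probability space, using identical nodes' random bits during step two, but in which Eve performs no jamming during step two. Because jamming can only turn a reception of $m$ into noise, each actual per-slot indicator $Z_k$ is dominated pointwise by its alternate-execution counterpart $Z_k'$. Crucially, the algorithm is designed so that $M_u$ updates occur only in step one, so $M$ remains frozen throughout step two; combined with across-slot independence of nodes' random bits, the $Z_k'$'s are mutually independent Bernoullis with expectation $p_u^{i,j}\cdot p_u^m$. A Chernoff bound then yields
\[
\Pr\bigl[N_u^{step2,m}\geq ai^3\,\big|\,\text{state}\bigr]\leq\Pr\Bigl[\sum_k Z_k'\geq ai^3\,\Big|\,\text{state}\Bigr]\leq\exp(-\Theta(ai^3)),
\]
and averaging over good states preserves this bound.

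Finally, sum the failure probabilities over the $bi$ phases of each super-epoch and over all $i>\lg n$:
\[
\sum_{i>\lg n}bi\cdot\bigl(\exp(-\Theta(iC))+\exp(-\Theta(ai^3))\bigr)=n^{-\Omega(1)},
\]
which is the desired bound. The main obstacle will be the coupling step. It relies crucially on $M$ being frozen throughout step two so that the no-jamming alternate execution genuinely dominates the actual one; this is precisely why the algorithm restricts $M_u$-updates to step one. Without the frozen-$M$ property, one would have to fall back on the heavier category-coupling of Claim~\ref{claim-multicastadvadp-pu-pv-bounded}, and a quick check suggests that its $(2R_i)^C$ factor might not be absorbable by the per-category $\exp(-\Theta(ai^3))$ bound in every $(C,i)$ regime, so the stochastic-dominance shortcut is what keeps the argument clean.
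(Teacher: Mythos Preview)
Your proposal is correct and follows essentially the same route as the paper: translate $n_u>4n$ into $p_u<\sqrt{C/(4n\cdot 2^i)}$, use the bounded-difference property to get $p_u^m\leq P_M\leq 2np_u/C$, conclude the step-two message expectation is below $ai^3/2$, and finish with a Chernoff bound under the no-jamming coupling. Your explicit justification of the one-sided coupling via the frozen-$M$ property in step two is exactly the point the paper's ``assume Eve does no jamming so as to maximize $N_u^{m,step2}$'' is relying on, and your summation matches the paper's union bound.
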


\begin{proof}
Due to Corollary \ref{cor-multicastadvadp-pu-inc-bounded}, we know starting from super-epoch $\lg{n}+1$, for any two nodes $w$ and $v$, $1/2\leq p_w/p_v\leq 2$ holds throughout the entire execution, with probability at least $1-n^{-\Omega(1)}$. Assume $1/2\leq p_w/p_v\leq 2$ indeed always holds. Fix a phase in a super-epoch $i$. If $u$ becomes \texttt{helper} in this phase with $n_u>4n$, then its working probability in step two $p_u$ is less than $\sqrt{{C}/{(4n2^i)}}$. Now, let $N_u^{m,step2}$ denote the number of messages $u$ heard during step two, and assume Eve does no jamming during step two so as to maximize $N_u^{m,step2}$. We have $\mathbb{E}[N_u^{m,step2}]\leq R_ip_u\cdot p^u_m\leq R_ip_u\cdot P_M\leq ai^3/2$ since $P_M\leq P_V\leq 2np_u/C\leq \sqrt{{n}/{(2^iC)}}$. Apply a Chernoff bound, $u$ will hear message at least $ai^3$ times in step two with probability at most $\exp(-\Omega(i^3))$. Take a union bound over all super-epochs $i>\lg{n}$ and the phases within, we conclude $u$ becomes \texttt{helper} with $n_u>4n$ with probability at most $\sum_{i>\lg{n}}bi\cdot \exp(-\Theta(i^3))=n^{-\Omega(1)}$.
\end{proof}

\subsection{Omitted Proofs}\label{subsec-app-multicastadvadp-proof}

\begin{proof}[\underline{Proof of Claim \ref{claim-multicastadvadp-pu-pv-bounded}}]
We only need to prove part (3). Let $D_u:\mathcal{Z}\to\mathbb{R}$ be function $D_u(\bm{z})=\sqrt{\mathbb{E}[Y_u(\bm{z})]\cdot giC}$. Recall $\chi_u=\sqrt{{giC}/{(Rp_u)}}$, thus $D_u(\bm{z})/\Delta_u\leq\chi_u$. Let $\mathcal{Z}_2=\{\bm{z}\in\mathcal{Z}:L(\bm{z})\geq 0.05RC/\alpha\}$. As a result:
\begin{align*}
&\Pr\left((\lvert  X_u/\Delta_u-\mathbb{E}[X_u]/\Delta_u\rvert\geq\chi_u)\wedge(X_u/\Delta_u>0.1)\right)\\
\leq & \Pr\left((|X_u(\bm{Q})/\Delta_u-\alpha |\bm{Q}|/RC|>D_u(K(\bm{Q}))/\Delta_u)\wedge(X_u(\bm{Q})/\Delta_u>0.1)\right)\\
\leq & \sum_{\bm{z}\in\mathcal{Z}_2}\Pr(|Y_u(\bm{z})-\mathbb{E}[Y_u(\bm{z})]|>D_u(\bm{z})) + \sum_{\bm{z}\in\mathcal{Z}\setminus\mathcal{Z}_2}\Pr(Y_u(\bm{z}) >0.1\Delta_u )\\
\leq & \sum_{\bm{z}\in\mathcal{Z}_2}\Pr(|Y_u(\bm{z})-\mathbb{E}[Y_u(\bm{z})]|>D_u(\bm{z})) + \exp(-\Theta(i^3C))
\end{align*}

Let $\delta=D_u(\bm{z})/\mathbb{E}[Y_u(\bm{z})]$, then we have $\delta^2\cdot\mathbb{E}[Y_u(\bm{z})]=giC$, and $\delta<\sqrt{20giC/Rp_u}\leq\sqrt{20g/(ai^2)}<1$. Therefore, apply a Chernoff bound and we know:
$$\sum_{\bm{z}\in\mathcal{Z}_2}\Pr(|Y_u(\bm{z})-\mathbb{E}[Y_u(\bm{z})]|>D_u(\bm{z})) \leq (2R)^C\cdot 2\cdot\exp(-\Theta(giC))=\exp(-\Theta(iC))$$
implying $\Pr\left((\lvert  X_u/\Delta_u-\mathbb{E}[X_u]/\Delta_u\rvert\geq\chi_u)\wedge(X_u/\Delta_u>0.1)\right)\leq\exp(-\Theta(iC))$.

Similarly, $\Pr\left((\lvert  X_v/\Delta_v-\mathbb{E}[X_v]/\Delta_v\rvert\geq\chi_v)\wedge(X_v/\Delta_v>0.1)\right)$ $\leq\exp(-\Theta(iC))$.

Finally, notice that ``$|x-y|>z_1+z_2$'' implies ``$|x-w|>z_1$ or $|y-w|>z_2$'' for any $w,x,y,z_1,z_2$, we conclude:
\begin{align*}
&\Pr\left(
\begin{matrix}
(\lvert  X_u/\Delta_u-X_v/\Delta_v \rvert \geq \chi_u+\chi_v)\wedge\\
( X_u/\Delta_u>0.1)\wedge(X_v/\Delta_v>0.1)
\end{matrix}
\right)\\
\leq & \Pr\left((\lvert  X_u/\Delta_u-\mathbb{E}[X_u]/\Delta_u\rvert\geq\chi_u)\wedge(X_u/\Delta_u>0.1)\right) +\\
&\Pr\left((\lvert  X_v/\Delta_v-\mathbb{E}[X_v]/\Delta_v\rvert\geq\chi_v)\wedge(X_v/\Delta_v>0.1)\right)\\
\leq & \exp(-\Theta(iC))
\end{align*}

This completes the proof of the claim.
\end{proof}

\begin{proof}[\underline{Proof of Lemma \ref{lemma-multicastadvadp-pu-pv-bounded}}]
To prove the lemma, we show for any phase $j$ in super-epoch $i$, for any two nodes $u$ and $v$, it is true that $p^{i,j+1}_u/p^{i,j+1}_v\leq p^{i,j}_u/p^{i,j}_v\cdot 2^{1/bi}$. For the ease of presentation, denote the length of each step as $R$, and denote the working probabilities (i.e., the sending/listening probabilities in step one and two) of the current phase and the next phase as $p$ and $p'$, respectively. Recall for any node $u$, $\eta_u=N^{step1,c}_u/\Delta^{step1}_u+N^{step2,c}_u/\Delta^{step2}_u+N^{step3,c}_u/\Delta^{step3}_u$, and $p'_u=p_u\cdot 2^{\max\{0,\eta_u-2.5\}}$.

If $\eta_u\leq 2.5$ and $\eta_v\leq 2.5$, then $p'_u/p'_v=p_u/p_v$ and we are done. So, without loss of generality, assume $\eta_u> 2.5$. Due to part (1) of Claim \ref{claim-multicastadvadp-pu-pv-bounded}, and the fact that $2.5-2\cdot 1>0.2$, $N^{c,step*}_u/\Delta^{step*}_u\geq 0.2$ holds for any step $*$ in $\{1,2,3\}$. Then, apply part (2) of Claim \ref{claim-multicastadvadp-pu-pv-bounded}, we have $N^{c,step*}_v/\Delta^{step*}_v\geq 0.1$ holds for any step $*$ in $\{1,2,3\}$. Lastly, apply part (3) of Claim \ref{claim-multicastadvadp-pu-pv-bounded}, we have $|N^{c,step*}_u/\Delta^{step*}_u-N^{c,step*}_v/\Delta^{step*}_v|\leq \sqrt{{giC}/{(Rp_u)}}+\sqrt{{giC}/{(Rp_v)}}$ for any step $*$ in $\{1,2\}$ and $|N^{c,step3}_u/\Delta^{step3}_u-N^{c,step3}_v/\Delta^{step3}_v|\leq 2\sqrt{{giC}/{(Rp_{step3})}}$. Now, we conclude:
$$\frac{p'_u}{p'_v}\leq\frac{p_u}{p_v}\cdot\frac{2^{\max\{0,\eta_u-2.5\}}}{2^{\max\{0,\eta_v-2.5\}}}\leq\frac{p_u}{p_v}\cdot 2^{\eta_u-\eta_v}\leq\frac{p_u}{p_v}\cdot 2^{1/bi}$$
where the last inequality is due to:
$$2\left(\sqrt{\frac{giC}{Rp_u}}+\sqrt{\frac{giC}{Rp_v}}\right)+2\sqrt{\frac{giC}{Rp_{step3}}} \leq  6\sqrt{\frac{giC}{ai^32^i\cdot(C/2^i)}}\leq\sqrt{\frac{36g}{ai^2}}\leq\frac{1}{bi}$$
when $g$ is a constant satisfying $a\geq 36gb^2$. This completes the proof of the lemma.
\end{proof}

\begin{proof}[\underline{Proof of Lemma \ref{lemma-multicastadvadp-pu-raise-and-msg}}]
Let $\mathcal{E}_{R}$ be the event that some node raises its working probability by the end of the phase. Let $\mathcal{E}_{M}$ be the event that some node hears the message less than  $ai^3$ times during step two. Let $\mathcal{E}_{un}$ be the event that some node is still uninformed by the end of step one. Moreover, let $\mathcal{E}_1$ (respectively, $\mathcal{E}_2$) be the event that $\mathcal{E}_{step1}^{\geq 0.25}(\geq 0.25)$ (respectively, $\mathcal{E}_{step2}^{\geq 0.25}(\geq 0.25)$) occurs during step one (respectively, step two) of the phase. We know:
\begin{align*}
\Pr(\mathcal{E}_M\mathcal{E}_R)\leq &\Pr(\mathcal{E}_M\wedge(\mathcal{E}_1\wedge\mathcal{E}_2))+\Pr(\mathcal{E}_R\wedge\overline{(\mathcal{E}_1\wedge\mathcal{E}_2)})\\
\leq &\Pr(\mathcal{E}_{un}\mathcal{E}_1)+\Pr(\overline{\mathcal{E}_{un}}\mathcal{E}_M\mathcal{E}_2)+\Pr(\mathcal{E}_R \wedge (\overline{\mathcal{E}_1}\vee \overline{\mathcal{E}_2}))
\end{align*}

The reminder of the proof bounds the three probabilities in the last line.

\begin{claim*}\label{claim-multicastadvadp-pu-raise-and-msg-claim1}
$\Pr(\mathcal{E}_{un}\mathcal{E}_1)\leq\exp(-\Theta(i^2))$.
\end{claim*}

\begin{claimproof}
Similar to the proof of Lemma \ref{lemma-multicastadp-fast-bcst}, by a coupling argument, we can assume in the first $R_i/4=2d\cdot 2^ii^3$ slots, $\{1,2,\cdots,C/4\}$ are unjammed; and all channels are jammed in the remaining slots. Here, $d$ is some sufficiently large constant. Let $\mathcal{R}_1$ denote the collection of the first half of these $2d\cdot 2^ii^3$ slots, and let $\mathcal{R}_2$ denote the second half. We further divide $\mathcal{R}_1$ into $2i$ segments. The number of informed node is denoted as $t$.

We first show $t\geq n/2$ at the end of $\mathcal{R}_1$. Consider a segment in which $t\leq n/2$ holds at the beginning, and let $u$ be a node that is informed at the beginning of the segment. For a slot in the segment, the probability that $u$ informs an uninformed node is at least
\begin{align*}
& p_u\cdot \prod_{v\in M\setminus\{u\}}\left(1-\frac{p_v}{C}\right)\cdot \left[\prod_{w\in V\setminus M}\left(1-\frac{p_w}{C}\right)-\prod_{w\in V\setminus M}\left(1-\frac{2p_w}{C}\right)\right] \cdot \frac{1}{4}\\
=\enspace & p_u\cdot \prod_{v\in V\setminus\{u\}}\left(1-\frac{p_v}{C}\right)\cdot \left[1-\prod_{w\in V\setminus M}\left(1-\frac{p_w}{C-p_w}\right)\right] \cdot \frac{1}{4}\\
\geq\enspace & p_u\cdot \prod_{v\in V\setminus\{u\}}\left(1-\frac{p_v}{C}\right)\cdot \left[1-\prod_{w\in V\setminus M}\left(1-\frac{p_w}{C}\right)\right] \cdot \frac{1}{4}\\
\geq\enspace & p_u\cdot e^{-2P_V}\cdot \left(1-e^{P_V-P_M}\right) \cdot (1/4)\\
\geq\enspace & p_u\cdot e^{-2P_V}\cdot \left((P_V-P_M)/2\right) \cdot (1/4)\\
\geq\enspace & p_u(P_V-P_M)/24\geq p_u(P_V/3)/24\\
=\enspace & p_uP_V/72
\end{align*}
where the last inequality is due to $P_M\leq 2/3P_V$, which in turn is due to $t\leq n/2$, the definition of $P_M$ and $P_V$, and the assumption in the lemma statement that nodes' working probabilities are within a factor of two.

As a result, the probability that no previously uninformed node is informed during the segment by $u$ is at most $(1-p_uP_V/72)^{d\cdot 2^ii^2}\leq\exp(-\Theta(i^2))$ since $p_uP_V\geq(8\cdot\sqrt{{C}/{(2^in)}})^2\cdot (n/C)\geq 64/2^i$. Take a union bound over the $O(n)$ informed nodes, we know after each segment the number of informed nodes is likely to at least double. Take another union bound over the $i$ segments, and together with the fact that there are at least $\lg{n}$ segments, we conclude that the number of informed nodes will reach $n/2$ by the end of $\mathcal{R}_1$, with probability at least $1-\exp(-\Theta(i^2))$.

Next, we focus on $\mathcal{R}_2$, and assume $t\geq n/2$ holds at the beginning of $\mathcal{R}_2$. Fix a node $u$ that is still uninformed at the beginning of $\mathcal{R}_2$. When $t\geq n/2$, due to lemma assumption, we have $P_M\geq 1/3P_V$. Therefore, for each slot in $\mathcal{R}_2$, the probability that $u$ is informed is at least $(1/4)\cdot p_u\cdot\sum_{v\in M}((p_v/C)\cdot \prod_{w\in V\setminus \{u,v\}}(1-p_w/C))\geq (1/4)\cdot p_u\cdot(P_V/3)\cdot e^{-2P_V}\geq p_uP_V/36$.

As a result, the probability that $u$ is not informed after $\mathcal{R}_2$ is at most $(1-p_uP_V/36)^{d\cdot 2^ii^3}\leq\exp(-\Theta(i^3))$. Take a union bound over the $O(n)$ uninformed nodes, we know all nodes will be informed by the end of step one with probability at least $1-\exp(-\Theta(i^3))$.
\end{claimproof}

\begin{claim*}\label{claim-multicastadvadp-pu-raise-and-msg-claim2}
$\Pr(\overline{\mathcal{E}_{un}}\mathcal{E}_{M}\mathcal{E}_2)\leq\Pr(\mathcal{E}_{M}\mathcal{E}_2|\overline{\mathcal{E}_{un}})\leq\exp(-\Theta(i^3))$.
\end{claim*}

\begin{claimproof}
Fix a node $u$, and assume all nodes are alive and informed at the beginning of step two. By lemma assumption we have $P_M-p_u/C\geq (15/16)P_V$ when $n\geq 32$. Now, if $\mathcal{E}_2$ happens, then the expected number of message heard by $u$ in step two is at least $p^m_u/4 \cdot p_u\cdot (R_i/4)\geq ((15/16)P_V\cdot e^{-2P_V}/4)\cdot p_u\cdot(R_i/4)\geq 15ai^3/(4e)$. Similar to the proof of Lemma \ref{lemma-multicastadp-fast-halt} (except that we focus on message slots and apply the coupling argument accordingly), the probability that $u$ hears the message less than $ai^3$ times during a step two when $\mathcal{E}_2$ occurs is at most $\exp(-\Theta(i^3))$. Take a union over all nodes and the claim is proved.
\end{claimproof}

\begin{claim*}\label{claim-multicastadvadp-pu-raise-and-msg-claim3}
$\Pr(\mathcal{E}_{R}\wedge(\overline{\mathcal{E}_1}\vee \overline{\mathcal{E}_2}))\leq\exp(-\Theta(i^3C))$.
\end{claim*}
\begin{claimproof}
Notice that $\Pr(\mathcal{E}_{R}\wedge(\overline{\mathcal{E}_1}\vee\overline{\mathcal{E}_2})) \leq \Pr(\mathcal{E}_{R}\overline{\mathcal{E}_1})+\Pr(\mathcal{E}_{R}\overline{\mathcal{E}_2}) \leq \sum_{u\in V}\big(\Pr(\mathcal{E}_{u,1}\overline{\mathcal{E}_1})+2\exp(-\Theta(i^3C))+\Pr(\mathcal{E}_{u,2}\overline{\mathcal{E}_{2}})+2\exp(-\Theta(i^3C))\big)$. Here, $\mathcal{E}_{u,1}$ (resp., $\mathcal{E}_{u,2}$) is the event that node $u$ hears silence more than $\Delta_{u}^{step1}/2$ (resp., $\Delta_{u}^{step2}/2$) times in step one (resp., step two) of the phase, and the last inequality is due to part (1) of Claim \ref{claim-multicastadvadp-pu-pv-bounded}. We only bound $\Pr(\mathcal{E}_{u,1}\overline{\mathcal{E}_1})$ as bounding the other one is similar.
When $\overline{\mathcal{E}_1}$ occurs, the expected number of silent slots heard by $u$ in step one is at most $p_u\cdot p_u^c\cdot (1/4\cdot 3/4 + 1\cdot 1/4)R_i\leq R_ip_u\cdot p_u^c\cdot 7/16\leq 7/16\Delta_{u}^{step1}$. Again via a coupling argument, we know $\Pr(\mathcal{E}_{u,1}\overline{\mathcal{E}_1})\leq \exp(-\Theta(i^3))$.
\end{claimproof}

The above three claims immediately lead to the lemma.
\end{proof}

As the final preparation before proving Lemma \ref{lemma-multicastadvadp-halt-imply-helper}, we introduce some notations that will be frequently used in the reminder of the analysis.

\begin{definition}\label{def-multicastadvadp-events}
Consider an $(i,j)$-phase, recall $p^{i,j}_u$ denotes the working probability of $u$ during step one and two. Following events are defined concerning nodes' working probabilities and status:
\begin{itemize}
	\item $\mathcal{E}^{i,j}_A$ is the event that every node is active at the beginning of the $(i,j)$-phase.
	\item $\mathcal{E}^{i,j}_B$ is the event that for all phases $k$ in super-epoch $i$ where $0\leq k\leq j$, and for all $u,v\in V$, it holds that $1/2\leq p^{i,k}_u/p^{i,k}_v\leq 2$.
	\item $\mathcal{E}^{i,j}_D$ is the event that for all phases $k$ in super-epoch $i$ where $0\leq k\leq j$, it holds that $P^{i,k}_V\leq 1/2$.
	\item $\mathcal{E}^{i,j}_{\eta'}$ is the event that for all phases $k$ in super-epoch $i$ where $0\leq k<j$, every active node has its $\eta^{i,k}\leq 3$ by the end of phase $k$.
	\item $\mathcal{E}^{i,j}_{u,sI}$ (resp., $\mathcal{E}^{i,j}_{u,sH}$) is the event that node $u$ is in \texttt{init} (resp., \texttt{helper}) status at the beginning of the $(i,j)$-phase.
	\item $\mathcal{E}^{i,j}_{R}$ is the event that some node raises its working probability by the end of the $(i,j)$-phase.
	\item For any node $u$, $\mathcal{E}^{i,j}_L$ is the event that $u$ halts by the end of the $(i,j)$-phase. Notice this implies $p^{i,j+1}_u\geq 64\sqrt{{C}/{(2^in_u)}}$.
	\item $\mathcal{E}^{i,j}_{M}$ is the event that some node hears the message less than  $ai^3$ times during step two of the $(i,j)$-phase.
	\item $\mathcal{E}^{i,j}_{\eta}$ is the event that some node has its $\eta^{i,j}<2.4$ by the end of the $(i,j)$-phase.
\end{itemize}
We abuse the notion to denote $u$'s working probability at the end of super-epoch $i$ as $p_u^{i,bi}$. We also often omit the indices $i$ and/or $j$ when they are clear from the context.
\end{definition}

\begin{proof}[\underline{Proof of Lemma \ref{lemma-multicastadvadp-halt-imply-helper}}]
Assume $(i,j)$ is the first phase in which some node halts, and $u$ is one of the nodes that halt. Assume $i>\lg{n}$. Denote the super-epoch number when $u$ becomes \texttt{helper} as $\hat{i}_u$, and let $\mathcal{E}_u$ be the event that ``$n_u\leq 4n$ and $\hat{i}_u>\lg (nC)$''.

Now, if we assume $\mathcal{E}^{i,j}_B$, $\mathcal{E}^{i,j}_{\eta'}$ and $\mathcal{E}_u$ all happen, then event $\mathcal{E}^{i,j}_{L}$ implies $p_u^{i,j}\geq p_u^{i,j+1}/\sqrt{2}\geq 64\sqrt{{C}/{(2^in_u)}}/\sqrt{2}\geq 16\sqrt{2}\sqrt{{C}/{(2^in)}}$, which further implies $p_v^{i,j}\geq 8\sqrt{2}\sqrt{{C}/{(2^in)}}$ for each node $v$. Moreover, if we assume $\mathcal{E}^{i,j}_B$, $\mathcal{E}^{i,j}_{\eta'}$ and $\mathcal{E}_u$ all happen, then event $\mathcal{E}^{i,j}_{L}$ also implies $\mathcal{E}^{i,j}_{R}$, since $p^{i,j+1}_u\geq 64\sqrt{{C}/{(2^in_u)}}$ must imply $p^{i,j}_u<64\sqrt{{C}/{(2^in_u)}}$ regardless of the value of $j$: (a) if $j=0$ then $p^{i,0}=C/2^i<64\sqrt{{C}/{(2^in_u)}}$ when $n_u\leq 4n$ and $i>\lg(nC)$; (b) if $u$ becomes \texttt{helper} during phase $j-1$ then $p^{i,j}\leq\sqrt{2}p^{i,j-1}=\sqrt{2}\sqrt{{C}/{(2^in_u)}}$; (c) otherwise we know $u$ is \texttt{helper} at the beginning of phase $j-1$ but did not halt by the end of phase $j-1$, thus $p^{i,j}<64\sqrt{{C}/{(2^in_u)}}$. At this point, apply Lemma \ref{lemma-multicastadvadp-pu-raise-and-msg}, we have $\Pr(\mathcal{E}^{i,j}_L\mathcal{E}^{i,j}_M|\mathcal{E}^{i,j}_B\mathcal{E}^{i,j}_D\mathcal{E}^{i,j}_{A}\mathcal{E}^{i,j}_{\eta'}\mathcal{E}_u)\leq\Pr(\mathcal{E}^{i,j}_R\mathcal{E}^{i,j}_M|\mathcal{E}^{i,j}_B\mathcal{E}^{i,j}_D\mathcal{E}^{i,j}_{A}\mathcal{E}^{i,j}_{\eta'}\mathcal{E}_u)\leq\exp(-\Theta(i^2))$. Combining this with Lemma \ref{lemma-multicastadvadp-pu-pv-bounded}, Corollary \ref{cor-multicastadvadp-pu-inc-bounded}, Lemma \ref{lemma-multicastadvadp-eta-diff-bounded}, and Lemma \ref{lemma-multicastadvadp-Pv-bounded}, we conclude:
\begin{align*}
& \Pr\left({\text{some node is not \texttt{helper} when }u\text{ halts firstly in super-epoch }i\geq\hat{i}_u}\wedge\mathcal{E}_u\right)\\
\leq & \Pr\left(\overline{\mathcal{E}^{bi}_B}\right)+\Pr\left(\overline{\mathcal{E}^{bi}_D}\right)+\Pr\left(\overline{\mathcal{E}^{bi}_{\eta'}}\right)+\sum_{j=0}^{bi-1}\Pr\left(\mathcal{E}^{j}_B\mathcal{E}^{j}_D\mathcal{E}^j_{A}\mathcal{E}^{j}_{\eta'}\mathcal{E}_u\mathcal{E}^j_L\left(\mathcal{E}^j_M \vee \mathcal{E}^j_{\eta}\right)\right)\\
\leq & \Pr\left(\overline{\mathcal{E}^{bi}_B}\right)+\Pr\left(\overline{\mathcal{E}^{bi}_D}\right)+\Pr\left(\overline{\mathcal{E}^{bi}_{\eta'}}\right)+\\
& \sum_{j=0}^{bi-1}\left(\Pr\left(\mathcal{E}^j_L\mathcal{E}^j_M \middle| \mathcal{E}^j_B\mathcal{E}^j_D\mathcal{E}^j_{A}\mathcal{E}^j_{\eta'}\mathcal{E}_u\right)
+\Pr\left(\mathcal{E}^j_L\mathcal{E}^j_{\eta} \middle| \mathcal{E}^j_B\mathcal{E}^j_D\mathcal{E}^j_{A}\mathcal{E}^j_{\eta'}\mathcal{E}_u\right)\right)\\
\leq & \exp(-\Theta(iC))+\exp(-\Theta(i^3C))+\exp(-\Theta(i^3C))+\\
& \sum_{j=0}^{bi-1}\left(\Pr\left(\mathcal{E}^j_R\mathcal{E}^j_M \middle| \mathcal{E}^j_B\mathcal{E}^j_D\mathcal{E}^j_{A}\mathcal{E}^j_{\eta'}\mathcal{E}_u\right) 
+\Pr\left(\mathcal{E}^j_R\mathcal{E}^j_{\eta} \middle| \mathcal{E}^j_B\mathcal{E}^j_D\mathcal{E}^j_{A}\mathcal{E}^j_{\eta'}\mathcal{E}_u\right)\right)\\
\leq & \exp(-\Theta(iC))+bi\cdot\exp(-\Theta(i^2))+bi\cdot\exp(-\Theta(i^3C))\\
\leq & \exp(-\Theta(i))
\end{align*}

Finally, apply Lemma \ref{lemma-multicastadvadp-no-helper-in-early-epoch} and Lemma \ref{lemma-multicastadvadp-estimate-upper-bound}:
\begin{align*}
& \Pr(\text{some node is not \texttt{helper} when some node halts})\\
\leq & \sum_u\sum_{i>\lg (nC)}\Pr\left({\text{some node is not \texttt{helper} when }u\text{ halts firstly in super-epoch }i\geq\hat{i}_u}\wedge\mathcal{E}_u\right)\\
\textcolor{white}{\leq} & + \sum_u\Pr\left(\hat{i}_u\leq\lg(nC)\right)
+ \sum_u\Pr\left(\left(n_u>4n\right)\wedge\left(\hat{i}_u>\lg(nC)\right)\right)\\
\leq & n\cdot \sum_{i>\lg (nC)}\exp(-\Theta(i))+n^{-\Omega(1)}+n^{-\Omega(1)}=n^{-\Omega(1)}
\end{align*}
which is exactly the lemma statement.
\end{proof}

\begin{proof}[\underline{Proof of Lemma \ref{lemma-multicastadvadp-good-estimate}}]
Since Lemma \ref{lemma-multicastadvadp-estimate-upper-bound} already implies the $n_u>4n$ part, here we focus on proving the $n_u<n/256$ part. We begin by proving the following claim which states if all nodes are alive and the working probability of $u$ is close to the ideal value $\Theta(\sqrt{C/(2^in)})$, then $u$ must have become \texttt{helper} already.

\begin{claim}\label{claim-multicastadvadp-pu-high-implies-helper}
Consider a super-epoch $i\geq\lg(nC)-7$. Assume at the beginning of the super-epoch all nodes are alive and there is a node $u$ in \texttt{init} status. Then by the end of each phase: the probability that at all nodes are alive and $u$ is still in \texttt{init} status with $p_u\geq 16\sqrt{{C}/{(2^in)}}$ is at most $\exp(-\Theta(i))$.

\end{claim}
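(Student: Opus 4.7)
The plan is to show that the bad event ``$u$ stays in \texttt{init} with $p_u^{i,j+1}\ge 16\sqrt{C/(2^i n)}$ while all nodes remain active'' forces a contradiction that happens with probability at most $\exp(-\Theta(i))$. The core idea is that, once $p_u$ climbs that high, the system is healthy enough for Lemma \ref{lemma-multicastadvadp-pu-raise-and-msg} to apply, and that lemma turns the raise that carried $p_u$ over the threshold into both a high $\eta_u$ and a large $N_u^{\text{step2},m}$, which together promote $u$ to \texttt{helper}.

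First I would condition on the ``structural'' events that hold throughout super-epoch $i$ with overwhelming probability: bounded difference $1/2\le p_u/p_v\le 2$ (Lemma \ref{lemma-multicastadvadp-pu-pv-bounded}), the bound $P_V\le 1/2$ (Lemma \ref{lemma-multicastadvadp-Pv-bounded}), the per-phase bound $p_u^{i,k+1}\le\sqrt{2}\,p_u^{i,k}$ (Corollary \ref{cor-multicastadvadp-pu-inc-bounded}), and the $\eta$-closeness property (Lemma \ref{lemma-multicastadvadp-eta-diff-bounded}); each fails with probability at most $\exp(-\Omega(iC))$. Under these events, the analysis becomes deterministic enough to chase a contradiction.

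Because $p_u^{i,0}=C/2^i<16\sqrt{C/(2^i n)}$ whenever $i\ge\lg(nC)-7$, the threshold can only be reached by a raise, so let $j^\ast\le j$ be the smallest phase with $p_u^{i,j^\ast+1}\ge 16\sqrt{C/(2^i n)}$. Then $u$ raises at $j^\ast$, so $\eta_u^{i,j^\ast}>2.5\ge 2.4$, and the corollary gives $p_u^{i,j^\ast}\ge 8\sqrt{2}\sqrt{C/(2^i n)}$; bounded difference then verifies the precondition $p_v\ge 8\sqrt{C/(2^i n)}$ of Lemma \ref{lemma-multicastadvadp-pu-raise-and-msg} for every active $v$ (at phase $j^\ast$ or, if the constants turn out to be tight, at phase $j^\ast+1$, where $p_u^{i,j^\ast+1}\ge 16\sqrt{C/(2^i n)}$ directly yields $p_v\ge 8\sqrt{C/(2^i n)}$). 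Applying that lemma, except with probability $\exp(-\Theta(i^2))$ every active node---$u$ in particular---hears $m$ at least $ai^3$ times in step two of the phase in question. Both helper conditions are then satisfied, so $u$ transitions out of \texttt{init} at or before phase $j$, contradicting the assumption. Union-bounding over the at most $bi$ phases of super-epoch $i$ converts the $\exp(-\Theta(i^2))$ per-phase bound to the overall $\exp(-\Theta(i))$ bound.

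The main obstacle is the delicate constant bookkeeping around the $p_v\ge 8\sqrt{C/(2^i n)}$ precondition of Lemma \ref{lemma-multicastadvadp-pu-raise-and-msg}: the $\sqrt{2}$ slack from Corollary \ref{cor-multicastadvadp-pu-inc-bounded} compounded with the factor-$2$ slack in bounded difference is exactly why the threshold $16$ (rather than, say, $8$) appears in the claim statement. If verifying the precondition at phase $j^\ast$ directly is too tight, the cleanest remedy is to apply the lemma to phase $j^\ast+1$, then use Lemma \ref{lemma-multicastadvadp-eta-diff-bounded} to transfer the conclusion $\eta_v\ge 2.4$ back to $u$ and force the helper promotion. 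This constant-tracking step is the technical heart of the proof; everything else is a union bound over the structural events and the $bi$ phases.
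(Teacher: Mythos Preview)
Your proposal is correct and follows essentially the same approach as the paper: locate the crossing phase $j^\ast$ (the paper captures it via the event $\mathcal{E}^{j'}_{P'}\wedge\mathcal{E}^{j'}_{R}$, where $\mathcal{E}^{j'}_{P'}$ is $8\sqrt{2}\sqrt{C/(2^i n)}\le p_u^{i,j'}<16\sqrt{C/(2^i n)}$), observe that the raise there already gives $\eta_u>2.5\ge 2.4$, and invoke Lemma~\ref{lemma-multicastadvadp-pu-raise-and-msg} so that $\mathcal{E}_R\wedge\mathcal{E}_M$ is $\exp(-\Theta(i^2))$-unlikely, forcing $N_u^{step2,m}\ge ai^3$ and hence promotion to \texttt{helper}; then union-bound over the at most $bi$ candidate crossing phases.

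One small caveat: your fallback of moving to phase $j^\ast+1$ does not work as written, because Lemma~\ref{lemma-multicastadvadp-pu-raise-and-msg} bounds $\Pr(\mathcal{E}_R\wedge\mathcal{E}_M)$ and there need not be any raise at $j^\ast+1$, so the lemma yields nothing there. Fortunately the fallback is unnecessary: the constant $8$ in that lemma's hypothesis is not tight (indeed the paper's own invocation lands at $p_v\ge 4\sqrt{2}\sqrt{C/(2^i n)}$, the same gap you flagged), and the proof of the lemma goes through verbatim with the hypothesis weakened to, say, $p_v\ge 4\sqrt{C/(2^i n)}$.
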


\begin{claimproof}
In this proof, the ``some node'' in the definition of events $\mathcal{E}^{i,j}_{M},\mathcal{E}^{i,j}_{R}$ and $\mathcal{E}^{i,j}_{\eta}$ refers to node $u$. Let $\mathcal{E}^{i,j}_{P}$ be the event that $p^{i,j}_u\geq 16\sqrt{{C}/{(2^in)}}$, and $\mathcal{E}^{i,j}_{P'}$ be the event that $8\sqrt{2}\sqrt{{C}/{(2^in)}}\leq p^{i,j}_u<16\sqrt{{C}/{(2^in)}}$. Notice when $i\geq\lg(nC)-7$, we have $p_u^{i,0}<16\sqrt{{C}/{(2^in)}}$, which implies $p_u$ must have increased to reach $16\sqrt{{C}/{(2^in)}}$. As a result:
\begin{align*}
& \Pr\left(\mathcal{E}^{j+1}_{A}\wedge\mathcal{E}^{j+1}_{P}\wedge\mathcal{E}^{j+1}_{sI}\right)\\
\leq & \left(\Pr\left(\overline{\mathcal{E}^{bi}_B}\right)+\Pr\left(\overline{\mathcal{E}^{bi}_D}\right)+\Pr\left(\overline{\mathcal{E}^{bi}_{\eta'}}\right) \right)+
\Pr\left(\mathcal{E}^{j}_B\mathcal{E}^{j}_D\mathcal{E}_{A}^j\mathcal{E}_{\eta'}^j\wedge \left(\exists j'\leq j:\mathcal{E}^{j'}_{P'}\wedge\mathcal{E}^{j'}_R \wedge \mathcal{E}^{j'+1}_{sI} \right)\right)\\
\leq & \exp(-\Theta(iC))+\sum_{j'=0}^{j}\Pr\left(\mathcal{E}^{j'}_B\mathcal{E}^{j'}_D\mathcal{E}^{j'}_A\mathcal{E}^{j'}_{\eta'}\mathcal{E}^{j'}_{P'}\mathcal{E}^{j'}_R\wedge \mathcal{E}^{j'}_M\right)\\
\leq & \exp(-\Theta(iC))+
\sum_{j'=0}^{j}\Pr\left(\mathcal{E}^{j'}_R\mathcal{E}^{j'}_M\middle|\mathcal{E}^{j'}_B\mathcal{E}^{j'}_D\mathcal{E}^{j'}_{A}\mathcal{E}^{j'}_{\eta'}\mathcal{E}^{j'}_{P'}\right) \\
\leq & \exp(-\Theta(iC))+bi\cdot \exp(-\Theta(i^2))\\
\leq & \exp(-\Theta(i))
\end{align*}
where second to last inequality is due to Lemma \ref{lemma-multicastadvadp-pu-pv-bounded}, Corollary \ref{cor-multicastadvadp-pu-inc-bounded}, Lemma \ref{lemma-multicastadvadp-Pv-bounded}, and Lemma \ref{lemma-multicastadvadp-pu-raise-and-msg}.

\end{claimproof}

Fix a node $u$ and we often omit the subscript $u$ for simplicity. Recall event $\mathcal{E}^{i,j}_{P}$ means $p^{i,j}_u\geq 16\sqrt{{C}/{(2^in)}}$ as defined in the proof of Claim \ref{claim-multicastadvadp-pu-high-implies-helper}. Let $\mathcal{E}^{i,j}_H$ be the event that $u$ become \texttt{helper} during $(i,j)$-phase. Notice $\mathcal{E}^{i,j}_{H}$ implies $\mathcal{E}^{i,j}_{sI}$. Therefore:
\begin{align*}
& \Pr\left(u\text{ becomes \texttt{helper} with }n_u<n/256\right)\\
\leq & \Pr\left(u\text{ is \texttt{init} when some node halts firstly}\right)+\Pr\left(\exists i,j:\mathcal{E}^{i,j}_{A}\wedge\mathcal{E}^{i,j}_{H}\wedge\mathcal{E}^{i,j}_{P}\right)\\
\leq & \Pr\left(u\text{ is \texttt{init} when some node halts firstly}\right)+\Pr\left(\exists i\leq \lg(nC), j\in[0,bi):\mathcal{E}^{i,j}_{A}\wedge\mathcal{E}^{i,j}_{H}\right)\\
\phantom{\leq} & +\Pr\left(\exists i>\lg(nC), j\in[1,bi):\mathcal{E}^{i,j}_{A}\wedge\mathcal{E}^{i,j}_{sI}\wedge\mathcal{E}^{i,j}_{P}\right)+\Pr\left(\exists i>\lg(nC):\mathcal{E}^{i,0}_{P}\right)\\
\leq & n^{-\Omega(1)}+n^{-\Omega(1)}+\sum_{i=\lg(nC)+1}^{\infty}\sum_{j=1}^{bi-1}\exp(-\Theta(i))+0\leq n^{-\Omega(1)}
\end{align*}
where the second to last inequality is due to Lemma \ref{lemma-multicastadvadp-no-helper-in-early-epoch}, Lemma \ref{lemma-multicastadvadp-halt-imply-helper}, and Claim \ref{claim-multicastadvadp-pu-high-implies-helper}.
\end{proof}

\begin{proof}[\underline{Proof sketch of Lemma \ref{lemma-multicastadvadp-pu-inc-in-weak-jam-phase}}]
Due to Lemma \ref{lemma-multicastadvadp-pu-pv-bounded}, we know nodes' working probabilities will be within a factor of two, with probability at least $1-\exp(-\Omega(iC))$. Assume this indeed holds. When the phase is weakly jammed, $\mathbb{E}[N^{c,step1}_u]/\Delta^{step1}_u=\mathbb{E}[N^{c,step2}_u]/\Delta^{step2}_u\geq 0.95\cdot 0.95\cdot e^{-2P_V} \geq 0.874>2.6/3$ due to $P_V\leq n\cdot 2p^{i,j}_u/C\leq 1/64$. Similarly, $\mathbb{E}[N^{c,step3}_u]/\Delta^{step3}_u\geq 0.95\cdot 0.95\cdot e^{-2|V|\cdot p_{step3}/C}\geq 0.874$ due to $2|V|\cdot p_{step3}/C\leq 2nC/2^i\leq 1/32$. Finally, by a coupling argument and Chernoff bounds, we know the probability that $\eta_u\geq 2.6$ is at least $1-3 \cdot \exp(-\Theta(i^3C))=1-\exp(-\Theta(i^3C))$.
\end{proof}

\begin{proof}[\underline{Proof of Lemma \ref{lemma-multicastadvadp-pu-inc-in-weak-jam-epoch}}]
Define $\mathcal{E}_{j}$ be the event that the $j$\textsuperscript{th} phase in super-epoch $i$ is weakly jammed where $j\in\{0,1,\cdots,bi-1\}$. For any $J\subseteq\{0,1,\cdots,bi-1\}$, define $\mathcal{E}_{J}$ be the event that $J$ contains exactly the set of phases that are weakly jammed. Let $\mu=\log_{2^{(1/10)}}(\tilde{p_i}/(C/2^i))=100+5(i-\lg(nC))$ denote the maximum number of increments required for $p_u$ (i.e., the working probability of $u$) to reach $\tilde{p_i}$, if in each increment $p_u$ grows by a factor at least $2^{1/10}$. In the reminder of this proof, for simplicity, superscript $i$ and subscript $u$ are often omitted. Notice that $\mu<bi/2$ (since $i\geq 20$ and $b=20$), thus we have:
\begin{align*}
& \Pr\left(\left(\text{the super-epoch is weakly jammed}\right)\wedge\left(p^{bi}<\tilde{p_i}\right)\wedge\left(\mathcal{E}^{bi}_{sI}\vee\mathcal{E}^{bi}_{sH}\right)\right)\\
= & \sum_{J\subseteq\{0,1,\cdots,bi-1\}:|J|\geq bi/2}\Pr\left(\mathcal{E}_{J}\wedge\left(p^{bi}<\tilde{p_i}\right)\wedge\left(\mathcal{E}^{bi}_{sI}\vee\mathcal{E}^{bi}_{sH}\right)\right)\\
\leq & \sum_{J\subseteq\{0,1,\cdots,bi-1\}:|J|\geq bi/2}\sum_{j\in J}\Pr\left(\mathcal{E}_{j}\wedge\left(p^{j+1}/p^{j}<2^{(1/10)}\right)\wedge\left(\mathcal{E}^{j}_{sI}\vee\mathcal{E}^{j}_{sH}\right)\right)\\
\leq & 2^{bi}\cdot bi\cdot \exp(-\Theta(iC))=\exp(-\Theta(iC)).
\end{align*}
where the last inequality is due to Lemma \ref{lemma-multicastadvadp-pu-inc-in-weak-jam-phase}, and $\tilde{p_i}\leq C/(128n)$ when $i\geq 34+\lg(n/C)$.
\end{proof}

\begin{proof}[\underline{Proof of Lemma \ref{lemma-multicastadvadp-large-pu-halt}}]
We consider several potential scenarios:
\begin{itemize}
	\item If $u$ already halts at the beginning of the last phase of super-epoch $i$, the lemma holds.
	\item If $u$ is in \texttt{helper} status at the beginning of the last phase in super-epoch $i$, then $p^{i,bi}_u\geq\tilde{p_i}$ and $n_u\geq n/256$ imply $p^{i,bi}_u\geq\tilde{p_i}\geq64\sqrt{{C}/{(2^in_u)}}$, which in turn suggests $u$ must have stopped by the end of super-epoch $i$ according to the algorithm description.
	\item If $u$ is in \texttt{init} status at the beginning of the last phase in super-epoch $i$, but becomes \texttt{helper} during the last phase, then due to algorithm description and Corollary \ref{cor-multicastadvadp-pu-inc-bounded}, we know with probability at least $1-\exp(-\Theta(i^3C))$, $p^{i,bi}_u\leq\sqrt{2}\cdot p^{i,bi-1}_u=\sqrt{2}\sqrt{{C}/{(2^in_u)}}$, but this contradicts the assumption that $p^{i,bi}_u\geq\tilde{p_i}\geq64\sqrt{{C}/{(2^in_u)}}$.
	\item If $u$ is in \texttt{init} status at the end of super-epoch $i$, then the probability that $p^{i,bi}_u\geq\tilde{p_i}\geq 16\sqrt{{C}/{(2^in)}}$ also happens is at most $\exp(-\Theta(i))$. This is due to Claim \ref{claim-multicastadvadp-pu-high-implies-helper} and the assumptions of the lemma.
\end{itemize}

By now we have proved the lemma.
\end{proof}

\begin{proof}[\underline{Proof of Theorem \ref{thm-multicastadvadp}}]
Fix an arbitrary node $u$, we often omit the subscript $u$ and/or the superscript $i$ and/or $j$ when they are clear from the context. Also, recall the various definitions we introduced in Definition \ref{def-multicastadvadp-events}.

Our first step is to analyze how long $u$ remains active.

Let $\hat{I}=34+\lg C+\max\{\lg C,\lg n\}\geq \max\{34+\lg(nC),I_b\}$, let $r_i=bi\cdot 3R_i=3ab\cdot i^4\cdot 2^i$ be the number of slots in super-epoch $i$, and let $sr_i=\sum_{k=\hat{I}+1}^{i}r_k$ be the total number of slots from super-epoch $\hat{I}+1$ to super-epoch $i$ (both inclusive). We also set $r_{\hat{I}-1}=0$ and $sr_{\hat{I}}=0$ for the ease of presentation. It is easy to verify, for $i\geq \hat{I}+1$, $sr_{i}\leq \sum_{k=\hat{I}+1}^{i}3ab\cdot i^4\cdot 2^{k}\leq 3ab\cdot i^4\cdot 2^{i+1}\leq 15ab\cdot (i-1)^4\cdot 2^{i-1}=5r_{i-1}$; as for $i=\hat{I}$, we also have $sr_{i}=0\leq 5r_{i-1}$.

Let $\mathcal{E}_{sI \to A}$ be the event that $(\forall i,j: \overline{\mathcal{E}^{i,j}_{u,sI}}\vee\mathcal{E}^{i,j}_A)$, then $\Pr(\overline{\mathcal{E}_{sI \to A}})=\Pr(\exists i,j: \mathcal{E}^{i,j}_{u,sI}\wedge\overline{\mathcal{E}^{i,j}_A})=n^{-\Omega(1)}$ by Lemma \ref{lemma-multicastadvadp-halt-imply-helper}. Let $\mathcal{E}^{i,j}_H$ be the event that $u$ becomes \texttt{helper} during $(i,j)$-phase, $\mathcal{E}_{u}$ be the event that $n_u\geq n/256$, and $\mathcal{E}_{H\to u}$ be the event that $(\forall i,j: \overline{\mathcal{E}^{i,j}_{H}}\vee \mathcal{E}_u)$, then $\Pr(\overline{\mathcal{E}_{H \to u}})=\Pr(\exists i,j: \mathcal{E}^{i,j}_{H}\wedge \overline{\mathcal{E}_u})=n^{-\Omega(1)}$ by Lemma \ref{lemma-multicastadvadp-good-estimate}.

Let $\mathcal{E}_{\tilde{p}}$ be the event that $p^{i,bi}_u\geq\tilde{p_i}$, and let constant $\beta=2400$. Let random variable $L$ denote node $u$'s actual runtime starting from super-epoch $\hat{I}+1$. Then we have:
\begin{align*}
& \Pr\left(L>5\beta T/C\right)\\
\leq & \Pr\left(\overline{\mathcal{E}_{sI \to A}}\right) + \Pr\left(\overline{\mathcal{E}_{H \to u}}\right) + \sum_{i=\hat{I}+1}^{\infty}\Pr\left(\mathcal{E}_{sI \to A}\wedge\mathcal{E}_{H \to u}\wedge\left(L>\frac{5\beta T}{C}\right)\wedge\left(r_{i-1}\leq\frac{\beta T}{C}<r_{i}\right)\right)\\
\leq & n^{-\Omega(1)} + \sum_{i=\hat{I}+1}^{\infty}\Pr\left(\mathcal{E}_{sI \to A}\wedge\mathcal{E}_{H \to u}\wedge(L>sr_i)\wedge(\beta T/C <r_{i})\right)\\
\leq & n^{-\Omega(1)} + \sum_{i=\hat{I}+1}^{\infty}\Pr\left(\overline{\mathcal{E}_{\tilde{p}}}\wedge(L>sr_i)\wedge(\beta T/C<r_{i})\right) + \sum_{i=\hat{I}+1}^{\infty}\Pr\left(\mathcal{E}_{\tilde{p}}\wedge(L>sr_{i})\middle|\mathcal{E}_{sI \to A}\wedge\mathcal{E}_{H \to u}\right) \\
\leq & n^{-\Omega(1)} + \sum_{i=\hat{I}}^{\infty}\exp(-\Theta(iC))+\sum_{i=\hat{I}}^{\infty}\exp(-\Theta(i))\\
\leq & n^{-\Omega(1)}
\end{align*}
Notice, in the second to last inequality, the first part is due to Lemma \ref{lemma-multicastadvadp-pu-inc-in-weak-jam-epoch} and the fact that Eve spends less than $r_iC/\beta=bi/2\cdot 0.05^2R_iC$ in super-epoch $i$ implies super-epoch $i$ is weakly jammed; the second part is due to Lemma \ref{lemma-multicastadvadp-large-pu-halt}.

Take a union bound over all nodes, we know all nodes will terminate within $(\sum_{k=I_b}^{\hat{I}}bk\cdot 3R_k)+5\beta T/C=O(T/C+(nC+C^2)\cdot\lg^4(nC))$ slots, with high probability.

We continue to analyze the cost of node $u$.

Let $F_{step1,2}^{i,j}$ (resp., $F_{step1,2}^{i}$) be node $u$'s total actual cost during step one and step two in $(i,j)$-phase (resp., in all phases within super-epoch $i$).
Also let $F_{step1,2}$ be node $u$'s total actual cost during step one and step two in all phases starting from super-epoch $\hat{I}+1$. 

Let $\mathcal{E}^{step1,2}_F$ be the event that $F_{step1,2}^{i,j}\leq (2R_i)\cdot (1+1/2)\cdot (2\cdot 1024\sqrt{{C}/{(2^in)}})=3R_ip_i$ holds for each $i>\hat{I}$ and $0\leq j<bi$, where $p_i$ is set as $2048\sqrt{{C}/{(2^in)}}$. Then we have:
\begin{align*}
& \Pr\left(\overline{\mathcal{E}^{step1,2}_F} \wedge \mathcal{E}_{sI \to A}\wedge \mathcal{E}_{H \to u}\right)\\
\leq & \sum_{i=\hat{I}+1}^{\infty}\sum_{j=0}^{bi-1}\Pr\left(\mathcal{E}_{sI}^{i,j}\wedge\left(p^{i,j}> 16\sqrt{\frac{C}{2^in}}\right)\wedge \mathcal{E}_{sI \to A}\right) +\\
\phantom{\leq} & \sum_{i=\hat{I}+1}^{\infty}\sum_{j=0}^{bi-1}\Pr\left(\mathcal{E}_{sI}^{i,j}\wedge\left(p^{i,j}\leq 16\sqrt{\frac{C}{2^in}}\right)\wedge\left(F_{step1,2}^{i,j}>3R_ip_i\right)\right) \\
\phantom{\leq} & \sum_{i=\hat{I}+1}^{\infty}\sum_{j=0}^{bi-1}\Pr\left(\overline{\mathcal{E}_{sI}^{i,j}}\wedge\left(F_{step1,2}^{i,j}>3R_ip_i\right)\wedge\mathcal{E}_{H \to u}\right)+\\
\leq & \sum_{i=\hat{I}+1}^{\infty}\sum_{j=0}^{bi-1}\left(\exp(-\Theta(i)) + \exp\left(-\Theta\left(i^3\cdot\sqrt{\frac{2^iC}{n}}\right)\right) +\exp\left(-\Theta\left(i^3\cdot\sqrt{\frac{2^iC}{n}}\right)\right)\right)\\
\leq & n^{-\Omega(1)}
\end{align*}
Notice, in the second to last inequality: the first part is due to Claim \ref{claim-multicastadvadp-pu-high-implies-helper}; the second part is due to a Chernoff bound and $p^{i,j}\leq 16\sqrt{{C}/{(2^in)}}<1024\sqrt{{C}/{(2^in)}}$; and the last part is due to a Chernoff bound and the fact that $p^{i,j}<64\sqrt{{C}/{(2^in_u)}}\leq 1024\sqrt{{C}/{(2^in)}}$ when $n_u\geq n/256$. 

Notice that event $\mathcal{E}^{step1,2}_F$ will imply $F_{step1,2}^{i}\leq bi\cdot 3R_i\cdot p_i=r_ip_i$. We are now ready to bound $F_{step1,2}$. Define constant $\gamma=2^{28}\cdot 3ab$, then we have:
\begin{align*}
& \Pr\left(\left(F_{step1,2}\right)^2>\gamma\beta\cdot\lg^4(T)\cdot T/n\right)\\
\leq & \Pr\left(\overline{\mathcal{E}_{sI \to A}}\right) + \Pr\left(\overline{\mathcal{E}_{H \to u}}\right) + \Pr\left(\overline{\mathcal{E}^{step1,2}_F}\wedge\mathcal{E}_{sI \to A}\wedge\mathcal{E}_{H \to u}\right) + \\
\phantom{\leq} & \sum_{i=\hat{I}+1}^{\infty}\Pr\left(
\mathcal{E}^{step1,2}_F\wedge\left(\left(F_{step1,2}\right)^2>\gamma\beta\cdot\lg^4(T)\cdot T/n\right)\wedge\mathcal{E}_{sI \to A}\wedge\mathcal{E}_{H \to u}\wedge\left(r_{i-1}\leq\beta T/C<r_{i}\right)
\right)\\
\leq & n^{-\Omega(1)}+n^{-\Omega(1)}+n^{-\Omega(1)} + \\
\phantom{\leq} & \sum_{i=\hat{I}+1}^{\infty}\Pr\left(\left(L>sr_i\right)\wedge\mathcal{E}_{sI \to A}\wedge\mathcal{E}_{H \to u}\wedge\left(\beta T/C<r_{i}\right)\right) + \\
\phantom{\leq} & \sum_{i=\hat{I}+1}^{\infty}\Pr\left(\left(L\leq sr_i\right)\wedge\mathcal{E}^{step1,2}_F \wedge\left(\left(F_{step1,2}\right)^2>256r^2_{i-1}p^2_{i-1}\right)\right)\\
\leq & n^{-\Omega(1)}+n^{-\Omega(1)}+n^{-\Omega(1)}+n^{-\Omega(1)}+0
=n^{-\Omega(1)}
\end{align*}
Notice, in the second to last inequality, we consider two complement scenarios: either $u$ already halts by the end of super-epoch $i$, or not. Moreover, in case $u$ already halts by the end of super-epoch $i$, then it is easy to verify $\gamma\beta\cdot\lg^4(T)\cdot T/n\geq 256r^2_{i-1}p^2_{i-1}$ holds for all $i>\hat{I}$ ($a\cdot b$ should be sufficient large to ensure $\lg T\geq i-1$) or $i=\hat{I}$ (since $r_{\hat{I}-1}=0$) . In the last inequality, the first part is due to the fact that when super-epoch $i$ is weakly jammed, $u$ will halt by the end of the super-epoch with probability at least $1-\exp(-\Theta(i))$. (See our earlier analysis in this theorem proof regarding the runtime of node $u$.) The second part, on the other hand, is because: when $\mathcal{E}^{step1,2}_{F}$ happens, we have $F_{step1,2}\leq \sum_{k=\hat{I}+1}^{i}r_kp_k\leq \sum_{k=\hat{I}+1}^{i}2^{11}\cdot 3abi^4\sqrt{{2^kC}/{n}}\leq 2^{12}\cdot 3ab(i-1)^4\sqrt{{2^{i+1}C}/{n}}/(\sqrt{2}-1) \leq 2^{13}\cdot 3ab(i-1)^4\sqrt{{2^{i-1}C}/{n}}/(1/4)=16r_{i-1}p_{i-1}$, contradicting $(F_{step1,2})^2>256r^2_{i-1}p^2_{i-1}$.

Similarly, we can also bound $F_{step3}$, which is node $u$'s total actual cost during step step three, in all phases starting from super-epoch $\hat{I}+1$. Let $\mathcal{E}^{step3}_F$ be the event that ``$\forall i\geq\hat{I}, F^{step3}_i\leq(3/2)\cdot(r_i/3)\cdot(2\cdot C^2/2^i)=3abi^4C^2$''. Then by a Chernoff bound, $\Pr(\overline{\mathcal{E}^{step3}_F})\leq \sum_{i=\hat{I}}^{\infty} \exp(-\Theta(i^4C^2))\leq n^{-\Omega(1)}$. Define constant $\gamma'=6ab$, then we have:
\begin{align*}
&\Pr\left(F_{step3}>\gamma'C^2\cdot\left(\lg(T)+\hat{I}\right)^5\right)\\
\leq & \Pr\left(\overline{\mathcal{E}_{sI \to A}}\right) + \Pr\left(\overline{\mathcal{E}_{H \to u}}\right) + \Pr\left(\overline{\mathcal{E}^{step3}_F}\right) +\\
\phantom{\leq} & \sum_{i=\hat{I}+1}^{\infty}\Pr\left(
\mathcal{E}^{step3}_F\wedge\left(F^{step3}>\gamma'C^2\cdot\left(\lg(T)+\hat{I}\right)^5\right)
\wedge\mathcal{E}_{sI \to A}\wedge\mathcal{E}_{H \to u}\wedge\left(r_{i-1}\leq\beta T/C<r_{i}\right)
\right)\\
\leq & n^{-\Omega(1)}+n^{-\Omega(1)}+n^{-\Omega(1)}+\\
\phantom{\leq} & \sum_{i=\hat{I}+1}^{\infty}\Pr\left(\left(L>sr_i\right)\wedge\mathcal{E}_{sI \to A}\wedge\mathcal{E}_{H \to u}\wedge\left(\beta T/C<r_{i}\right)\right) +\\
\phantom{\leq} & \sum_{i=\hat{I}+1}^{\infty}\Pr\left(\left(L\leq sr_i\right)\wedge\mathcal{E}^{step3}_F\wedge\left(F_{step3}>\gamma'C^2\cdot(i-1)^5\right)\right)\\
\leq & n^{-\Omega(1)}+n^{-\Omega(1)}+n^{-\Omega(1)}+n^{-\Omega(1)}+0=n^{-\Omega(1)}
\end{align*}
Again, in the second to last inequality, we consider two complement scenarios: either $u$ already halts by the end of super-epoch $i$, or not. Moreover, in case $u$ already halts by the end of super-epoch $i$, then it is easy to verify that $\lg(T)+\hat{I}\geq i-1$ holds when $i\geq\hat{I}$ and $r_{i-1}\leq\beta T/C<r_{i}$. In the last inequality, the first part is due to the fact that when super-epoch $i$ is weakly jammed, $u$ will halt by the end of the super-epoch with probability at least $1-\exp(-\Theta(i))$. The second part, on the other hand, is because: when $\mathcal{E}^{step3}_{F}$ happens, we have $F_{step3}\leq\sum_{k=\hat{I}+1}^{i}3abi^4C^2\leq 3abi^5C^2\leq 6ab(i-1)^5C^2 =\gamma'C^2 \cdot (i-1)^5$.

By now, we can conclude, with high probability, the total cost of $u$ is bounded by $O(\sqrt{T/n\cdot\lg^4{T}})+O(C^2\cdot \lg^5(nCT))+(\sum_{k=I_b}^{\hat{I}}bk\cdot 3R_k)=O(\sqrt{T/n\cdot\lg^4{T}}+C^2\cdot\lg^5(nCT)+(nC+C^2)\cdot\lg^4(nC))$. Take a union bound over all nodes, we know the cost of each node is $\tilde{O}(\sqrt{T/n}+nC+C^2)$, with high probability.

Finally, notice that the algorithm itself ensures each node must have learned the message when the node halts, and this completes the proof of the theorem.
\end{proof}

\section{Omitted Parts in the Lower Bounds Section }\label{sec-app-lower-bound-proof}

This section contains two parts: (a) some auxiliary material; and (b) proofs that are omitted in the main body of the paper.

\subsection{Auxiliary Material}\label{subsec-app-lower-bound-aux-material}

\begin{claim}\label{claim-keymath-simple}
For any $n\in\mathbb{N}^+$, for any $\vec{a}=(a_1,a_2,\dots,a_n)\in(\mathbb{R}^{\geq0})^n, \vec{x}=(x_1,x_2,\dots,x_{n})\in(\mathbb{R}^{\geq0})^n,\vec{y}=(y_1,y_2,\dots,y_{n})\in(\mathbb{R}^{\geq0})^n$, let $\vec{x}^i=(x_i,\dots,x_i)$ and $\vec{y}^i=(y_i,\dots,y_i)$. Then $\min_i\{F_i\}\leq F$, where $F=(\vec{a}\cdot \vec{x})\cdot(\vec{a}\cdot \vec{y})$ and $F_i=(\vec{a}\cdot \vec{x}^i)(\vec{a}\cdot \vec{y}^i)$.
\end{claim}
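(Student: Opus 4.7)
\medskip

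\noindent\textbf{Proof proposal for Claim \ref{claim-keymath-simple}.}
The plan is to reduce the inequality to a single weighted Cauchy--Schwarz step. Let $S=\sum_{j=1}^{n}a_j$. Because every entry of $\vec{x}^i$ (respectively $\vec{y}^i$) equals $x_i$ (respectively $y_i$), I immediately have $\vec{a}\cdot\vec{x}^i=S\cdot x_i$ and $\vec{a}\cdot\vec{y}^i=S\cdot y_i$, so $F_i=S^2\,x_iy_i$. Thus proving $\min_i F_i\leq F$ is equivalent to showing
\[
S^2\cdot\min_{i}\bigl(x_i y_i\bigr)\;\leq\;\Bigl(\sum_{j}a_j x_j\Bigr)\Bigl(\sum_{j}a_j y_j\Bigr).
\]
The degenerate case $S=0$ forces $a_j=0$ for every $j$, which makes both sides equal to $0$, so I may assume $S>0$ from now on.

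The core step is the weighted Cauchy--Schwarz inequality applied to the nonnegative sequences $\sqrt{x_j}$ and $\sqrt{y_j}$ with weights $a_j$:
\[
\Bigl(\sum_{j}a_j\sqrt{x_j y_j}\Bigr)^{\!2}
=\Bigl(\sum_{j}\bigl(\sqrt{a_j x_j}\bigr)\bigl(\sqrt{a_j y_j}\bigr)\Bigr)^{\!2}
\;\leq\;\Bigl(\sum_{j}a_j x_j\Bigr)\Bigl(\sum_{j}a_j y_j\Bigr)=F.
\]
On the other hand, each summand $\sqrt{x_j y_j}$ is at least $\min_{i}\sqrt{x_i y_i}$, so
\[
\Bigl(\sum_{j}a_j\sqrt{x_j y_j}\Bigr)^{\!2}\;\geq\;\Bigl(S\cdot\min_{i}\sqrt{x_i y_i}\Bigr)^{\!2}=S^2\cdot\min_{i}\bigl(x_i y_i\bigr).
\]
Chaining these two bounds yields $S^2\cdot\min_i(x_i y_i)\leq F$, i.e.\ $\min_i F_i\leq F$, as required.

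There is no real obstacle here: the only thing to watch is the boundary case $S=0$ (handled above) and the fact that nonnegativity of all entries is what lets me take square roots and invoke Cauchy--Schwarz. If a cleaner self-contained argument is preferred, the same conclusion can be obtained by expanding $F=\sum_{j,k}a_j a_k x_j y_k$, pairing the $(j,k)$ and $(k,j)$ terms and applying AM--GM via $x_j y_k+x_k y_j\geq 2\sqrt{x_j y_j}\sqrt{x_k y_k}$, which reproduces the bound $F\geq(\sum_j a_j\sqrt{x_j y_j})^2$ without naming Cauchy--Schwarz explicitly; I would include whichever version reads best in context.
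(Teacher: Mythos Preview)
Your proof is correct. The Cauchy--Schwarz step $(\sum_j a_j\sqrt{x_jy_j})^2\leq(\sum_j a_jx_j)(\sum_j a_jy_j)$ together with the trivial lower bound $\sum_j a_j\sqrt{x_jy_j}\geq S\cdot\min_i\sqrt{x_iy_i}$ gives exactly $\min_iF_i\leq F$, and the degenerate case $S=0$ is handled.

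The route is genuinely different from the paper's. The paper does not pass through Cauchy--Schwarz; instead it expands $F=\sum_{i,j}a_ia_jx_iy_j$, pairs the cross terms $S_{i,j}+S_{j,i}$, and in each of four cases (depending on the relative order of $x_i,x_j$ and of $y_i,y_j$) bounds the pair below by a nonnegative combination of $x_iy_i$ and $x_jy_j$. This yields an explicit convex combination $\vec\eta$ with $F\geq\sum_i\eta_iF_i$. Your argument is shorter and cleaner for this claim in isolation. What the paper's more laborious construction buys is that the very same weights $\vec\eta$ are reused verbatim in the proof of the follow-up Claim~\ref{claim-keymath-complex}, where one must handle the affine shifts $b,c$ in $F^{new}=(\vec a\cdot\vec x+b)(\vec a\cdot\vec y+c)$; there the case analysis on $\vec\eta$ is what makes the extra $b,c$ terms line up. Your Cauchy--Schwarz approach does not obviously extend to that setting, so if you intend to also prove Claim~\ref{claim-keymath-complex} you would need a separate argument for it.
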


\begin{claimproof}
Define $F'=\sum_{i=1}^{n}\eta_i\cdot F_i$, where $\sum_{i=1}^{n}\eta_i=1$ and $0\leq\eta_i\leq 1$ for all $i\in[n]$. We intend to prove $F\geq F'$.

Denote $\left\|\vec{a}\right\|_1 =\sum_{i=1}^{n} |a_i|$. By definition, $F_i=\left\|\vec{a}\right\|_1^2 \cdot x_i\cdot y_i$. Therefore, our goal is to construct $\vec{\eta}=(\eta_1,\dots,\eta_n)$ where $F'=\sum_{i=1}^{n} \eta_i\cdot \left\|\vec{a}\right\|_1^2 \cdot x_iy_i$, and prove $F\geq F'$.

Let $S_{i,j}=(a_i x_i)\cdot(a_jy_j)$, by definition $F=\sum_{i=1}^{n}\sum_{j=1}^{n}S_{i,j}$. We set $\vec{\eta}=(a_1^2,\dots,a_n^2)$ initially. For every $(i,j)$ pair where $i\neq j$, we will lower bound $S_{i,j}+S_{j,i}$ with some linear combination of $x_iy_i$ and $x_jy_j$, and adjust $\vec{\eta}$ along the way. Specifically, there are four cases:
\begin{itemize}
	\item Case 1: if $x_i\geq x_j$ and $y_i\leq y_j$, by the sorting inequality $x_iy_j+ x_jy_i \geq x_iy_i+ x_jy_j$, thus $S_{i,j}+S_{j,i}=a_i\cdot a_j\cdot (x_iy_j+x_jy_i)\geq a_i\cdot a_j\cdot (x_iy_i+ x_jy_j)$. We set $\eta_i \leftarrow \eta_i + a_i\cdot a_j$	and $\eta_j \leftarrow \eta_j + a_i\cdot a_j$.
	\item Case 2: if $x_i\leq x_j$ and $y_i\geq y_j$, then this is similar to case 1 as $x_iy_j+ x_jy_i\geq x_iy_i+ x_jy_j$.
	\item Case 3: if $x_i\geq x_j$ and $y_i\geq y_j$, then $S_{i,j}\geq a_i\cdot a_j\cdot x_jy_j$ and $S_{j,i}\geq a_i\cdot a_j\cdot x_jy_j$,  thus we set $\eta_j \leftarrow \eta_j + 2\cdot a_i\cdot a_j$.
	\item Case 4: if $x_i\leq x_j$ and $y_i\leq y_j$, then $S_{i,j}\geq a_i\cdot a_j\cdot x_iy_i$ and $S_{j,i}\geq a_i\cdot a_j\cdot x_iy_i$, thus we set $\eta_i \leftarrow \eta_i + 2\cdot a_i\cdot a_j$.
\end{itemize}

At this point, let $\vec{\eta}=\frac{1}{\left\|\vec{a}\right\|^2_1}\vec{\eta}$, and we have $F'=\sum_{i=1}^{n}\eta_i\cdot F_i$.
\end{claimproof}

\begin{claim}\label{claim-keymath-complex}
Let $\vec{a},\vec{x},\vec{y},\vec{x}^i,\vec{y}^i$ be similarly defined as in Claim~\ref{claim-keymath-simple}. For any $n\in\mathbb{N}^+$, for any $b\in\mathbb{R}$ and $c\in\mathbb{R}$, it holds that $\min_i\{F^{new}_i\}\leq F^{new}$, where $F^{new}=(\vec{a}\cdot \vec{x}+b)\cdot(\vec{a}\cdot \vec{y}+c)$, and $F^{new}_i=(\vec{a}\cdot \vec{x}^i+b)(\vec{a}\cdot \vec{y}^i+c)$.
\end{claim}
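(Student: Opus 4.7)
The plan is to reduce Claim~\ref{claim-keymath-complex} to Claim~\ref{claim-keymath-simple} by absorbing the additive constants $b$ and $c$ into the vectors $\vec{x}$ and $\vec{y}$ via a uniform shift. Set $A = \|\vec{a}\|_1 = \sum_{i=1}^{n} a_i$; the case $A=0$ is trivial because then all quantities equal $bc$, so assume $A>0$ and define
\[
\tilde{\vec{x}} = \vec{x} + (b/A)\,\vec{1}, \qquad \tilde{\vec{y}} = \vec{y} + (c/A)\,\vec{1},
\]
where $\vec{1}\in\mathbb{R}^n$ is the all-ones vector.

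Two one-line algebraic identities, both using $\vec{a}\cdot\vec{1}=A$, then establish the reduction. First, $\vec{a}\cdot\tilde{\vec{x}} = \vec{a}\cdot\vec{x}+b$. Second, because $\tilde{\vec{x}}^i$ is the constant vector whose every entry is $\tilde{x}_i = x_i + b/A$, one has $\vec{a}\cdot\tilde{\vec{x}}^i = A\tilde{x}_i = Ax_i + b$. The symmetric identities hold for $\tilde{\vec{y}}$. Consequently, $F^{new} = (\vec{a}\cdot\tilde{\vec{x}})(\vec{a}\cdot\tilde{\vec{y}})$ and $F^{new}_i = (\vec{a}\cdot\tilde{\vec{x}}^i)(\vec{a}\cdot\tilde{\vec{y}}^i)$ are exactly the quantities $F$ and $F_i$ that Claim~\ref{claim-keymath-simple} addresses when that claim is instantiated on the triple $(\vec{a},\tilde{\vec{x}},\tilde{\vec{y}})$. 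Invoking Claim~\ref{claim-keymath-simple} on this triple yields $\min_i F^{new}_i \leq F^{new}$, which is the desired conclusion.

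The only nontrivial point---and the main obstacle---is that Claim~\ref{claim-keymath-simple} requires the two shifted vectors to lie in $(\mathbb{R}^{\geq 0})^n$, whereas $\tilde{\vec{x}}$ or $\tilde{\vec{y}}$ can contain negative entries when $b$ or $c$ is sufficiently negative. In the intended use of Claim~\ref{claim-keymath-complex} in the proof of Theorem~\ref{thm-cost-lower-bound-1-to-1}, the parameters $b$ and $c$ represent nonnegative baseline expected costs (the contributions to Alice's and Bob's costs that do not vary with the decision variable), so $\tilde{\vec{x}},\tilde{\vec{y}}\in(\mathbb{R}^{\geq 0})^n$ automatically and the reduction applies as stated. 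For fully general $b,c\in\mathbb{R}$, one would need to revisit the four-case sorting argument in the proof of Claim~\ref{claim-keymath-simple}: cases (3) and (4) lower-bound $S_{i,j}+S_{j,i}$ by $2a_ia_jx_jy_j$ or $2a_ia_jx_iy_i$, a step that depends on the relevant coordinate being nonnegative. Refining these cases, or, as a fallback, exhibiting the minimizing index $i^*$ directly rather than through a convex combination of the $F^{new}_i$'s, is the only part that requires care beyond the clean algebraic reduction above.
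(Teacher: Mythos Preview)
Your reduction via the uniform shift $\tilde{\vec{x}}=\vec{x}+(b/A)\vec{1}$, $\tilde{\vec{y}}=\vec{y}+(c/A)\vec{1}$ is correct and elegant: it turns $F^{new}$ and $F^{new}_i$ exactly into the $F$ and $F_i$ of Claim~\ref{claim-keymath-simple}, so a single black-box invocation finishes the job whenever $b,c\geq 0$. The paper takes a different route: it reuses the explicit weight vector $\vec{\eta}$ built in the proof of Claim~\ref{claim-keymath-simple}, splits $F^{new}=F+F^{b,c}$ where $F^{b,c}=(\vec{a}\cdot\vec{x})c+(\vec{a}\cdot\vec{y})b+bc$, and then argues separately that $F^{b,c}\geq\sum_i\eta_i F^{b,c}_i$ by tracking how the case-3 and case-4 adjustments to $\vec{\eta}$ affect the linear part. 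Your approach is shorter and avoids reopening the internals of the earlier proof; the paper's approach makes the role of the convex combination more explicit but is heavier.

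You are right to flag the nonnegativity issue, and in fact you are being more careful than the paper. The paper's own argument also breaks for general $b,c\in\mathbb{R}$: its key step ``each modification makes $F^{b,c}$ smaller'' subtracts $\frac{a_ia_j}{\|\vec{a}\|_1}\bigl((x_i-x_j)c+(y_i-y_j)b\bigr)$ in case~3, which is only nonnegative when $b,c\geq 0$. Indeed the claim as stated is false for arbitrary reals: with $n=2$, $\vec{a}=(1,1)$, $\vec{x}=\vec{y}=(1,0)$, $b=c=-1$ one gets $F^{new}=0$ but $F^{new}_1=F^{new}_2=1$. So neither your reduction nor the paper's case analysis can be patched to cover all $b,c\in\mathbb{R}$. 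Fortunately, in the sole application inside the proof of Theorem~\ref{thm-cost-lower-bound-1-to-1} the relevant additive terms are nonnegative (in fact the claim is invoked with $b=c=0$), so your proof covers exactly what is needed.
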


\begin{claimproof}
Similar to the proof of Claim~\ref{claim-keymath-simple}, we show $F^{new}\geq \sum_{i=1}^{n} \eta_i\cdot F^{new}_i$, where $\vec{\eta}$ is constructed as in the proof of Claim~\ref{claim-keymath-simple}. We only need to take care of the terms related to $b$ and $c$. Let $F^{b,c}=(\vec{a}\cdot \vec{x})c+(\vec{a}\cdot \vec{y})b+bc$ and $F_i^{b,c}=(\vec{a}\cdot \vec{x}^i)c+(\vec{a}\cdot \vec{y}^i)b+bc$.

If we assume all $(i,j)$ pairs belong to either case 1 or case 2, then $\eta_i=a_i/\left\|\vec{a}\right\|_1$. By simple calculation we know $F^{b,c}=\sum_{i=1}^{n} \eta_i\cdot F_i^{b,c}$, therefore $F^{new}=F+F^{b,c}\geq F'+F^{b,c}=\sum_{i=1}^{n} \eta_i\cdot (F_i+F_i^{b,c})=\sum_{i=1}^{n} \eta_i\cdot F^{new}_i$, as desired.

Otherwise, for each $(i,j)$ pair belonging to case 3, let $F^{b,c}\leftarrow F^{b,c}-\frac{a_ia_j}{\left\|a\right\|_1}((x_i -x_j)\cdot c+(y_i -y_j)\cdot b)$; and for each pair of $i,j$ belonging to case 4, let $F^{b,c}\leftarrow F^{b,c}-\frac{a_ia_j}{\left\|a\right\|_1}((x_j -x_i)\cdot c+(y_j -y_i)\cdot b)$. Notice that each modification makes $F^{b,c}$ smaller, and after all modifications $F^{b,c}=\sum_{i=1}^{n} \eta_i\cdot F_i^{b,c}$ is satisfied.
\end{claimproof}

\subsection{Omitted Proofs}\label{subsec-app-lower-bound-proof}

\begin{proof}[\underline{Proof of Theorem \ref{thm-cost-lower-bound-1-to-1}}]
Throughout the proof, we assume Eve uses jamming strategy $\mathcal{S}$: in each slot, for each channel, if the probability that Alice successfully transmits $m$ to Bob over this channel exceeds $1/T$, then Eve jams this channel so long as her energy is not depleted.

In general, any multi-channel 1-to-1 communication algorithm $\mathcal{A}$ can be viewed as a decision tree where each non-leaf node (i.e., internal node) $u$ at depth $i$ (assuming root has depth 1) represents one potential configuration of the states of Alice and Bob at the beginning of slot $i$ (this configuration specifies the distribution of Alice's behavior and Bob's behavior in slot $i$), and an edge from a depth $i$ node $u$ to a depth $i+1$ node $v$ represents the actual behavior of Alice, Bob, and Eve in slot $i$ (so that Alice's and Bob's states change from $u$ to $v$). Recall we assume Alice and Bob automatically stops once Bob is informed, so there are two kinds of leaf nodes. In particular, for any leaf node $v$, if the unique edge directed to $v$ indicates a successful transmission from Alice to Bob then $v$ is labeled ``success'', otherwise labeled ``fail''.
Denote the set of non-leaf nodes as $\mathcal{I}$, and the set of fail-leaf nodes as $\mathcal{F}$. We say a node is a ``boundary'' node if it has fail-leaf children, and denote the set of boundary nodes as $\mathcal{B}\subseteq \mathcal{I}$. For any internal node $u$, let $a_{u,k}$ denote the probability that Alice broadcasts $m$ on channel $k$, and $b_{u,k}$ denote the probability that Bob does not broadcast but listens on channel $k$. We further define $\hat{a}_{u,k}$ and $\hat{b}_{u,k}$: if $a_{u,k}b_{u,k}>1/T$ then $\hat{a}_{u,k}=\hat{b}_{u,k}=0$, otherwise $\hat{a}_{u,k}=a_{u,k}$ and $\hat{b}_{u,k}=b_{u,k}$. (Notice $a_{u,k}$ and $b_{u,k}$ are helpful when computing Alice's and Bob's cost, while $\hat{a}_{u,k}$ and $\hat{b}_{u,k}$ are helpful when computing the algorithm's success probability.) Denote $p_u$ as the probability that the algorithm reaches $u$ starting from the root. Lastly, for each node $u$, the set of its ancestors and $u$ itself (i.e., all nodes in the path from root to $u$) is denoted as $\mathcal{H}_u$, the set of $u$'s children is denoted as $\mathcal{C}_u$, and the set of $u$'s descendants and $u$ itself (i.e., all nodes in the subtree rotted at $u$) is denoted as $\mathcal{D}_u$.

If the multi-channel 1-to-1 communication algorithm $\mathcal{A}$ in consideration is \emph{adaptive}, then for each boundary node $u$ in $\mathcal{A}$, we can construct an oblivious algorithm $\mathcal{A}_u$. In particular, the length of $\mathcal{A}_u$ is the number of nodes in the path from root to $u$ (including $u$) in the decision tree of $\mathcal{A}$. Let $v_i$ be the $i$\textsuperscript{th} node in the path from root to $u$ in the decision tree of $\mathcal{A}$, then in slot $i$ in $\mathcal{A}_u$, the broadcasting probability of Alice and listening probability of Bob are $a_{v_i,k}$ and $b_{v_i,k}$, respectively.

Now, we claim, any multi-channel 1-to-1 communication algorithm $\mathcal{A}$ can be transformed into another algorithm $\mathcal{A}'$ that is a ``convex combination'' of a set $\Sigma$ of oblivious algorithms, without changing the success probability or the product of Alice's and Bob's expected cost. This is trivially true if $\mathcal{A}$ itself is already an oblivious algorithm, so we assume $\mathcal{A}$ is an adaptive algorithm. In such case, $\Sigma$ is the set consisting of oblivious algorithms $\mathcal{A}_w$ for each boundary node $w\in \mathcal{B}$. At the beginning of $\mathcal{A}'$, Alice and Bob will toss a (shared) coin and pick an algorithm in $\Sigma$ to execute.\footnote{Allowing for a shared coin is fine here, since we only use it to pick an algorithm from $\Sigma$ and do not exploit it in any other ways.} (Eve will also know the result of this coin toss.) The probability that $\mathcal{A}'$ chooses $\mathcal{A}_w$ is $p'_w=(\sum_{x\in\mathcal{F}\cap\mathcal{C}_w}p_x)/(\prod_{v\in \mathcal{H}_w}(1-\sum_{k=1}^{C}\hat{a}_{v,k}\hat{b}_{v,k}))$. To prove the claim, we will show: (i) $\sum_{w\in \mathcal{B}}p'_w=1$; (ii) $\mathbb{E}_{\mathcal{A}}[A]=\mathbb{E}_{\mathcal{A}'}[A]$, $\mathbb{E}_{\mathcal{A}}[B]=\mathbb{E}_{\mathcal{A}'}[B]$; and (iii) $\sum_{x\in \mathcal{F}}p_x= \sum_{w\in \mathcal{B}}p'_w\cdot(\prod_{v\in \mathcal{H}_w}(1-\sum_{k=1}^{C}\hat{a}_{v,k}\hat{b}_{v,k}))$.

To prove (i), we prove the following claim by induction on the max distance between (internal node) $u$ and set $\mathcal{D}_u$: $\forall u\in\mathcal{I}\textrm{, }\sum_{w: w\in \mathcal{B} \cap \mathcal{D}_u}p'_w=p_u/\prod_{v\in \mathcal{H}_u\setminus\{u\}}(1-\sum_{k=1}^{C}\hat{a}_{v,k}\hat{b}_{v,k})$. The base case is $u\in \mathcal{B}$ and every child of $u$ is a leaf, which holds trivially by the definition of $p'_w$ and the fact $\mathcal{B} \cap \mathcal{D}_u=\{u\}$. For the inductive step, assume $u\in \mathcal{I}$ has non-leaf children $v_1,\cdots,v_m\in \mathcal{I}$, and the claim holds for all $v_1,\cdots,v_m$. Then we know:
\begin{align*}
\sum_{w: w\in \mathcal{B} \cap \mathcal{D}_u}p'_w
&=\mathbb{I}[u\in\mathcal{B}]\cdot p'_u+\sum_{j=1}^{m} \sum_{w: w\in \mathcal{B} \cap \mathcal{D}_{v_j}}p'_w\\
&=\mathbb{I}[u\in\mathcal{B}]\cdot p'_u+\sum_{j=1}^{m}\left(p_{v_j}\middle/\prod_{v\in \mathcal{H}_{v_j}\setminus \{v_j\}}\left(1-\sum_{k=1}^{C}\hat{a}_{v,k}\hat{b}_{v,k}\right)\right)\\
&=\mathbb{I}[u\in\mathcal{B}]\cdot\left(\sum_{x\in \mathcal{F} \cap \mathcal{C}_u}p_x\middle/\prod_{v\in \mathcal{H}_u}\left(1-\sum_{k=1}^{C}\hat{a}_{v,k}\hat{b}_{v,k}\right)\right)+\left(\sum_{j=1}^{m}p_{v_j}\middle/\prod_{v\in \mathcal{H}_{u}}\left(1-\sum_{k=1}^{C}\hat{a}_{v,k}\hat{b}_{v,k}\right)\right)\\
&=p_u\cdot\left(1-\sum_{k=1}^{C}\hat{a}_{u,k}\hat{b}_{u,k}\right)/\prod_{v\in \mathcal{H}_{u}}\left(1-\sum_{k=1}^{C}\hat{a}_{v,k}\hat{b}_{v,k}\right)\\
&=p_u/\prod_{v\in \mathcal{H}_{u}\setminus\{u\}}\left(1-\sum_{k=1}^{C}\hat{a}_{v,k}\hat{b}_{v,k}\right)
\end{align*}
which proves the inductive step.

Therefore, $\sum_{w\in \mathcal{B}}p'_w=\sum_{w:w\in \mathcal{B}\cap \mathcal{D}_{root}}p'_w=p_{root}/\prod_{v\in \mathcal{H}_{root}\setminus \{root\}}(1-\sum_{k=1}^{C}\hat{a}_{v,k}\hat{b}_{v,k})=p_{root}=1$, since $\mathcal{H}_{root}\setminus \{root\}=\emptyset$. Thus (i) is proved.

To prove (ii), we only prove $\mathbb{E}_{\mathcal{A}}[A]=\mathbb{E}_{\mathcal{A}'}[A]$ since the other one is similar. Notice that $\mathbb{E}_{\mathcal{A}}[A]=\sum_{u\in \mathcal{I}}p_u\cdot \sum_{k=1}^{C}a_{u,k}$. Moreover, we have:
\begin{align*}
\mathbb{E}_{\mathcal{A}'}[A]&=\sum_{w\in \mathcal{B}}p'_w\cdot \mathbb{E}_{\mathcal{A}_w}[A]\\
&=\sum_{w\in \mathcal{B}}p'_w\cdot\left(\sum_{u\in\mathcal{H}_w}\left(\left(\sum_{k=1}^{C}a_{u,k}\right)\cdot \prod_{v\in \mathcal{H}_{u}\setminus \{u\}}\left(1-\sum_{k=1}^{C}\hat{a}_{v,k}\hat{b}_{v,k}\right)\right)\right)\\
&=\sum_{u\in \mathcal{I}}\left(\sum_{k=1}^{C}a_{u,k}\right)\cdot\left(\sum_{w:w\in \mathcal{B}\cap \mathcal{D}_u} p'_w \cdot \prod_{v\in \mathcal{H}_{u}\setminus \{u\}}\left(1-\sum_{k=1}^{C}\hat{a}_{v,k}\hat{b}_{v,k}\right)\right)\\
&=\sum_{u\in \mathcal{I}}\left(\sum_{k=1}^{C}a_{u,k}\right)\cdot p_u
\end{align*}
where the last equality is due to the claim mentioned when proving (i), thus (ii) is proved.

Lastly, to prove (iii), we start from the right-hand side:
\begin{align*}
& \quad \sum_{w\in \mathcal{B}}p'_w\cdot \prod_{v\in \mathcal{H}_w}\left(1-\sum_{k=1}^{C}\hat{a}_{v,k}\hat{b}_{v,k}\right)\\
&=\sum_{w\in \mathcal{B}} \left(\sum_{x\in \mathcal{F} \cap \mathcal{C}_w}p_x\middle/\left(\prod_{v\in \mathcal{H}_w}\left(1-\sum_{k=1}^{C}\hat{a}_{v,k}\hat{b}_{v,k}\right)\right)\right)\cdot \left(\prod_{v\in \mathcal{H}_w}\left(1-\sum_{k=1}^{C}\hat{a}_{v,k}\hat{b}_{v,k}\right)\right)\\
&=\sum_{x\in \mathcal{F}}p_x
\end{align*}

At this point, we have proved any multi-channel 1-to-1 communication algorithm $\mathcal{A}$ can be transformed into another algorithm $\mathcal{A}'$ that is a ``convex combination'' of a set $\Sigma$ of oblivious algorithms, without changing the success probability or the product of Alice's and Bob's expected cost. Assume $\lambda$ is the failure probability of $\mathcal{A}$, let $\Sigma_{2\lambda}$ be the subset of $\Sigma$ so that each oblivious algorithm in $\Sigma_{2\lambda}$ fails with probability at most $2\lambda$. By Markov's inequality, $\sum_{w\in \Sigma_{2\lambda}}p'_w\geq 1/2$ holds. Besides, $\mathbb{E}_{\mathcal{A}'}[A]=\sum_{w\in \Sigma}p'_w \cdot \mathbb{E}_{\mathcal{A}_w}[A]\geq \sum_{w\in \Sigma_{2\lambda}}p'_w \cdot \mathbb{E}_{\mathcal{A}_w}[A]$, and similarly $\mathbb{E}_{\mathcal{A}'}[B]\geq \sum_{w\in \Sigma_{2\lambda}}p'_w \cdot \mathbb{E}_{\mathcal{A}_w}[B]$. Let $\hat{w}= \mathop{\arg\min}_
{w\in \Sigma_{2\lambda}}\mathbb{E}_{\mathcal{A}_w}[A]\cdot \mathbb{E}_{\mathcal{A}_w}[B]$. We know:
\begin{align*}
\mathbb{E}_{\mathcal{A}}[A]\cdot \mathbb{E}_{\mathcal{A}}[B]& \geq  \left(\sum_{w\in \Sigma_{2\lambda}}p'_w \cdot \mathbb{E}_{\mathcal{A}_w}[A]\right)\left(\sum_{w\in \Sigma_{2\lambda}}p'_w \cdot \mathbb{E}_{\mathcal{A}_w}[B]\right)\\
& \geq \left(\sum_{w\in \Sigma_{2\lambda}}p'_w \cdot \mathbb{E}_{\mathcal{A}_{\hat{w}}}[A]\right)\left(\sum_{w\in \Sigma_{2\lambda}}p'_w \cdot \mathbb{E}_{\mathcal{A}_{\hat{w}}}[B]\right)\\
& \geq \Theta(1) \cdot \mathbb{E}_{\mathcal{A}_{\hat{w}}}[A]\cdot \mathbb{E}_{\mathcal{A}_{\hat{w}}}[B]
\end{align*}
where the second inequality is due to Claim \ref{claim-keymath-complex}.

Notice that $\mathcal{A}_{\hat{w}}$ is an oblivious algorithm that succeeds with at least constant probability. Therefore, in the reminder of the proof, we focus on an algorithm $\mathcal{A}$ that is oblivious and succeeds with at least constant probability. We will show if Eve uses strategy $\mathcal{S}$ then it must be the case that $\mathbb{E}[A]\cdot\mathbb{E}[B]\in\Omega(T)$.

For each slot $i\geq 1$ and each channel $k\in[C]$, let $a_{i,k}$ denote the probability that Alice broadcasts $m$ on channel $k$ in slot $i$, and $b_{i,k}$ denote the probability that Bob does not broadcast but listens on channel $k$ in slot $i$. By definition of $\mathcal{S}$, Eve jams channel $k$ in slot $i$ iff $a_{i,k}\cdot b_{i,k}>1/T$ and she has not depleted her energy.
For the ease of presentation, if in a slot $i$, Eve's remaining energy cannot jam all channels satisfying $a_{i,k}\cdot b_{i,k}>1/T$, then she is allowed to jam all such channels and then stops jamming for good.
Further define $J_{i,k}$ to be an indicator random variable taking value one iff $a_{i,k}\cdot b_{i,k}>1/T$, and let $J=\{(i,k):J_{i,k}=1\}$. We consider two complement scenarios: $|J|<T$ and $|J|\geq T$.

\bigskip\maltese\xspace\textbf{\textsc{Scenario I}:} $|J|<T$.

By definition, for each $(\hat{i},\hat{k})\in J$, Eve will jam channel $\hat{k}$ in slot $\hat{i}$ if she has not depleted her energy at that point. Since $\mathcal{A}$ is oblivious, setting $a_{\hat{i},\hat{k}}=b_{\hat{i},\hat{k}}=0$ for each $(\hat{i},\hat{k})\in J$ will not affect the outcome of any execution or increase $\mathbb{E}[A]\cdot\mathbb{E}[B]$. Thus, we can assume $a_{i,k}\cdot b_{i,k}\leq 1/T$ always holds in $\mathcal{A}$.

For every non-negative integer $q$, define $M_q=\{(i,k):a_{i,k}\cdot b_{i,k}\in(1/(2^{q+1}T),1/(2^qT)]\}$, and $M=\bigcup_{q\in\mathbb{N}}M_q$. Let $\alpha^{(i)}$ be the probability that Alice and Bob are still active in slot $i$, thus $\alpha^{(i)}=\prod_{j=1}^{i-1}(1-\sum_{k=1}^{C}a_{j,k}b_{j,k})\geq\exp(-2\cdot\sum_{j=1}^{i-1}\sum_{k=1}^{C}a_{j,k}b_{j,k})$. Since algorithm $\mathcal{A}$ succeeds with at least constant probability, there must exist a minimum positive integer $t$ such that $\sum_{j=1}^{t}\sum_{k=1}^{C}a_{j,k}b_{j,k}\geq d$ for some constant $d>0$. (Otherwise $1-\alpha_{i}$ would always be $o(1)$, contradicting $\mathcal{A}$ succeeds with at least constant probability.) We claim:
\begin{align}
\mathbb{E}[A]\cdot\mathbb{E}[B]\geq\left(\sum_{j=1}^{t}\sum_{k=1}^{C}a_{j,k}b_{j,k}\right)^2\cdot\Theta(T)\label{eqn-1}
\end{align}

To prove Equation \ref{eqn-1}, define $M_q^t=\{(i,k):(i,k)\in M_q\textrm{ and }i\leq t\}$ for every $q\in\mathbb{N}$, also notice that $\alpha^{(t)}\geq\exp(-2\cdot\sum_{j=1}^{t-1}\sum_{k=1}^{C}a_{j,k}b_{j,k})>e^{-2d}$, thus:
\begin{align*}
\mathbb{E}[A]\cdot\mathbb{E}[B] & \geq \left(\sum_{i\in[t],k\in[C]}\alpha^{(i)}\cdot a_{i,k}\right)\left(\sum_{i\in[t],k\in[C]}\alpha^{(i)}\cdot b_{i,k}\right)\\
& \geq \left(\alpha^{(t)}\right)^2 \cdot \sum_{q\in\mathbb{N}}\left(\sum_{(i,k)\in M_q^t}\sum_{(i',k')\in M_q^t} a_{i,k}\cdot b_{i',k'}\right)\\
& \geq \Theta(1) \cdot \sum_{q\in\mathbb{N}}\left(|M_q^t|^2\cdot\left(\prod_{(i,k)\in M_q^t}(a_{i,k}\cdot b_{i,k})^{|M_q^t|}\right)^{1/|M_q^t|^2}\right)\\
& \geq \Theta(1) \cdot \sum_{q\in\mathbb{N}}\left(|M_q^t|^2\cdot\left(\prod_{(i,k)\in M_q^t}\left(\frac{1}{2^{q+1}T}\right)^{|M_q^t|}\right)^{1/|M_q^t|^2}\right) = \Theta(1) \cdot \sum_{q\in\mathbb{N}}\frac{|M_q^t|^2}{2^{q+1}T}
\end{align*}
where the third inequality is due to the AM-GM inequality.

Now, observe that $\sum_{q\in\mathbb{N}}|M_q^t|^2/(2^{q+1}T)=\sum_{q\in\mathbb{N}}|M_q^t|\cdot(|M_q^t|/2^{q+1}T)\geq\sum_{q\in\mathbb{N}}(|M_q^t|/2\cdot\sum_{(i,k)\in M_q^t}a_{i,k}b_{i,k})$. To minimize $\sum_{q\in\mathbb{N}}(|M_q^t|/2\cdot\sum_{(i,k)\in M_q^t}a_{i,k}b_{i,k})$, we should let $M_q^t$ be an empty set for every $q\in\mathbb{N}^+$ and let each $(\hat{i},\hat{k})$ pair in $M_0^t$ be $a_{\hat{i},\hat{k}}\cdot b_{\hat{i},\hat{k}}=1/T$. (Intuitively, if we remove some $(i,k)$ pairs from $M_q^t$ such that $\sum_{(i,k)\in M_q^t}a_{i,k}b_{i,k}$ is decreased by $\delta$, and then add some $(i,k)$ pairs to $M_{q'}^t$ where $q'<q$ such that $\sum_{(i,k)\in M_{q'}^t}a_{i,k}b_{i,k}$ is increased by $\delta$, then the number of $(i,k)$ pairs we remove from $M_q^t$ must be larger than the number of $(i,k)$ pairs we add to $M_{q'}^t$. Thus, value of $\sum_{q\in\mathbb{N}}(|M_q^t|/2\cdot\sum_{(i,k)\in M_q^t}a_{i,k}b_{i,k})$ is decreased after this adjustment.) Therefore:
$$\sum_{q\in\mathbb{N}}\left(\frac{|M_q^t|}{2}\cdot\sum_{(i,k)\in M_q^t}a_{i,k}b_{i,k}\right)\geq\frac{1}{2}\cdot\frac{\sum_{j=1}^{t}\sum_{k=1}^{C}a_{j,k}b_{j,k}}{1/T}\cdot\left(\sum_{j=1}^{t}\sum_{k=1}^{C}a_{j,k}b_{j,k}\right)$$
which immediately leads to Equation \ref{eqn-1}.

Since Equation \ref{eqn-1} is true, it must be the case $\mathbb{E}[A]\cdot\mathbb{E}[B]\in\Omega(T)$, as $\sum_{j=1}^{t}\sum_{k=1}^{C}a_{j,k}b_{j,k}\in\Theta(1)$. This completes the proof for the first scenario.

\bigskip\maltese\xspace\textbf{\textsc{Scenario II}:} $|J|\geq T$.

This scenario is more complicated, and we begin with some notations. Define $h(i)\triangleq\sum_{k=1}^{C} \mathbb{I}[(i,k)\in J]$ (in particular $h(0)\triangleq 0$) and $sh(i)\triangleq \sum_{j=0}^{i} h(j)$. Let $l$ be the minimum integer satisfying $sh(l)\geq T$. Clearly, such an integer $l$ must exist as $|J|\geq T$.

We use the oblivious algorithm $\mathcal{A}$ to construct another oblivious algorithm $\mathcal{A}'$. Let $\hat{a}_{i,k}$ denote the probability that Alice broadcasts $m$ on channel $k$ in slot $i$ in $\mathcal{A}'$, and $\hat{b}_{i,k}$ denote the probability that Bob does not broadcast but listens on channel $k$ in slot $i$ in $\mathcal{A}'$. We copy the first $l$ slots of $\mathcal{A}$ as the first $l$ slots of $\mathcal{A}'$ and make the following adjustments: for any $1\leq i\leq l, k\in[C]$ in which ``$J_{i,k}=1$'', set $\hat{a}_{i,k}=\hat{b}_{i,k}=0$. We then make another copy of the first $l$ slots of $\mathcal{A}$, use them as the $(l+1)$-st to $2l$-th slots of $\mathcal{A}'$, and make the following adjustments: for any $l+1\leq i\leq 2l, k\in[C]$ in which ``$J_{i-l,k}=0$'', set $\hat{a}_{i,k}=\hat{b}_{i,k}=0$. For every $i>2l$, slot $i$ in $\mathcal{A}'$ is a copy of slot $i-l$ in $\mathcal{A}$.

We now prove $\mathbb{E}_\mathcal{A}[A]\geq\mathbb{E}_{\mathcal{A}'}[A]$. Let $\alpha^{(i)}_{\mathcal{A}}$ (resp., $\alpha^{(i)}_{\mathcal{A}'}$) denote the probability that Alice and Bob are still active in slot $i$ in $\mathcal{A}$ (resp., $\mathcal{A}'$). By construction, for every $1\leq i\leq l+1$, $\alpha^{(i)}_{\mathcal{A}}=\alpha^{(i)}_{\mathcal{A}'}$; and for every $l+1\leq i\leq 2l$, $\alpha^{(i)}_{\mathcal{A}'}=\alpha^{(i+1)}_{\mathcal{A}'}$. Thus $\alpha^{(l+1)}_{\mathcal{A}}=\alpha^{(2l+1)}_{\mathcal{A}'}$. Moreover, for every $i>2l$, slot $i$ in $\mathcal{A}'$ is identical to slot $i-l$ in $\mathcal{A}$. Hence, to compare $\mathbb{E}_\mathcal{A}[A]$ and $\mathbb{E}_{\mathcal{A}'}[A]$, it suffices to compare Alice's expected cost during the first $l$ slots in $\mathcal{A}$ and Alice's expected cost during the first $2l$ slots in $\mathcal{A}'$. Let $\mathbb{E}^{[1,l]}_\mathcal{A}[A]$ (resp., $\mathbb{E}^{[1,2l]}_{\mathcal{A}'}[A]$) denote Alice's expected cost in the first $l$ (resp., $2l$) slots in $\mathcal{A}$ (resp., $\mathcal{A}'$). We know:
\begin{align*}
\mathbb{E}^{[1,l]}_\mathcal{A}[A] & = \sum_{i=1}^{l}\sum_{k=1}^{C}\alpha_{\mathcal{A}}^{(i)}\cdot a_{i,k} = \left( \sum_{i=1}^{l}\sum_{(i,k)\notin J}\alpha_{\mathcal{A}}^{(i)}\cdot a_{i,k} \right) + \left( \sum_{i=1}^{l}\sum_{(i,k)\in J}\alpha_{\mathcal{A}}^{(i)}\cdot a_{i,k} \right) \\
& \geq \left( \sum_{i=1}^{l}\sum_{(i,k)\notin J}\alpha_{\mathcal{A}'}^{(i)}\cdot \hat{a}_{i,k} \right) + \left( \sum_{i=1}^{l}\sum_{(i,k)\in J}\alpha_{\mathcal{A}}^{(l+1)}\cdot a_{i,k} \right) \\
& = \left( \sum_{i=1}^{l}\sum_{(i,k)\notin J}\alpha_{\mathcal{A}'}^{(i)}\cdot \hat{a}_{i,k} \right) + \left( \sum_{i=1}^{l}\sum_{(i,k)\in J}\alpha_{\mathcal{A}'}^{(l+1)}\cdot a_{i,k} \right) \\
& = \left( \sum_{i=1}^{l}\sum_{(i,k)\notin J}\alpha_{\mathcal{A}'}^{(i)}\cdot \hat{a}_{i,k} \right) + \left( \sum_{i=l+1}^{2l}\sum_{(i-l,k)\in J}\alpha_{\mathcal{A}'}^{(i)}\cdot \hat{a}_{i,k} \right) = \mathbb{E}^{[1,2l]}_{\mathcal{A}'}[A]
\end{align*}
Thus $\mathbb{E}_\mathcal{A}[A]\geq\mathbb{E}_{\mathcal{A}'}[A]$, and proof of $\mathbb{E}_\mathcal{A}[B]\geq\mathbb{E}_{\mathcal{A}'}[B]$ is similar.

We now focus on proving $\mathbb{E}_{\mathcal{A}'}[A]\cdot\mathbb{E}_{\mathcal{A}'}[B]\in\Omega(T)$. To prove this, we first modify $\mathcal{A}'$ so that for every $1\leq i\leq l$ and $k\in[C]$, either $\hat{a}_{i,k}\cdot \hat{b}_{i,k}=1/T$ or $\hat{a}_{i,k}=\hat{b}_{i,k}=0$; and these modifications will not increase $\mathbb{E}_{\mathcal{A}'}[A]\cdot\mathbb{E}_{\mathcal{A}'}[B]$. More specifically, examine $\mathcal{A}'$ slot by slot, and let slot $\hat{i}$ be the first slot in which ``$\hat{a}_{\hat{i},k}\cdot \hat{b}_{\hat{i},k}=1/T$ or $\hat{a}_{\hat{i},k}=\hat{b}_{\hat{i},k}=0$'' does not hold for some $k\in[C]$. Now, let $P_A$ (respectively, $P_B$) be the expected cost of Alice (respectively, Bob) in the first $\hat{i}-1$ slots; let $\alpha$ be the probability that Alice and Bob are still active in slot $\hat{i}$; and let $S_A$ (respectively, $S_B$) be the expected cost of Alice (respectively, Bob) after slot $\hat{i}$ conditioned on the algorithm does not halt by the end of slot $\hat{i}$. We know $\mathbb{E}_{\mathcal{A}'}[A]\geq P_A+\alpha(\sum_{k=1}^{C}{\hat{a}_{\hat{i},k}}+S_A\cdot(1-\sum_{k=1}^{C}{\hat{a}_{\hat{i},k}\cdot \hat{b}_{\hat{i},k}}))=(P_A+\alpha S_A)+\alpha\cdot\sum_{k=1}^{C}{\hat{a}_{\hat{i},k}(1-\hat{b}_{\hat{i},k}\cdot S_A)}$. Similarly, $\mathbb{E}_{\mathcal{A}'}[B]\geq(P_B+\alpha S_B)+\alpha\cdot\sum_{k=1}^{C}{\hat{b}_{\hat{i},k}(1-\hat{a}_{\hat{i},k}\cdot S_B)}$. Now, if ``$\hat{a}_{\hat{i},k}\cdot\hat{b}_{\hat{i},k}=1/T$ or $\hat{a}_{\hat{i},k}=\hat{b}_{\hat{i},k}=0$'' does not hold for some $\hat{k}\in[C]$, then we can always decrease $\mathbb{E}_{\mathcal{A}'}[A]\cdot\mathbb{E}_{\mathcal{A}'}[B]$ via one of the following adjustments:
\begin{itemize}
	\item If $\hat{b}_{\hat{i},\hat{k}}\cdot S_A\geq 1$, then increasing the value of $\hat{a}_{\hat{i},\hat{k}}$ to $1/(T\cdot\hat{b}_{\hat{i},\hat{k}})$ decreases $\mathbb{E}_{\mathcal{A}'}[A]$. Besides, $\mathbb{E}_{\mathcal{A}'}[B]$ also decreases with the increase of $\hat{a}_{\hat{i},\hat{k}}$.\footnote{There is a technical subtlety worth clarifying. It might be the case that $\hat{a}_{\hat{i},\hat{k}}$ cannot reach $1/(T\cdot\hat{b}_{\hat{i},\hat{k}})$ since $\hat{a}_{\hat{i},\hat{k}}$ is a probability. Nonetheless, the value of $\mathbb{E}_{\mathcal{A}'}[A]\cdot\mathbb{E}_{\mathcal{A}'}[B]$ when $\hat{a}_{\hat{i},\hat{k}}=1/(T\cdot\hat{b}_{\hat{i},\hat{k}})$ always provides a lower bound for the actual $\mathbb{E}_{\mathcal{A}'}[A]\cdot\mathbb{E}_{\mathcal{A}'}[B]$.}
	\item If $\hat{a}_{\hat{i},\hat{k}}\cdot S_B\geq 1$, then similar to above, increasing the value of $\hat{b}_{\hat{i},\hat{k}}$ to $1/(T\cdot\hat{a}_{\hat{i},\hat{k}})$ will decrease both $\mathbb{E}_{\mathcal{A}'}[A]$ and $\mathbb{E}_{\mathcal{A}'}[B]$.
	\item If $\hat{b}_{\hat{i},\hat{k}}\cdot S_A<1$ and $\hat{a}_{\hat{i},\hat{k}}\cdot S_B<1$, then setting both $\hat{a}_{\hat{i},\hat{k}}$ and $\hat{b}_{\hat{i},\hat{k}}$ to zero will decrease both $\mathbb{E}_{\mathcal{A}'}[A]$ and $\mathbb{E}_{\mathcal{A}'}[B]$.
\end{itemize}

In the reminder of the proof, assume for every $1\leq i\leq l$ and $k\in[C]$, either $\hat{a}_{i,k}\cdot \hat{b}_{i,k}=1/T$ or $\hat{a}_{i,k}=\hat{b}_{i,k}=0$. We continue to prove $\mathbb{E}_{\mathcal{A}'}[A]\cdot\mathbb{E}_{\mathcal{A}'}[B]\in\Omega(T)$.

Let $\tilde{\mathbb{E}}^{[j,j']}_{\mathcal{A}'}[A]$ (resp., $\tilde{\mathbb{E}}^{[j,j']}_{\mathcal{A}'}[B]$) denote Alice's (resp., Bob's) expected cost in slots $j$ to $j'$ in $\mathcal{A}'$, conditioned on Alice and Bob are still active at the beginning of slot $j$ in $\mathcal{A}'$. Notice, in $\mathcal{A}'$, if Alice and Bob are still active at the beginning of slot $l+1$, then by construction they cannot succeed in slots $l+1$ to $2l$. Let $J^l=\{(i,k)\in J:i\leq l\}$, we have:
\begin{align*}
\tilde{\mathbb{E}}^{[l+1,2l]}_{\mathcal{A}'}[A]\cdot\tilde{\mathbb{E}}^{[l+1,2l]}_{\mathcal{A}'}[B] & \geq \left( \sum_{(i,k)\in J^l}a_{i,k} \right) \left( \sum_{(i,k)\in J^l}b_{i,k} \right) = \sum_{(i,k)\in J^l}\sum_{(i',k')\in J^l}a_{i,k}\cdot b_{i',k'} \\
& \geq \left|J^l\right|^2 \cdot \left(\prod_{(i,k)\in J^l}(a_{i,k}\cdot b_{i,k})^{|J^l|}\right)^{1/|J^l|^2} \\
& > T^2\cdot(1/T)=T
\end{align*}
where the second to last inequality is due to the AM-GM inequality.

In general, for any $1\leq j\leq l$, if we assume $\tilde{\mathbb{E}}^{[j+1,2l]}_{\mathcal{A}'}[A]\cdot\tilde{\mathbb{E}}^{[j+1,2l]}_{\mathcal{A}'}[B]>T$, then we also have $\tilde{\mathbb{E}}^{[j,2l]}_{\mathcal{A}'}[A]\cdot\tilde{\mathbb{E}}^{[j,2l]}_{\mathcal{A}'}[B]>T$. To see this, for every $1\leq j\leq l$, define $M_1^j=\{(j,k):\hat{a}_{j,k}\cdot\hat{b}_{j,k}=1/T\}$, and let $p_j=1-|M_1^j|/T$, then we know:

\begin{small}\begin{align*}
\tilde{\mathbb{E}}^{[j,2l]}_{\mathcal{A}'}[A]\cdot\tilde{\mathbb{E}}^{[j,2l]}_{\mathcal{A}'}[B] & \geq
\left(\left(\sum_{(j,k)\in M_1^j}\hat{a}_{j,k}\right)+p_j\cdot\tilde{\mathbb{E}}^{[j+1,2l]}_{\mathcal{A}'}[A]\right) \left(\left(\sum_{(j,k)\in M_1^j}\hat{b}_{j,k}\right)+p_j\cdot\tilde{\mathbb{E}}^{[j+1,2l]}_{\mathcal{A}'}[B]\right)\\
& > p_j^2\cdot T + \left(\sum_{(j,k)\in M_1^j}\hat{a}_{j,k}\right)\left(\sum_{(j,k)\in M_1^j}\hat{b}_{j,k}\right)\\
& \phantom{>} + p_j\cdot\left(\left(\sum_{(j,k)\in M_1^j}\hat{b}_{j,k}\right)\tilde{\mathbb{E}}^{[j+1,2l]}_{\mathcal{A}'}[A] + \left(\sum_{(j,k)\in M_1^j}\hat{a}_{j,k}\right)\tilde{\mathbb{E}}^{[j+1,2l]}_{\mathcal{A}'}[B]\right) \\
& \geq p_j^2\cdot T+|M_1^j|^2/T\\
& \phantom{\geq} + p_j \cdot (2|M_1^j|) \cdot \left(\left(\prod_{(j,k)\in M_1^j}\hat{a}_{j,k}\cdot\hat{b}_{j,k}\right) \cdot \left(\tilde{\mathbb{E}}^{[j+1,2l]}_{\mathcal{A}'}[A]\cdot\tilde{\mathbb{E}}^{[j+1,2l]}_{\mathcal{A}'}[B]\right)^{|M_1^j|}\right)^{1/(2|M_1^j|)}\\
& > T\left(p_j^2+|M_1^j|^2/\left(T^2\right)+2p_j\cdot |M_1^j|/T\right)\\
& = T\cdot(p_j+|M_1^j|/T)^2=T
\end{align*}\end{small}
where the second to last inequality is due to the AM-GM inequality.

Finally, via induction we can conclude $\tilde{\mathbb{E}}^{[1,2l]}_{\mathcal{A}'}[A]\cdot\tilde{\mathbb{E}}^{[1,2l]}_{\mathcal{A}'}[B]>T$. Since $\mathbb{E}_{\mathcal{A}'}[A]\cdot\mathbb{E}_{\mathcal{A}'}[B] \geq \tilde{\mathbb{E}}^{[1,2l]}_{\mathcal{A}'}[A]\cdot\tilde{\mathbb{E}}^{[1,2l]}_{\mathcal{A}'}[B]$, the proof for the second scenario is completed.
\end{proof}

\end{document}